\newtheorem{thm}{Theorem}[section]
\newtheorem{lem}[thm]{Lemma}
\newtheorem{prop}[thm]{Proposition}
\newtheorem*{prob*}{Problem}
\newtheorem*{thm*}{Theorem}
\theoremstyle{definition}
\newtheorem{defn}[thm]{Definition}
\newtheorem{example}[thm]{Example}
\newtheorem*{defn*}{Definition}
\newtheorem{remark}[thm]{Remark}
\numberwithin{equation}{section}
\newcommand{\C}{\mathbb C}
\newcommand{\R}{\mathbb R}
\newcommand{\Y}{\mathbb Y}
\newcommand{\Z}{\mathbb Z}
\newcommand{\X}{\mathfrak{X}}
\DeclareMathOperator{\Volume}{Volume}
\DeclareMathOperator{\Schur}{Schur}
\DeclareMathOperator{\const}{const}
\DeclareMathOperator{\Prob}{Prob}
\newcommand{\E}{\mathbb E}
\begin{document}
\title[Products of random matrices as limits of random plane partitions]
{\bf{Product matrix processes as limits of random plane partitions}}
\author{Alexei Borodin}
\address{Department of Mathematics, Massachusetts Institute of Technology, USA, and Institute for Information
Transmission Problems of Russian Academy of Sciences, Russia}\email{borodin@math.mit.edu}
\author{Vadim Gorin}
\address{Department of Mathematics, Massachusetts Institute of Technology, USA, and Institute for Information
Transmission Problems of Russian Academy of Sciences, Russia}\email{vadicgor@gmail.com}
\author{Eugene Strahov}
\address{Department of Mathematics, The Hebrew University of
Jerusalem, Givat Ram, Jerusalem
91904, Israel}\email{strahov@math.huji.ac.il}

\keywords{Products of  random matrices, multi-level determinantal point processes,
Schur processes, random plane partitions}

\commby{}
\begin{abstract}
We consider a random process with discrete time formed by singular values of products of
truncations of Haar distributed unitary matrices. We show  that this process can be understood as
a scaling limit of the Schur process, which gives  determinantal formulas for (dynamical)
correlation functions and a contour integral representation for the correlation kernel. The
relation with the Schur processes implies that the continuous limit of marginals for
q-distributed plane partitions coincides with the joint law of singular values for products of
truncations of Haar-distributed random unitary matrices. We provide structural reasons for this
coincidence that may also extend to other classes of random matrices.
\end{abstract}
\maketitle

\section{Introduction}
It was observed by many researchers that probability distributions from random matrix theory appear
as limit laws in a variety of problems in statistical mechanics and combinatorics. Probably, the
most known examples of this phenomenon are Ulam's problem for increasing subsequences of random
permutations and domino tilings of the Aztec diamond. We refer the reader to the book by Baik,
Deift, and Suidan \cite{BaikDeiftSuidan} and references therein for a detailed analysis of these
two examples.

In both these problems, distributions of certain key quantities converge to \textit{limiting}
distributions of random matrix theory (in an appropriate scaling limit as the size of random matrices
tends to infinity), in particular to the Tracy-Widom distribution \cite{TW}. However, there are
situations where not only limiting random matrix distributions play a role. It can happen that
joint laws of eigenvalues (or singular values) corresponding to \emph{finite size} random matrix
ensembles arise as scaling limits in a combinatorial or statistical mechanics problem that has no
\emph{a priori} relation
 with random matrices.

 One example is the GUE-corners process investigated by Johansson, Nordenstam \cite{JohanssonNordenstam}
 and Okounkov, Reshetikhin \cite{OR-birth}.
It is easy to define the corners process: start with an infinite random matrix picked from the
Gaussian Unitary Ensemble (GUE) and consider the eigenvalues of its principal corner submatrices.
 As was discovered in
\cite{JohanssonNordenstam}, \cite{OR-birth},  the GUE-corners process can be obtained as a scaling
limit in tiling models. Okounkov and Reshetikhin further suggested a (heuristic) argument towards
the universal appearance of this process. Following this prediction, GUE--corners process was found
in more general tilings models by Gorin, Panova \cite{GP} and the six--vertex model by Gorin
\cite{Gorin_ASM} and Dimitrov \cite{Dimitrov_six}. It can be also linked to the last passage
percolation, see Baryshnikov \cite{Baryshnkiov}, Gravner, Tracy, Widom \cite{GTW}, O'Connell, Yor
\cite{OCY}, Bougerol, Jeuli \cite{BJ}, Adler, van Moerbeke, Wang \cite{AdlerMoerbekeWang}, and
references therein.

\bigskip

The GUE--corners process is  a \emph{determinantal} (see, e.g., Borodin \cite{B-det}) point process
with discrete time obtained using random matrices. If, instead of cutting out corners of a single
matrix, one starts \emph{adding} independent GUE matrices, then the eigenvalues of the sums also
form a determinantal process,
 and the number of matrices in the sum plays the role of  discrete time, see Eynard, Mehta \cite{EynardMehta}.
   Another class of (dynamical) determinantal processes with  discrete time  can be constructed
   from \textit{products} of random matrices, see Strahov \cite{StrahovD},
Akemann and Strahov \cite{AkemannStrahovModels}. (Determinantal processes in products of random
matrices were first discovered in Akemann and Burda \cite{Akemann1}). Such processes are called \emph{product matrix processes}, and they are formed by the squared
singular values of random matrix products. We can use independent complex Gaussian matrices to
obtain a simple example of such processes. Namely, let $G_1$, $\ldots$, $G_m$ be independent
matrices with standard i.i.d.\ complex Gaussian entries. Assume that $G_l$ is of size
$\left(n+\nu_l\right)\times\left(n+\nu_{l-1}\right)$, $\nu_0=0$, $\nu_1\geq 0$, $\ldots$,
$\nu_{m-1}\geq 0$, and for each $l=1,\ldots,m,$ denote by $y_j^l$, $j=1,\ldots,n$, the squared
singular values of the partial product $Y_l=G_l\cdots G_1$. The configuration
$$
\left\{\left(l,y_j^l\right)\vert l=1,\ldots,m; j=1,\ldots,n\right\}
$$
of all these squared singular values generates a random point process on $\left\{1,\ldots,m\right\}\times
\R_{>0}$. It was shown in Strahov \cite{StrahovD} that this process is determinantal (and it can be
viewed as a determinantal process with  discrete time). Paper \cite{StrahovD} gives a contour
integral representation for the correlation kernel, together with its hard edge scaling limit, and
generalizes results obtained in Akemann, Kieburg, and Wei \cite{AkemannKieburgWei}, Akemann, Ipsen,
and Kieburg \cite{AkemannIpsenKieburg}, Kuijlaars and Zhang \cite{KuijlaarsZhang} to the multi-level
situation. A more general class of product matrix processes related to certain multi-matrix models
was introduced and studied in Akemann and Strahov \cite{AkemannStrahovModels}. In this class  the
matrices in the products are no longer independent, but in spite of that the product matrix
processes are still determinantal.

From a different viewpoint, various matrix corners processes studied by Johansson, Nordenstam
\cite{JohanssonNordenstam}, Okounkov, Reshetikhin \cite{OR-birth}, Adler, van Moerbeke, Wang
\cite{AdlerMoerbekeWang}, Forrester, Rains \cite{ForresterRains}, Borodin, Gorin \cite{BG_GFF} were
shown to be continuous limits of special \textit{Schur processes} of Okounkov and Reshetikhin
\cite{Ok-wedge}, \cite{OkounkovReshetikhin}. The discrete time determinantal process formed by the
eigenvalues of sums of independent GUE matrices can be understood as a limit of a special Schur
process as well. It is natural to ask whether product matrix processes also have this property.
Motivated by this question, we construct in this paper a product matrix process using corners of
independent Haar distributed unitary matrices (or truncated unitary matrices). We demonstrate that
this process is a scaling limit of a certain Schur process, which implies determinantal formulas
for (dynamical) correlation functions. Moreover, starting  from the general Okounkov-Reshetikhin
formula  \cite {OkounkovReshetikhin} for the correlation kernels of Schur processes, we derive a
double contour integral representation for the correlation kernel of the product matrix process
formed by truncated unitary matrices. The formula for the correlation kernel we derive in this
paper can be understood as a time-dependent generalization of the result obtained in Kieburg,
Kuijlaars, and Stivigny \cite{KieburgKuijlaarsStivigny} for the squared singular values of matrix
products with truncated unitary matrices.

The fact that the product matrix process formed by truncated unitary matrices is a continuous limit
of the Schur process enables us to prove  Theorem \ref{MainTheorem} below, that says that the
continuous limit of marginals for $q$-distributed (skew) plane partitions coincides with the joint law of
squared singular values for products of corners of Haar-distributed unitary matrices, see Figure \ref{Fig_Plane_partition} for one particular case of the theorem. We consider
Theorem \ref{MainTheorem} as the main result of the present paper.  It demonstrates that, similarly
to the corners process, the time-dependent determinantal processes constructed from products of
truncated unitary matrices appear as  scaling limits in a model of statistical mechanics of a
combinatorial nature. To the best our knowledge, the present paper is the first work relating
products of random matrices with scaling limits of models that have no \emph{a priori} relation to
random matrices.

\begin{figure}[t]
\begin{center}
 {\scalebox{1.0}{\includegraphics{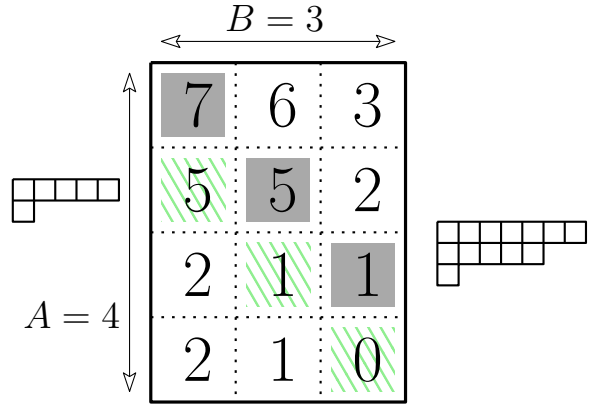}}}
\end{center}
 \caption{Plane partition with $4\times 3$ support corresponds to a sequence of (interlacing) Young diagrams. Two of them,  $5\ge 1\ge 0$ and $7\ge 5\ge 1$, are highlighted. The asymptotic behavior of these two diagrams is related to squared singular values of $T_1$ and $T_2 T_1$, respectively, where $T_1$ is $4\times 3$ truncation of $7\times 7$ random unitary matrix and $T_2$ is $3\times 4$ truncation of $4\times 4$ random unitary matrix. In the notations of Section \ref{Section_skew_PP}, $A=4$, $B=3$,  $\pi=\emptyset$, $p=2$, $\beta_2=1$, $\alpha_1=4$, $\alpha_2=5$.
 \label{Fig_Plane_partition}}
\end{figure}

\medskip

It is also natural to ask about conceptual reasons for such a coincidence. Why should random matrices be
directly related to  statistical mechanics models? For the GUE--corners process, Okounkov and
Reshetikhin \cite{OR-birth} suggested the following heuristic argument: If we start with a discrete
model of statistical mechanics satisfying a certain Gibbs (i.e., conditional uniformity) property,
then one expects the same property to survive in the continuous limit. Olshanski and Vershik
\cite{Olsh_Versh} classified all such Gibbs measures on triangular arrays of reals, and out of them
only the GUE--corners process agrees with the growth conditions implied by the Law of Large Numbers
(limit shape behavior) of the discrete model.

For the products of random matrices we do not dispose of an analogue of the Okounkov--Reshetikhin
argument. However, let us explain the path that led us to the understanding that a similar
connection with combinatorial statistical mechanics is possible. The following fact
\cite{ForresterRains}, \cite{BG_GFF} is easy to prove by comparing the explicit formulas for the
distributions: If $X$ is a corner of Haar-random unitary matrix, then the eigenvalues of $X X^*$
(this is often called Jacobi or MANOVA ensemble) are distributed as a continuous limit of a
\emph{Schur measure} with two principal specializations. One could argue that this is an instance
of the semiclassical asymptotics common in representation theory.  Next, we need to understand
what happens with these eigenvalues when $X$ is multiplied by another similar matrix. It is known
that multiplication of (real/complex/quaternion at $\beta=1,2,4$) matrices is intimately linked to
multiplication of corresponding Jack (=zonal) polynomials, which become Schur polynomials in the
case of the complex field ($\beta=2$) that we discuss here. This is discussed by Macdonald
\cite[Chapter VII]{Macdonald}, Forrester \cite[Section 13.4.3]{For}, and more recently used, e.g.,
by Kieburg, Kosters \cite{KK} and by Gorin, Marcus \cite{GorinMarcus}. If we consider a
 version of the multidimensional Fourier transform for the Schur measures (the appropriate version
was introduced by Gorin, Panova \cite{GP}, Bufetov, Gorin \cite{BG_Schur} under the name
\emph{Schur generating functions}), then being a Schur measure or its continuous limit is
equivalent to the factorization of this transform into a product of  one variable functions. Since
such factorization is preserved under multiplication, the squared singular values of products of
random matrices have to be described by the Schur processes.

We detail how this argument works in the simplest case of $2\times 2$ matrices in Section
\ref{Section_spherical}. For the proof of our main statements, Proposition
\ref{PropositionGeneralLimit} and Theorem \ref{MainTheorem}, we choose in Sections
\ref{Section2}-\ref{Section7} another path, which is more direct (and leads to a more general
result) but, perhaps, more mysterious. Let us remark that while the arguments of Section
\ref{Section_spherical} admit an immediate generalization to products of real and quaternion
matrices, yielding their representation as limits of \emph{Macdonald processes}, for the proofs of
Sections \ref{Section2}-\ref{Section7} such a generalization is unclear.

\bigskip

The paper is organized as follows. In Section \ref{Section2} we introduce notation and present the
main results. In particular, Proposition \ref{THEOREMCorrelationKernel} gives a formula for the
correlation kernel of the product matrix process associated with truncated unitary matrices,
Proposition \ref{PropositionGeneralLimit} shows that this product matrix process can be understood
as a continuous limit of a special Schur process, and Theorem \ref{MainTheorem} presents our result
on convergence of marginals of $q$-distributed plane partitions to this product matrix process.
Sections \ref{Section2}-\ref{Section7} contain the proofs of our statements. In Section
\ref{Section_spherical} we sketch another way to prove our main results by exploiting symmetric
functions and zonal polynomials. Finally, Appendix gives a second proof of  Proposition
\ref{THEOREMCorrelationKernel} based on the Eynard-Mehta theorem.

\medskip

 \textbf{Acknowledgements}. We are
very grateful to Mario Kieburg and Leonid Petrov for discussions. All three authors were partially
supported by the NSF grant DMS-1664619. A.B.~was partially supported by the NSF grant  DMS-1607901.
V.G.~was partially supported by the by NEC Corporation Fund for Research in Computers and
Communications and by the Sloan Research Fellowship.

\section{Notation and statement of results}\label{Section2}

\subsection{Product matrix processes with truncated unitary matrices}
\label{SectionTruncatedProcess}
Let $G_1$, $\ldots$, $G_p$ be  matrices with random complex entries, and assume that each matrix $G_k$, $k\in\left\{1,\ldots,p\right\}$,
is of size $N_{k}\times N_{k-1}$. Set
$$
X(k)=G_k\ldots G_1,\;\;\; k\in\left\{1,\ldots,p\right\}.
$$
If $n=N_0$, then for each $k$, $k\in\left\{1,\ldots,p\right\}$, $X^*(k)X(k)$ are random matrices of the same size $n\times n$.
Denote by $x_j^k$ the $j$th largest eigenvalue of $X^*(k)X(k)$.
The configuration of all these eigenvalues,
\begin{equation}\label{Configuration}
\left\{\left(k,x_j^k\right)\biggl\vert k=1,\ldots,p; j=1,\ldots,n\right\},
\end{equation}
forms a point process on $\left\{1,\ldots,p\right\}\times\R_{>0}$. This point process is called
the \textit{product matrix process} associated with the random matrices $X(1)$, $X(2)$, $\ldots$, $X(p)$.

Here we consider a product matrix process constructed from a collection of truncated unitary matrices. Namely, let $U_1$, $\ldots$,
$U_l$, $U_{l+1}$,$\ldots$, $U_{l+p-1}$ be independent Haar distributed unitary matrices. We assume that the size of each matrix $U_j$, $1\leq j\leq p+l-1$, is equal to $m_j$.
Recall that if $U$ is a $m\times m$ matrix, and the integers $k$, $r$ are chosen such that $1\leq k, r\leq m$,  the submatrix $T$ of $U$ defined by
$$
T=\left(\begin{array}{ccc}
          U_{1,1} & \ldots & U_{1,r} \\
          \vdots &  &  \\
          U_{k,1} & \ldots & U_{k,r}
        \end{array}
\right)
$$
is called a $k\times r$ truncation of $U$. Now, for $1\leq j\leq p+l-1$ let $T_j$ be the truncation of $U_j$ of size $\left(n+\nu_j\right)\times \left(n+\nu_{j-1}\right)$.
We agree that $\nu_0=0$, and assume that the positive integers $n$, $\nu_1$, $\ldots$, $\nu_{l+p-1}$ are chosen in such a way that the conditions
\begin{equation}\label{Condition1}
m_1\geq 2n+\nu_1,
\end{equation}
and
\begin{equation}\label{Condition2}
m_j\geq n+\nu_j+1,\;\;\; 2\leq j\leq p+l-1,
\end{equation}
are satisfied. Denote by $x^1=\left(x_1^1,\ldots,x_n^1\right)$ the vector of the squared singular values of the product matrix $T_l\ldots T_1$,
and for $2\leq j\leq p$ denote by $x^j=\left(x^j_1,\ldots,x^j_n\right)$ the vector of the squared singular values of the product matrix
$T_{j+l-1}\ldots T_1$. Configurations $\left\{\left(k,x_j^k\right)\biggl\vert k=1,\ldots,p; j=1,\ldots,n\right\}$ form a point process on
$\left\{1,\ldots,p\right\}\times\R_{>0}$. We will refer to this point process as to the \textit{product matrix process associated with truncated unitary matrices}.
We say that the product matrix $T_l\ldots T_1$ determines the \textit{initial conditions} of the product matrix process associated with truncated unitary matrices.
The numbers $n$, $m_j$, $\nu_j$ will be called the \textit{parameters} of the product matrix process associated with truncated unitary matrices.
\begin{prop}\label{PropositionProductTruncatedProcess}
Consider the product matrix process associated with truncated unitary matrices, and let
$x_1^k\leq\ldots\leq x_n^k$; $k=1,\ldots,p,$ denote the set of the squared singular values of the product matrix
$T_{k+l-1}\ldots T_1$.
The joint probability distribution of $\left(x_1^k,\ldots,x_n^k\right)$ is given by
\begin{equation}\label{TruncatedProcessJD}
\begin{split}
&\frac{1}{Z_{n,p+l}}
\triangle\left(x^p\right)\\
&\times\prod\limits_{r=1}^{p-1}\det\left[\left(x_j^{r+1}\right)^{\nu_{l+r}}\left(x_k^{r}-x_j^{r+1}\right)_+^{m_{l+r}-n-\nu_{l+r}-1}\left(x_k^{r}\right)^{n-m_{l+r}}\right]_{k,j=1}^n\\
&\times\det\left[w_k^{(l)}\left(x_j^1\right)\right]_{k,j=1}^n
dx^1\ldots dx^n,
\end{split}
\end{equation}
where $(x-y)_+=\max\left(0,x-y\right)$, the Vandermonde determinant $\triangle\left(x^p\right)$ is defined by
$\triangle\left(x^p\right)=\prod\limits_{1\leq i<j\leq n}\left(x_j^p-x_i^p\right)$,
for $1\leq l\leq p$ we write $dx^l=dx^l_1\ldots dx^l_n$, $Z_{n,p+l}$ is a normalization constant, and
$w_k^{(l)}(x)$ is a sequence of weight functions.
The normalization constant $Z_{n,p+l}$ can be written explicitly as
\begin{equation}
Z_{n,p+l}=\frac{\prod\limits_{j=1}^n\Gamma\left(m_1-2n-\nu_1+j\right)\Gamma(j)
\prod\limits_{k=2}^{p+l-1}\left(\Gamma\left(m_k-n-\nu_k\right)\right)^n}{\prod\limits_{k=1}^{p+l-1}\prod\limits_{j_k=1}^{m_k-n-\nu_k}
\left(j_k+\nu_k\right)_n}.
\end{equation}
Here $(a)_m=a(a+1)\ldots (a+m-1)$ stands for the Pochhammer symbol.
The function $w_k^{(l)}(x)$ can be expressed as a Meijer $G$-function,
\begin{equation}\label{TheWeightFunctions}
\begin{split}
&w_k^{(l)}(x)=c_lG_{l,l}^{l,0}\left(\begin{array}{cccc}
                                     m_l-n, & \ldots, & m_2-n, & m_1-2n+k \\
                                     \nu_l, & \ldots, & \nu_2, & \nu_1+k-1
                                   \end{array}
\biggl|x\right)\\
&=\frac{c_l}{2\pi i}\int\limits_C\frac{\Gamma\left(\nu_1+k-1+s\right)\prod_{j=2}^l\Gamma\left(\nu_j+s\right)}{\Gamma\left(m_1-2n+k+s\right)\prod_{j=2}^l
\Gamma\left(m_j-n+s\right)}x^{-s}ds,\;\;\; 0<x<1.
\end{split}
\end{equation}
In this formula $C$ denotes a positively oriented contour in the complex $s$-plane that starts and ends at $-\infty$ and encircles
the negative real axis. The constant $c_l$ in the formula for $w_k^{(l)}(x)$ can be written as
\begin{equation}\label{cl}
c_l=\Gamma\left(m_1-2n-\nu_1+1\right)\prod\limits_{j=2}^l\Gamma\left(m_j-n-\nu_j\right).
\end{equation}
\end{prop}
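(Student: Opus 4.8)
The plan is to derive \eqref{TruncatedProcessJD} by induction on the number of factors, adjoining one truncated unitary matrix at a time and keeping track of how the law of the squared singular values transforms. Two ingredients are needed: the distribution of the squared singular values of a \emph{single} truncation, and a transformation rule describing the effect of multiplying by one further independent truncated unitary, both phrased in terms of the \emph{polynomial ensemble} structure, i.e.\ densities of the form $\mathrm{const}\cdot\triangle(x)\det[f_j(x_k)]_{j,k=1}^n$. Because of \eqref{Condition1} the $(n+\nu_1)\times n$ truncation $T_1$ is ``small'' inside the $m_1\times m_1$ unitary group, so by the classical result of \.{Z}yczkowski and Sommers on truncated Haar unitary matrices the squared singular values $x^1$ of $T_1$ have joint density on $(0,1)^n$ proportional to $\triangle(x^1)^2\prod_{j=1}^n(x_j^1)^{\nu_1}(1-x_j^1)^{m_1-2n-\nu_1}$; writing $\triangle(x^1)^2=\triangle(x^1)\det[(x^1_k)^{j-1}]_{j,k=1}^n$ casts this in polynomial ensemble form with $f_j(x)=x^{j-1+\nu_1}(1-x)^{m_1-2n-\nu_1}$. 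Moreover, the fact that the squared singular values of $T_l\cdots T_1$ alone already form the polynomial ensemble with the Meijer $G$-weights $w^{(l)}_k$ of \eqref{TheWeightFunctions} is the result of Kieburg, Kuijlaars and Stivigny \cite{KieburgKuijlaarsStivigny}, which one may either invoke directly or reprove by iterating the multiplication step below $l-1$ times.

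The heart of the matter is the following lemma (a transformation rule of the type studied by Kuijlaars--Stivigny and Kieburg--K\"{o}sters): if an $(n+a)\times n$ random matrix $X$ has squared singular values distributed with a symmetric polynomial ensemble density $\mathrm{const}\cdot\triangle(x)\det[f_j(x_k)]_{j,k=1}^n$, and $T$ is an independent $(n+b)\times(n+a)$ truncation of an $m\times m$ Haar unitary with $m\geq n+b+1$, then the pair $(x,y)$ formed by the squared singular values of $X$ and of $TX$ has joint density proportional to
\[
\triangle(y)\,\det\!\left[\,y_j^{\,b}(x_k-y_j)_+^{\,m-n-b-1}x_k^{\,n-m}\,\right]_{k,j=1}^n\det[f_j(x_k)]_{j,k=1}^n .
\]
Indeed, integrating out $x$ by the Andr\'{e}ief (Gram) identity shows that $TX$ carries a polynomial ensemble with $f_j$ replaced by $g_j(y)=\mathrm{const}\cdot\int_0^\infty f_j(x)\,y^{b}(x-y)_+^{m-n-b-1}x^{n-m}\,dx$. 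I would prove this lemma by conditioning on the singular value decomposition of $X$ and using that $T$ times a fixed matrix still retains enough left/right invariance to reduce the question to the joint law of $X^*X$ and $(TX)^*(TX)$, which one then computes explicitly from the density of a truncated unitary matrix; the precise constant is read off at the end. Alternatively, one cites the rule directly from the literature.

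Now apply this lemma with parameters $(a,b,m)=(\nu_{l+r-1},\nu_{l+r},m_{l+r})$ for $r=1,\dots,p-1$, each step legitimate thanks to \eqref{Condition2}, but this time keeping \emph{all} of the intermediate squared singular values $x^1,\dots,x^{p-1}$ rather than integrating them out. One obtains the joint density of $(x^1,\dots,x^p)$ proportional to
\[
\triangle(x^1)\det[w^{(l)}_k(x^1_j)]_{k,j=1}^n\prod_{r=1}^{p-1}\frac{\triangle(x^{r+1})}{\triangle(x^r)}\det\!\left[(x_j^{r+1})^{\nu_{l+r}}(x_k^{r}-x_j^{r+1})_+^{m_{l+r}-n-\nu_{l+r}-1}(x_k^{r})^{n-m_{l+r}}\right]_{k,j=1}^n ,
\]
where the telescoping product of Vandermonde factors collapses to $\triangle(x^p)$; this is precisely \eqref{TruncatedProcessJD}. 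To identify $w^{(l)}_k$ with the Meijer $G$-function of \eqref{TheWeightFunctions}, pass to Mellin transforms: multiplication by $T_j$ multiplies the Mellin transform of the weight by $\Gamma(m_j-n-\nu_j)\,\Gamma(s+\nu_j)/\Gamma(s+m_j-n)$ (a Beta integral), so iterating over $j=2,\dots,l$ starting from the Mellin transform of $x\mapsto x^{j-1+\nu_1}(1-x)^{m_1-2n-\nu_1}$ reproduces exactly the Mellin--Barnes integrand in \eqref{TheWeightFunctions}, the accumulated prefactor being the constant $c_l$ of \eqref{cl}. Finally, $Z_{n,p+l}$ follows by integrating \eqref{TruncatedProcessJD} over all variables: substituting the Mellin--Barnes representation of $w^{(l)}_k$ and applying the Andr\'{e}ief identity level by level reduces the computation to a product of Beta integrals together with one Selberg-type integral, which yields the stated ratio of Gamma functions and Pochhammer symbols.

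The main obstacle is the multiplication lemma. Unlike a full Haar unitary matrix, a truncation is not bi-unitarily invariant, so some care is needed to show that, conditionally on $X$, the product $TX$ still has the invariance that allows its law to be captured by its singular values alone, and then to pin down the kernel $y^{b}(x-y)_+^{m-n-b-1}x^{n-m}$ together with its exact constant. Once that rule is available, the remaining steps --- the \.{Z}yczkowski--Sommers base case, the Mellin identification of the weights, the Vandermonde telescoping, and the normalization integral --- are routine bookkeeping.
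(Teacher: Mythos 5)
Your proposal is correct and follows essentially the same route as the paper: the paper likewise takes the polynomial-ensemble law of $T_l\cdots T_1$ together with the one-step transformation rule from Kieburg--Kuijlaars--Stivigny (their Corollary 2.6 and Theorem 2.1, the latter stated for a fixed matrix $X$, which suffices after conditioning), iterates that rule while keeping all intermediate singular values so that the Vandermonde factors telescope to $\triangle(x^p)$, and then computes $Z_{n,p+l}$ via the Andr\'eief identity, the Mellin transform of the Meijer $G$-function, and a known evaluation of the determinant with Gamma-function entries. The only difference is presentational: you propose to reprove the multiplication lemma, whereas the paper simply cites it.
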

\begin{remark} The  Meijer $G$-function in equation (\ref{TheWeightFunctions}) is equal to
zero for $x\geq 1$. Correspondingly, we set the weight functions  $w_k^{(l)}(x)$  to zero for $x\geq 1$.
\end{remark}
\begin{remark}
The righthand side of equation (\ref{TheWeightFunctions}) can be written as
$$\frac{c_l'}{2\pi i}
\int\limits_C\frac{\left(s\right)_{\nu_1+k-1}\prod_{j=2}^l\left(s\right)_{\nu_j}}{\left(s\right)_{m_1-2n+k}\prod_{j=2}^l
\left(s\right)_{m_j-n}}x^{-s}ds,
$$
where
$$
c_l'=B\left(m_1-2n-\nu_1+1,\nu_1+k-1\right)\prod\limits_{j=2}^lB\left(m_j-n-\nu_j,\nu_j\right),
$$
and $B(x,y)$ stands for the Beta function.
\end{remark}
Our next result provides explicit formulae for the correlation functions of the product matrix process
associated with truncated unitary matrices.
\begin{prop}\label{THEOREMCorrelationKernel}
The product matrix process with truncated unitary matrices is a determinantal process on
$\left\{1,\ldots,p\right\}\times \R_{>0}$. Its correlation kernel, $K_{n,p,l}(r,x;s,y)$
(where $r,s\in\left\{1,\ldots,p\right\}$, and $x, y\in\R_{>0}$) can be written as
\begin{equation}\label{CorrelationKernelProductProcess}
\begin{split}
&K_{n,p,l}(r,x;s,y)=-\frac{1}{x}
G_{s-r,s-r}^{s-r,0}\left(\begin{array}{ccc}
                           m_{r+l}-n, & \ldots, & m_{s+l-1}-n \\
                           \nu_{r+l}, & \ldots, & \nu_{s+l-1}
                         \end{array}
\biggl\vert\frac{y}{x}\right)\mathbf{1}_{s>r}\\
&+
\frac{1}{\left(2\pi i\right)^2}\oint\limits_{C_t}dt\oint\limits_{C_{\zeta}}d\zeta
\frac{\prod\limits_{a=0}^{s+l-1}\Gamma\left(\nu_a+\zeta+1\right)}{\prod\limits_{a=0}^{r+l-1}\Gamma\left(\nu_a+t+1\right)}
\frac{\prod\limits_{a=0}^{r+l-1}\Gamma\left(m_a-n+t+1\right)}{\prod\limits_{a=0}^{s+l-1}\Gamma\left(m_a-n+\zeta+1\right)}
\frac{x^ty^{-\zeta-1}}{\zeta-t},
\end{split}
\end{equation}
where $C_t$ is a closed contour in the complex $t$-plane encircling the interval $[0,n-1]$ once in the positive direction,
$C_{\zeta}$ is a positively oriented closed contour in the complex $\zeta$-plane encircling once an interval containing all the points
$-\left(1+\nu_1\right)$, $\ldots$, $-\left(m_1-n\right)$; $\ldots$; $-\left(1+\nu_{s+l-1}\right)$, $\ldots$, $-\left(m_{s+l-1}-n\right)$, which does not intersect $C_t$.
In the formula above
it is understood that $m_0=\nu_0=0$.
\end{prop}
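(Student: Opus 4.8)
The plan is to read the correlation kernel off the explicit joint density of Proposition~\ref{PropositionProductTruncatedProcess} by means of the Eynard--Mehta theorem, evaluating every ingredient with the Mellin transform. The density~(\ref{TruncatedProcessJD}) is of Eynard--Mehta type: a product of $p$ determinants built from the family of $n$ functions $x\mapsto w_k^{(l)}(x)$, $k=1,\ldots,n$, at level $1$; the transition kernels
\[
\phi_r(u,v)=v^{\nu_{l+r}}(u-v)_+^{m_{l+r}-n-\nu_{l+r}-1}u^{\,n-m_{l+r}},\qquad r=1,\ldots,p-1,
\]
between consecutive levels; and the monomials $x\mapsto x^{j-1}$, $j=1,\ldots,n$, at level $p$ (arising from $\triangle(x^p)=\det[(x_j^p)^{k-1}]$). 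The Eynard--Mehta theorem then yields that the process is determinantal, with a kernel of the shape
\[
K_{n,p,l}(r,x;s,y)=-\phi_{r\to s}(x,y)\,\mathbf{1}_{s>r}+\sum_{k,j=1}^{n}\Phi_j^{(r)}(x)\,(M^{-1})_{jk}\,\Psi_k^{(s)}(y),
\]
where $\phi_{r\to s}=\phi_r*\cdots*\phi_{s-1}$ (convolution in the inner variable), $\Psi_k^{(s)}$ is $w_k^{(l)}$ propagated forward through $\phi_1,\ldots,\phi_{s-1}$, $\Phi_j^{(r)}$ is $x^{j-1}$ propagated backward through $\phi_{p-1},\ldots,\phi_r$, and $M$ is the Gram matrix $M_{jk}=\int_0^\infty\Phi_j^{(1)}(x)\,w_k^{(l)}(x)\,dx$.

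The second step is to compute these objects explicitly. The key point is that, by the Beta integral,
\[
\int_0^\infty\phi_r(u,v)\,v^{-\sigma}\,dv=u^{-\sigma}\,\frac{\Gamma(\nu_{l+r}+1-\sigma)\,\Gamma(m_{l+r}-n-\nu_{l+r})}{\Gamma(m_{l+r}-n+1-\sigma)},
\]
so each $\phi_r$ is diagonal in the multiplicative-convolution (Mellin) picture, acting by multiplication by an explicit ratio of Gamma functions. Iterating, $\phi_{r\to s}(x,y)$ becomes a single Mellin--Barnes integral which, after matching constant prefactors, one identifies with the Meijer $G$-function term of~(\ref{CorrelationKernelProductProcess}); $\Phi_j^{(r)}(x)$ equals $x^{j-1}$ times an explicit product of Gamma ratios; and, feeding the Mellin representation~(\ref{TheWeightFunctions}) of $w_k^{(l)}$ through $\phi_1,\ldots,\phi_{s-1}$, the function $\Psi_k^{(s)}(y)$ is again a single Mellin--Barnes integral with explicit Gamma factors. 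Using $\int_0^1 x^{\,j-1-\sigma}\,dx=(j-\sigma)^{-1}$ one then obtains the entries $M_{jk}$ in closed form.

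The heart of the argument, and the step I expect to be the main obstacle, is to invert $M$ and collapse the finite double sum into the double contour integral of~(\ref{CorrelationKernelProductProcess}). I would use the standard residue--interpolation device. With the convention $m_0=\nu_0=0$, the factor $\Gamma(m_0-n+t+1)=\Gamma(t-n+1)$ has simple poles precisely at $t=0,1,\ldots,n-1$, so closing $C_t$ onto $[0,n-1]$ reproduces the sum over the monomial index $j$, with $x^t|_{t=j-1}=x^{j-1}$ and the residues supplying the binomial normalizations. The contour $C_\zeta$ is the Mellin--Barnes contour carrying $\Psi_k^{(s)}$ after the sum over $k$ has been absorbed into $M^{-1}$; the displayed rational function $\prod_{a=0}^{s+l-1}\Gamma(\nu_a+\zeta+1)/\Gamma(m_a-n+\zeta+1)$, each factor being the reciprocal of a product of $m_a-n-\nu_a$ linear terms, has its poles exactly at the points $C_\zeta$ is required to enclose; and the kernel $(\zeta-t)^{-1}$ is produced by the pairing $(j-\sigma)^{-1}$ entering $M$ and the biorthogonality relations. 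What remains is bookkeeping: checking that the residues at $t=0,\ldots,n-1$ carry exactly the right constants, that $M^{-1}$ assembles into precisely the displayed ratios of Gamma functions, that the constants $c_l$ of~(\ref{cl}) and the normalization $Z_{n,p+l}$ cancel, and that $C_t$, $C_\zeta$ can be chosen disjoint with $C_\zeta$ enclosing the asserted poles.

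An alternative route --- the one announced in the Introduction --- avoids the Gram matrix entirely: Proposition~\ref{PropositionGeneralLimit} identifies the product matrix process with a continuous scaling limit of a Schur process, whose correlation kernel is the Okounkov--Reshetikhin double contour integral, and one can pass to that limit directly (rescaling the Young-diagram coordinates and the two integration variables, tuning the principal and geometric specializations so that $q\to1$, and using the asymptotics of $q$-Pochhammer symbols, which degenerate into ratios of Gamma functions). There the difficulty is analytic rather than algebraic: controlling the deformation of the contours and justifying the interchange of the limit with the contour integration.
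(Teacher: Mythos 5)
Your main route coincides with the paper's second proof of this proposition (the one carried out in the Appendix): apply the Eynard--Mehta theorem to the product-of-determinants density of Proposition \ref{PropositionProductTruncatedProcess}, compute every transition function through its Mellin transform (your one-step Mellin multiplier for $\phi_r$ is exactly the Beta-integral identity the paper uses, and iterating it does yield the Meijer $G$ term for $s>r$), and then assemble the finite-rank part into a double contour integral. The ``alternative route'' you mention at the end is in fact the paper's first proof (Section \ref{Section 6}), which degenerates the Okounkov--Reshetikhin kernel of the Schur process of Proposition \ref{PropositionGeneralLimit}.

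The genuine gap sits exactly at the step you flag and then dismiss as bookkeeping: inverting the Gram matrix $M$ and collapsing $\sum_{j,k}\Phi_j^{(r)}(M^{-1})_{jk}\Psi_k^{(s)}$ into a single double integral carrying the factor $1/(\zeta-t)$. In the paper this is the technical core of the Appendix and rests on three specific inputs, none of which is routine residue manipulation: (i) an explicit closed-form inverse of the Hankel-type matrix with entries $\Gamma(\nu_1+i+j-1)/\Gamma(m_1-2n+i+j)$, quoted from Zhang and Chen; (ii) a Pfaff--Saalsch\"utz evaluation of a terminating ${}_3F_2$ at argument $1$, needed to put the functions $Q_{s,k}$ into Mellin--Barnes form; and (iii) a Christoffel--Darboux-type summation identity (borrowed from Kuijlaars and Stivigny) that converts the resulting finite sum over $k$ of products of Gamma-ratios into $1/(\zeta-t)$. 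Your sketch supplies a mechanism for none of these. ``Residue interpolation'' does explain why $C_t$ encircles $t=0,\dots,n-1$ --- though note it is the ratio $\Gamma(t-n+1)/\Gamma(t+1)$, not $\Gamma(t-n+1)$ alone, whose poles are confined to that interval --- but it does not produce $M^{-1}$, nor the biorthogonality that generates the Cauchy kernel $1/(\zeta-t)$. Until you either prove these identities or exhibit an explicit biorthogonalization of the families $\{\Phi_j^{(r)}\}$ and $\{\Psi_k^{(s)}\}$, the argument is a correct plan rather than a proof.
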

We note that at $r=s=p$ and $l=1$ the formula for the correlation kernel stated in Proposition \ref{THEOREMCorrelationKernel}
turns into that derived in Kieburg, Kuijlaars, and Stivigny \cite{KieburgKuijlaarsStivigny}, Proposition 2.7.

In what follows we will give two proofs of Proposition \ref{THEOREMCorrelationKernel}. The first proof will use the fact
that the process defined by equation (\ref{TruncatedProcessJD}) can be understood as a continuous limit of a special Schur process, see Proposition \ref{PropositionGeneralLimit}
below.
Since the Schur processes are determinantal, this will imply determinantal formulae for the correlation functions. As for the explicit formula for
the correlation kernel (see equation (\ref{CorrelationKernelProductProcess})), it will be obtained from the general Okounkov-Reshetikhin formula \cite{OkounkovReshetikhin} for correlation kernels
of the Schur processes by a certain limiting procedure in Section \ref{Section 6}. The second proof will be based on the observation that the density of the product matrix process with truncated
unitary matrices can be written as a product of determinants. This will enable us to apply the result by Eynard
and Mehta \cite{EynardMehta}, and to give a formula for the correlation functions, see the Appendix. This second argument is similar to the proof of Borodin and Rains \cite{BorodinRains} of the determinantal structure of Schur processes.

\subsection{Convergence of the Schur process}
\label{TheSchurProcesses}
In this Section we use the notation of Macdonald \cite{Macdonald}, and follow  Refs. \cite{OkounkovReshetikhin, Borodin, BorodinGorin, BorodinRains}.

Let $\Lambda$ be the algebra of symmetric functions in countably many variables $z_1,z_2,\dots$ . We use two sets of generators of $\Lambda$: power sums $p_k$ and complete homogeneous symmetric functions $h_k$, $k=1,2,\dots$, defined through
$$
 p_k=\sum_{i=1}^{\infty} (z_i)^k,\quad h_k=\sum_{i_1\le i_2\le \dots\le i_k} z_{i_1} z_{i_2}\cdots z_{i_k}.
$$
We recall that the Schur functions $s_\lambda$ form a basis of $\Lambda$ when $\lambda$ varies over all Young diagrams (or partitions). We also use skew Schur functions $s_{\lambda/\mu}$ labeled by two Young diagrams $\lambda$ and $\mu$.

 A specialization $\varrho$ of $\Lambda$ is an algebra homomorphism of $\Lambda$ to $\C$.
A specialization $\varrho$ of $\Lambda$ is called nonnegative if it takes non-negative values on the Schur functions, see, e.g., Borodin \cite{Borodin}, Section 1
for a detailed discussion of nonnegative specializations of the algebra of symmetric functions. The application of a specialization $\varrho$ to $f\in\Lambda$ will be denoted as $f(\varrho)$. The trivial specialization $\emptyset$ of $\Lambda$ takes value $1$ at the constant function $1\in\Lambda$, and
takes value $0$ at any homogeneous $f\in\Lambda$ of degree $\geq 1$. In particular $s_\lambda(\emptyset)=0$ unless $\lambda=\emptyset$, and $s_{\lambda/\mu}(\emptyset)=0$ unless $\lambda=\mu$.

 In this paper we will only use the simplest Schur positive specializations parameterized by arbitrary $m=1,2,\dots$ and $m$--tuple of positive reals $(\alpha_1,\dots,\alpha_m)\in\mathbb R_{>0}^m$. We denote it $\varrho=(\alpha_1,\dots,\alpha_m)$ and set
$$
 p_k(\varrho)=p_k(\alpha_1,\dots,\alpha_m)=(\alpha_1)^k+(\alpha_2)^k+\dots (\alpha_m)^k.
$$
Equivalently, this specialization can be encoded by its generating function:
$$
 H(\varrho; u):=1+\sum_{k=1}^{\infty} h_k(\varrho) u^k=\exp\left(\sum_{k=1}^{\infty} \frac{p_k(\varrho)u^k}{k}\right) =\prod_{i=1}^{m} \frac{1}{1-\alpha_i u},
$$
where the second identity is the algebraic relation between generators $p_k$ and $h_k$.

The specializations that we use are often given by geometric series $\varrho=(q^{t},q^{t+1},\dots,q^{s})$. When $t>s$, the geometric series is empty and $\varrho$ becomes the trivial specialization.

\begin{defn}
Let $p$ be a natural number, and let $\varrho_0^+,\ldots,\varrho_{p-1}^+,\varrho_1^-,\ldots,\varrho_p^-$ be nonnegative specializations of $\Lambda$.  The Schur process of rank $p$ is a probability measure on sequences of Young diagrams
$$
\lambda^{(1)},\mu^{(1)},\lambda^{(2)},\mu^{(2)},\ldots,\lambda^{(p-2)},\mu^{(p-2)},\lambda^{(p-1)}, \mu^{(p-1)},\lambda^{(p)}
$$
parameterized by $2p$ Schur-positive specializations of the algebra of symmetric functions
given by
\begin{equation}\label{SchurProcess}
\begin{split}
&\Prob\left(\lambda^{(1)},\mu^{(1)},\lambda^{(2)},\mu^{(2)},\ldots,\lambda^{(p-2)},\mu^{(p-2)},\lambda^{(p-1)},\mu^{(p-1)},\lambda^{(p)}\right)\\
&=\frac{1}{Z_{\Schur}}s_{\lambda^{(1)}}\left(\varrho_0^+\right)s_{\lambda^{(1)}/\mu^{(1)}}\left(\varrho_1^-\right)
s_{\lambda^{(2)}/\mu^{(1)}}\left(\varrho_1^+\right)s_{\lambda^{(2)}/\mu^{(2)}}\left(\varrho_2^-\right)
s_{\lambda^{(3)}/\mu^{(2)}}\left(\varrho_2^+\right)\\
&\times\ldots\times s_{\lambda^{(p-1)}/\mu^{(p-1)}}\left(\varrho_{p-1}^-\right)s_{\lambda^{(p)}/\mu^{(p-1)}}\left(\varrho_{p-1}^+\right)
s_{\lambda^{(p)}}\left(\varrho_p^-\right).
\end{split}
\end{equation}
Here $Z_{\Schur}$ is a normalization constant.
\end{defn}
Since $s_{\lambda/\mu}\equiv 0$ unless $\mu\subset\lambda$, the Schur process lives on the following configurations of Young diagrams
$$
\emptyset\subset\lambda^{(1)}\supset\mu^{(1)}\subset\lambda^{(2)}\supset\mu^{(2)}\subset\lambda^{(3)}\supset\ldots
\subset\lambda^{(p-1)}\supset\mu^{(p-1)}\subset\lambda^{(p)}\supset\emptyset.
$$
For skew Schur functions we have the following summation formulae
\begin{equation}\label{SummationFormula1}
\sum\limits_{\mu\in\Y}s_{\mu/\lambda}(\varrho)s_{\mu/\nu}(\varrho')=H(\varrho;\varrho')\sum\limits_{\kappa\in\Y}s_{\lambda/\kappa}(\varrho')s_{\nu/\kappa}(\varrho),
\end{equation}
and
\begin{equation}\label{SummationFormula2}
\sum\limits_{\nu\in\Y}s_{\lambda/\nu}(\varrho)s_{\nu/\mu}(\varrho')=s_{\lambda/\mu}(\varrho,\varrho'),
\end{equation}
see Macdonald \cite{Macdonald}, Section I.5, equation (5.9), and Example  I.5.26(1).
Here
$$
H\left(\varrho;\varrho'\right)=\exp\left(\sum\limits_{k=1}^{\infty}\frac{p_k(\varrho)p_k(\varrho')}{k}\right),
$$
and the values of the symmetric functions under the union specialization $(\varrho,\varrho')$ are determined by the power sum values given by
$$
p_k\left(\varrho,\varrho'\right)=p_k\left(\varrho\right)+p_k\left(\varrho'\right).
$$
Hence,  for specializations $\varrho_1,\ldots,\varrho_k,\varrho_1',\ldots,\varrho_m'$
we have
$$
H\left(\varrho_1,\ldots,\varrho_k;\varrho'_1,\ldots,\varrho_m'\right)=\prod\limits_{i=1}^k\prod\limits_{j=1}^m
H\left(\varrho_i;\varrho_j'\right).
$$
\begin{prop}\label{PropositionGeneralLimit} Consider the Schur process defined by the probability measure (\ref{SchurProcess}).
Assume that the  specializations
$\varrho_0^+$, $\ldots$, $\varrho_{p-1}^+$ of the Schur process are defined by
  \begin{equation}\label{varrho0specialization}
  \varrho_0^+=\left(e^{-\left(1+\nu_1\right)\epsilon},e^{-\left(2+\nu_1\right)\epsilon},\ldots,e^{-\left(m_1-n\right)\epsilon};
  \ldots;e^{-\left(1+\nu_l\right)\epsilon},e^{-\left(2+\nu_l\right)\epsilon},\ldots,e^{-\left(m_l-n\right)\epsilon}\right),
  \end{equation}
  \begin{equation}\label{varrho1specialization}
  \varrho_1^+=\left(e^{-\left(1+\nu_{l+1}\right)\epsilon},e^{-\left(2+\nu_{l+1}\right)\epsilon},\ldots,e^{-\left(m_{l+1}-n\right)\epsilon}\right),
  \end{equation}
  $$
  \vdots
  $$
  \begin{equation}\label{varrhopminus1specialization}
  \varrho_{p-1}^+=\left(e^{-\left(1+\nu_{l+p-1}\right)\epsilon},e^{-\left(2+\nu_{l+p-1}\right)\epsilon},\ldots,e^{-\left(m_{l+p-1}-n\right)\epsilon}\right).
  \end{equation}
  The  specialization $\varrho_{p}^{-}$ is defined by
  \begin{equation}\label{negativespecialization}
  \varrho_{p}^{-}=\left(1,e^{-\epsilon},\ldots, e^{-(n-1)\epsilon}\right),
  \end{equation}
and all the other  specializations $\varrho_1^{-}$, $\ldots$, $\varrho_{p-1}^{-}$ are trivial.
With these specializations the Schur process lives on the point configurations
\begin{equation}\label{CSP}
\left\{\left(1,\lambda_i^{(1)}-i\right)\right\}_{i=1}^{n}\cup\ldots\cup\left\{\left(p,\lambda_i^{(p)}-i\right)\right\}_{i=1}^{n},
\end{equation}
and each  $\lambda^{(k)}$, $1\leq k\leq p$ has at most $n$ nonzero parts almost surely.
Set
\begin{equation}\label{rconfigurations11}
x_j^k=e^{-\epsilon\lambda_j^{(k)}},\; k=1,\ldots,p;\; j=1,\ldots,n.
\end{equation}
Then the Schur process induces a point process on $\left\{1,\ldots,p\right\}\times\R_{>0}$, and this process
is formed by  the configurations
\begin{equation}
\left\{\left(k,x_j^k\right)\biggl|k=1,\ldots,p;\; j=1,\ldots,n\right\}.
\end{equation}
As $\epsilon\rightarrow 0$, the point process formed by configurations (\ref{rconfigurations11}) converges to the product matrix process
associated with truncated unitary matrices, as defined in Section \ref{SectionTruncatedProcess}.
\end{prop}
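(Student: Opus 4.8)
The plan is to compute, for fixed $\epsilon>0$, the joint law of the random variables $x_j^k=e^{-\epsilon\lambda_j^{(k)}}$ under the Schur process with the specified specializations, and then let $\epsilon\to 0$ and match the limit with the density \eqref{TruncatedProcessJD} from Proposition \ref{PropositionProductTruncatedProcess}. First I would use the branching rule for skew Schur functions together with the fact that all intermediate negative specializations $\varrho_1^-,\dots,\varrho_{p-1}^-$ are trivial: this forces $\mu^{(r)}=\lambda^{(r)}$ for each $r$, so the Schur process collapses to a measure on a single interlacing chain $\emptyset\subset\lambda^{(1)}\subset\lambda^{(2)}\subset\cdots\subset\lambda^{(p)}$ with weight proportional to
$$
s_{\lambda^{(1)}}(\varrho_0^+)\,s_{\lambda^{(2)}/\lambda^{(1)}}(\varrho_1^+)\cdots s_{\lambda^{(p)}/\lambda^{(p-1)}}(\varrho_{p-1}^+)\,s_{\lambda^{(p)}}(\varrho_p^-).
$$
Since $\varrho_0^+$ is a union of $l$ finite geometric specializations and $\varrho_p^-$ is a finite geometric specialization in $n$ variables, and $\varrho_r^+$ for $r\ge 1$ is a single finite geometric specialization, each $\lambda^{(k)}$ has at most $n$ parts almost surely (from $\varrho_p^-$ propagating back through the interlacing, or directly from the number of variables in the first specialization combined with the structure), which justifies the claim about point configurations \eqref{CSP}.

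Next I would evaluate each factor explicitly. For a finite geometric specialization $(q^{a},q^{a+1},\dots,q^{b})$ with $q=e^{-\epsilon}$, the (skew) Schur function $s_{\lambda/\mu}$ is a ratio-of-Vandermonde / minor expression that, under the substitution $x_j=q^{\lambda_j^{(k)}}$, becomes a ratio of Vandermonde-type determinants in the $x_j^k$ up to an explicit monomial prefactor in the $q^{a},\dots,q^{b}$; in particular $s_{\lambda^{(p)}}(\varrho_p^-)$ with $\varrho_p^-=(1,q,\dots,q^{n-1})$ gives, via the bialternant formula, exactly $q^{-\binom n2}\triangle(x^p)/\triangle(1,q,\dots,q^{n-1})$ after rewriting $s_\lambda(1,q,\dots,q^{n-1})$ in the $x_j^p=q^{\lambda_j^{(p)}}$ variables. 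The skew factors $s_{\lambda^{(r+1)}/\lambda^{(r)}}(\varrho_r^+)$ with a one-parameter geometric specialization are, by the dual Jacobi–Trudi / Lindström–Gessel–Viennot reading of skew Schur functions on a single geometric progression, single determinants whose entries are Gauss binomial coefficients, i.e. $q$-analogues of $\binom{m_{l+r}-n+\nu_{l+r}-?}{\cdot}$; after the change of variables $x_j^r=q^{\lambda_j^{(r)}}$ these entries become expressions of the form $(x_j^r)^{\nu}$ times a $q$-Pochhammer in $x_k^{r-1}/x_j^r$ (or $x_j^{r+1}/x_k^r$), which is the discrete precursor of the kernel $(x_k^r-x_j^{r+1})_+^{m_{l+r}-n-\nu_{l+r}-1}$. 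Finally the initial factor $s_{\lambda^{(1)}}(\varrho_0^+)$ with the union of $l$ geometric specializations is, by \eqref{SummationFormula2}, a sum over $l-1$ intermediate partitions of a product of one-parameter skew factors; this telescoping multiple sum is exactly the discrete analogue of the Meijer $G$-function contour integral \eqref{TheWeightFunctions}, and in the limit the multiple sum becomes the Mellin–Barnes integral defining $w_k^{(l)}$.

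The core of the argument is then the $\epsilon\to 0$ asymptotic analysis: I would set $\lambda_j^{(k)}\approx -\epsilon^{-1}\log x_j^k$, so that sums over partitions become Riemann sums converging to integrals $dx^1\cdots dx^n$, Gaussian-binomial coefficients $\binom{a}{b}_q$ converge (after the appropriate power of $(1-q)\sim\epsilon$) to ordinary powers $(\cdot)_+^{\,\cdot}$, $q$-Pochhammers converge to powers of differences $(x_k-x_j)_+$, $q$-Vandermonde $\triangle_q(x)$ converges to the ordinary Vandermonde $\triangle(x)$, and the normalization $Z_{\Schur}$ — which by the Cauchy-type identity equals $\prod_{i<j}H(\varrho_i^+;\varrho_j^+)\cdot\prod_i H(\varrho_i^+;\varrho_p^-)$, an explicit product of $q$-Gamma factors — converges, after the correct power of $\epsilon$ is extracted to match the Jacobian, to the constant $Z_{n,p+l}$ of Proposition \ref{PropositionProductTruncatedProcess}. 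Combining all pieces gives convergence of the finite-dimensional densities to \eqref{TruncatedProcessJD}, hence weak convergence of the point processes, with tightness coming from the explicit exponential tail decay of the geometric specializations (so that $\lambda_j^{(k)}=O(\epsilon^{-1})$ and $x_j^k$ stays in a compact subset of $(0,1]$). \textbf{The main obstacle} I anticipate is bookkeeping: tracking the exact monomial prefactors produced by each Schur/skew-Schur evaluation under $x_j=q^{\lambda_j}$, the exact powers of $q$ and of $(1-q)$ that must be distributed among the determinantal factors versus the normalization versus the Jacobian $dx_j^k=-\epsilon\,x_j^k\,d\lambda_j^k$, and verifying that these prefactors — together with the limits of the $q$-Gamma factors in $Z_{\Schur}$ — assemble precisely into the explicit constants $Z_{n,p+l}$ and $c_l$ rather than merely up to an unidentified constant; getting the exponents $m_{l+r}-n-\nu_{l+r}-1$ and $n-m_{l+r}$ to come out correctly from the Gauss-binomial limit, and checking that the support conditions $x^{r+1}\le x^r$ emerge correctly (reflecting the interlacing $\lambda^{(r)}\subset\lambda^{(r+1)}$) is the delicate part.
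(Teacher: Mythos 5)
Your proposal follows essentially the same route as the paper: collapse the process to a single chain $\lambda^{(1)}\subset\cdots\subset\lambda^{(p)}$ using the triviality of $\varrho_1^-,\ldots,\varrho_{p-1}^-$, extract the Vandermonde $\triangle(x^p)$ from the principal specialization of $s_{\lambda^{(p)}}(\varrho_p^-)$, obtain the kernels $(x_k^r-x_j^{r+1})_+^{m-n-\nu-1}$ from the Jacobi--Trudi determinant of the skew factors (the paper's Lemma \ref{PropositionHLimit} and Proposition \ref{LimitSkewSchur}), convolve the $l$ geometric series in $\varrho_0^+$ via \eqref{SummationFormula2} to produce the Meijer $G$-weights $w_k^{(l)}$ (the paper's inductive Proposition \ref{PropositionAsymptoticsSchurGeneral}), and match the $H$-function normalization with $Z_{n,p+l}$ after the Jacobian $dx_j^k=-\epsilon x_j^k\,d\lambda_j^k$. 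The bookkeeping you flag as the main obstacle is indeed where the paper spends its effort, and the uniform convergence you invoke is exactly how the paper justifies exchanging limits with sums and identifies the normalization constants.
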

\begin{remark}
\label{Remark_size_restriction}
 We prove Proposition \ref{PropositionGeneralLimit} only under the assumption $m_1\ge 2n+\nu_1$ of \eqref{Condition1}. Although it is very plausible that the statement is true without this condition, we do not address the more general case in this text. A technical difficulty is that without \eqref{Condition1} we cannot use the result of Proposition \ref{PropositionProductTruncatedProcess} directly; in particular, the constant $c_l$ of \eqref{cl} is infinite.
\end{remark}
\subsection{Random skew plane partitions, and products of truncated unitary matrices}
\label{Section_skew_PP}
Let $A$ and $B$ be two natural numbers, and denote by $B^A$ the $A\times B$ rectangle.
Let $\pi$ be a Young diagram such that  $\pi\subset B^A$. A skew plane partition $\Pi$ with support
$B^A/\pi$ is a filling of all boxes of $B^A/\pi$ such that $\Pi_{i,j}\geq \Pi_{i+1,j}$
and $\Pi_{i,j}\geq\Pi_{i,j+1}$ for all meaningfull values of $i$ and $j$. Here we assume that $\Pi_{i,j}$ is located in the $i\text{th}$
row and $j\text{th}$ column of $B^A$. The volume of a skew plane partition $\Pi$ is defined by
\begin{equation}\label{Volume}
\Volume\left(\Pi\right)=\sum\limits_{i,j}\Pi_{i,j}.
\end{equation}
Given a parameter $0<q<1$, define a probability measure on the set of all plane partitions $\Pi$ with support
$B^A/\pi$ by setting
\begin{equation}\label{PartitionWeight}
\Prob\left\{\Pi\right\}\sim q^{\Volume\left(\Pi\right)}.
\end{equation}
For a skew plane partition $\Pi$ we define Young diagrams $\lambda^{(k)}\left(\Pi\right)$, $1\leq k\leq A+B+1$, through
$$
\lambda^{(k)}\left(\Pi\right)=\left\{\Pi_{i,i+k-A-1}\biggl|\left(i,i+k-A-1\right)\in B^A/\pi\right\}.
$$
 Note that $\lambda^{(1)}=\lambda^{(A+B+1)}=\emptyset$.
Also, define
$$
\mathcal{L}\left(\pi\right)=\left\{A+\pi_i-i+1\biggl|i=1,\ldots,A\right\}.
$$
This is a subset of $\left\{1,\ldots,A+B+1\right\}$ containing $A$ points, and all such subsets are in bijection with the Young diagrams (or partitions) $\pi$
contained in the box $B^A$.

It is not hard to see that the set of all skew plane partitions with support $B^A/\pi$ consists of sequences
$\left(\lambda^{(1)},\ldots,\lambda^{(A+B+1)}\right)$ with
$$
\lambda^{(1)}=\lambda^{(A+B+1)}=\emptyset,
$$
\begin{equation}
\label{eq_interlacing_PP}
\lambda^{(j)}\prec\lambda^{(j+1)}\;\text{if}\;j\in\mathcal{L}(\lambda),\;\;\;\lambda^{(j)}\prec\lambda^{(j+1)}\;\text{if}\;j\notin\mathcal{L}(\lambda),
\end{equation}
where notation $\mu\prec\nu$  means that $\mu$ and $\nu$ interlace, that is
$$
\nu_1\geq \mu_1\geq\nu_2\geq\mu_2\geq\nu_3\geq\ldots .
$$
We refer to Figure \ref{Fig_Plane_partition} for an illustration of $A=4$, $B=3$, $\pi=\emptyset$ case.
Moreover, we have
$$
\sum\limits_{j=1}^{A+B+1}\left|\lambda^{(j)}\right|=\Volume\left(\Pi\right),
$$
where $|\mu|$ denotes the number of boxes in the Young diagram $\mu$.

The probability measure on the set of all skew plane partitions $\Pi$ with support
$B^A/\pi$ and defined by equation (\ref{PartitionWeight}) induces a probability measure on sequences
$\left(\lambda^{(1)},\ldots,\lambda^{(A+B+1)}\right)$. It is known (see Okounkov and Reshetikhin \cite{OkounkovReshetikhin,OkounkovReshetikhin_skew})
that this probability measure can be understood as the Schur process defined in Section \ref{TheSchurProcesses}
by equation (\ref{SchurProcess}) with the rank $p=A+B+1$, and nonnegative specializations
$\left\{\varrho_i^+\right\}_{i=0}^{p-1}$, $\left\{\varrho_i^-\right\}_{i=1}^p$ defined by
$$
H\left(\varrho_0^+;u\right)=H\left(\varrho_{A+B+1}^-;u\right)=1,
$$
$$
H\left(\varrho_j^+;u\right)=\left\{
                              \begin{array}{ll}
                                \dfrac{1}{1-q^{-j}u}, & j\in\mathcal{L}(\pi), \\
                                1, & j\notin\mathcal(\pi);
                              \end{array}
                            \right.
\;\;\;\;\;
H\left(\varrho_j^-;u\right)=\left\{
                              \begin{array}{ll}
                               1, & j\in\mathcal{L}(\pi),\\
                               \dfrac{1}{1-q^{j}u}, & j\notin\mathcal{L}(\pi).
                                 \end{array}
                            \right.
$$
Note that for any two neighboring specializations $\varrho_k^{-}$, $\varrho_k^{+}$ defined above at least one is trivial, and each $\mu^{(j)}$
coincides either with $\lambda^{(j)}$ or $\lambda^{(j+1)}$. The only nontrivial specializations are one variable specializations $\rho=(\alpha)$ with $\alpha=q^{\pm j}$. A basic property of skew Schur functions is that $s_{\lambda/\mu}(\alpha)=0$ unless $\mu\prec \lambda$; this implies interlacing conditions \eqref{eq_interlacing_PP}.

Let $\pi\in B^A$. The set $\left\{1,\ldots,A+B+1\right\}$ enumerates the intersections of the boundary of the skew diagram $B^A/\pi$
with the square grid, as  shown on Figure \ref{Fig_boundary}. Denote by $l$ the number of vertical segments of
of the boundary of the skew diagram  $B^A/\pi$.
\begin{center}
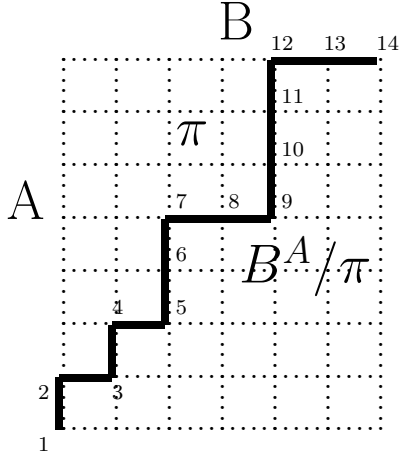
\begin{figure}[h]
\setlength{\unitlength}{4pt}
\begin{picture}(80,40)
\multiput(0,0)(0,1){36}{.}
\multiput(5,0)(0,1){36}{.}
\multiput(10,0)(0,1){36}{.}
\multiput(15,0)(0,1){36}{.}
\multiput(20,0)(0,1){36}{.}
\multiput(25,0)(0,1){36}{.}
\multiput(30,0)(0,1){36}{.}
\multiput(0,0)(1,0){30}{.}
\multiput(0,5)(1,0){30}{.}
\multiput(0,10)(1,0){30}{.}
\multiput(0,15)(1,0){30}{.}
\multiput(0,20)(1,0){30}{.}
\multiput(0,25)(1,0){30}{.}
\multiput(0,30)(1,0){30}{.}
\multiput(0,35)(1,0){30}{.}
\linethickness{1mm}
\put(0,0){\line(0,1){5}}
\put(0,5){\line(1,0){5}}
\put(5,5){\line(0,1){5}}
\put(5,10){\line(1,0){5}}
\put(10,10){\line(0,1){5}}
\put(10,15){\line(0,1){5}}
\put(10,20){\line(1,0){5}}
\put(15,20){\line(1,0){5}}
\put(20,20){\line(0,1){5}}
\put(20,25){\line(0,1){5}}
\put(20,30){\line(0,1){5}}
\put(20,35){\line(1,0){5}}
\put(25,35){\line(1,0){5}}
\begin{tiny}
\put(-2,-2){1}
\put(-2,3){2}
\put(5,3){3}
\put(5,11){4}
\put(11,11){5}
\put(11,16){6}
\put(11,21){7}
\put(16,21){8}
\put(21,21){9}
\put(21,26){10}
\put(21,31){11}
\put(20,36){12}
\put(25,36){13}
\put(30,36){14}
\end{tiny}
\begin{LARGE}
\put(-5,20){A}
\put(15,37){B}
\put(11,27){$\pi$}
\put(17,14){$B^A/\pi$}
\end{LARGE}
\end{picture}
\caption{The set $\left\{1,\ldots,A+B+1\right\}$ enumerates the boundary of the skew diagram $B^A/\pi$. In this example
$A=7$, $B=6$, and $l=4$. \label{Fig_boundary}}
\end{figure}
\end{center}
Let $\left\{\beta_1,\ldots,\beta_{2l-1}\right\}$ be a subset of $\left\{1,\ldots,A+B+1\right\}$, where the  numbers $\beta_1$, $\ldots$, $\beta_{2l-1}$
parameterize the vertical segments of the boundary of the skew diagram  $B^A/\pi$, see Figure \ref{Figure_corners}.
For example, for the Young diagram $\pi$ on Figure \ref{Fig_boundary}  we have $\beta_1=12$, $\beta_2=9$, $\beta_3=7$, $\beta_4=5$,
$\beta_5=4$, $\beta_6=3$, $\beta_7=2$ and $\beta_8=1$.  Now, assume that $a_1\geq p$, and pick $p$ numbers $\alpha_1$, $\ldots$, $\alpha_p$ such that
$\beta_2\le \alpha_1<\ldots<\alpha_p<\beta_1$, see Figure \ref{Figure_corners}.
\begin{center}
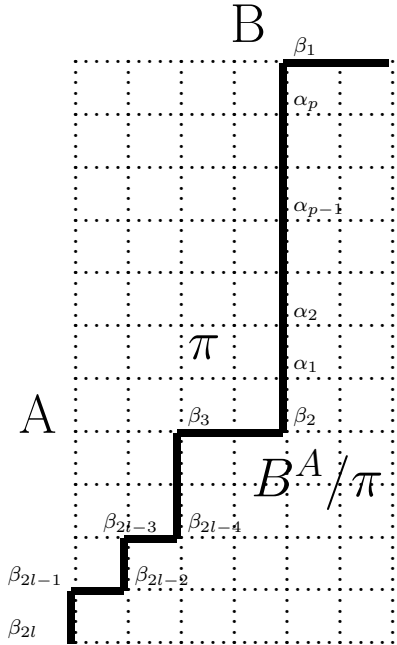
\begin{figure}[h]
\setlength{\unitlength}{4pt}
\begin{picture}(80,60)
\multiput(0,0)(0,1){56}{.}
\multiput(5,0)(0,1){56}{.}
\multiput(10,0)(0,1){56}{.}
\multiput(15,0)(0,1){56}{.}
\multiput(20,0)(0,1){56}{.}
\multiput(25,0)(0,1){56}{.}
\multiput(30,0)(0,1){56}{.}
\multiput(0,0)(1,0){30}{.}
\multiput(0,5)(1,0){30}{.}
\multiput(0,10)(1,0){30}{.}
\multiput(0,15)(1,0){30}{.}
\multiput(0,20)(1,0){30}{.}
\multiput(0,25)(1,0){30}{.}
\multiput(0,30)(1,0){30}{.}
\multiput(0,35)(1,0){30}{.}
\multiput(0,40)(1,0){30}{.}
\multiput(0,45)(1,0){30}{.}
\multiput(0,50)(1,0){30}{.}
\multiput(0,55)(1,0){30}{.}
\linethickness{1mm}
\put(0,0){\line(0,1){5}}
\put(0,5){\line(1,0){5}}
\put(5,5){\line(0,1){5}}
\put(5,10){\line(1,0){5}}
\put(10,10){\line(0,1){5}}
\put(10,15){\line(0,1){5}}
\put(10,20){\line(1,0){5}}
\put(15,20){\line(1,0){5}}
\put(20,20){\line(0,1){5}}
\put(20,25){\line(0,1){30}}
\put(20,55){\line(1,0){10}}
\begin{tiny}
\put(-6,1){$\beta_{2l}$}
\put(-6,6){$\beta_{2l-1}$}
\put(6,6){$\beta_{2l-2}$}
\put(3,11){$\beta_{2l-3}$}
\put(11,11){$\beta_{2l-4}$}
\put(11,21){$\beta_{3}$}
\put(21,21){$\beta_{2}$}
\put(21,26){$\alpha_{1}$}
\put(21,31){$\alpha_{2}$}
\put(21,41){$\alpha_{p-1}$}
\put(21,51){$\alpha_{p}$}
\put(21,56){$\beta_{1}$}
\end{tiny}
\begin{LARGE}
\put(-5,20){A}
\put(15,57){B}
\put(11,27){$\pi$}
\put(17,14){$B^A/\pi$}
\end{LARGE}
\end{picture}
\caption{The parametrization of the the vertical segments of the boundary of $B^A/\pi$ by $\beta_1$, $\ldots$, $\beta_{2l}$, and the choice of $\alpha_1$, $\ldots$, $\alpha_p$. \label{Figure_corners}}
\end{figure}
\end{center}
Consider the sequence $\left(\lambda^{\left(\alpha_1\right)},\ldots,\lambda^{\left(\alpha_p\right)}\right)$ of random Young diagrams associated with
a random skew plane partition $\Pi$ whose support is $B^A/\pi$, and whose weight is proportional to $q^{\Volume(\Pi)}$. By assigning to this sequence the point configuration
\begin{equation}\label{PoinConfiguration}
\left\{\left(1,\lambda_i^{\left(\alpha_1\right)}-i\right)\right\}_{i\geq 1}\cup\ldots\cup\left\{\left(p,\lambda_i^{\left(\alpha_p\right)}-i\right)\right\}_{i\geq 1}
\end{equation}
we obtain a random point process on $\left\{1,\ldots,p\right\}\times\Z$.
\begin{prop}\label{PropositionEffectiveSchurMeasure}
The probability of the point configuration (\ref{PoinConfiguration}) is determined by the probability measure
\begin{equation}\label{EffectiveSchurMeasure1}
\begin{split}
&\frac{1}{Z}s_{\lambda^{(\alpha_p)}}\left(1,q,\ldots,q^{B-\pi_1-1}\right)
s_{\lambda^{(\alpha_p)}/\lambda^{(\alpha_{p-1})}}\left(q^{A+\pi_1+2-\alpha_p},q^{A+\pi_1+3-\alpha_p},\ldots,q^{A+\pi_1+1-\alpha_{p-1}}\right)\\
&\times s_{\lambda^{(\alpha_{p-1})}/\lambda^{(\alpha_{p-2})}}\left(q^{A+\pi_1+2-\alpha_{p-1}},q^{A+\pi_1+3-\alpha_{p-1}},\ldots,q^{A+\pi_1+1-\alpha_{p-2}}\right)\\
&\times
\ldots
\cdot s_{\lambda^{(\alpha_2)}/\lambda^{(\alpha_1)}}\left(q^{A+\pi_1+2-\alpha_2},q^{A+\pi_1+3-\alpha_2},\ldots,q^{A+\pi_1+1-\alpha_{1}}\right)\\
&\times s_{\lambda^{(\alpha_1)}}\biggl(q^{A+\pi_1+2-\alpha_1},q^{A+\pi_1+3-\alpha_1},\ldots,q^{A+\pi_1+1-\beta_2};
q^{A+\pi_1+2-\beta_3},q^{A+\pi_1+3-\beta_3},\ldots,q^{A+\pi_1+1-\beta_4};\\
&\ldots;
q^{A+\pi_1+2-\beta_{2l-3}},q^{A+\pi_1+3-\beta_{2l-3}},\ldots,q^{A+\pi_1+1-\beta_{2l-2}};
q^{A+\pi_1+2-\beta_{2l-1}},q^{A+\pi_1+3-\beta_{2l-1}},\ldots,q^{A+\pi_1}
\biggr),
\end{split}
\end{equation}
where $Z$ is a normalization constant.
\end{prop}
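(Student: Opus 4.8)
The plan is to derive the measure (\ref{EffectiveSchurMeasure1}) directly from the Schur process description of random skew plane partitions recalled right before the statement, by explicitly tracking which one-variable specializations get attached when we pass from the full sequence $(\lambda^{(1)},\ldots,\lambda^{(A+B+1)})$ to the sub-sequence $(\lambda^{(\alpha_1)},\ldots,\lambda^{(\alpha_p)})$ supported on the chosen indices. First I would fix the dictionary between the indices $1,\ldots,A+B+1$ labeling the boundary of $B^A/\pi$ and the specializations: as recalled above, each step $j\to j+1$ of the zigzag boundary carries a one-variable specialization, namely $H(\varrho_j^+;u)=(1-q^{-j}u)^{-1}$ at a vertical step $j\in\mathcal L(\pi)$ and $H(\varrho_j^-;u)=(1-q^{j}u)^{-1}$ at a horizontal step $j\notin\mathcal L(\pi)$, all other specializations trivial. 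In terms of the parameters this means: a vertical step at index $j$ contributes the single variable $q^{-j}$ on the ``$+$'' side, and a horizontal step at index $j$ contributes the single variable $q^{j}$ on the ``$-$'' side; the first and last steps contribute nothing.

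Next I would use the branching/summation rule (\ref{SummationFormula2}), $\sum_{\nu}s_{\lambda/\nu}(\varrho)s_{\nu/\mu}(\varrho')=s_{\lambda/\mu}(\varrho,\varrho')$, to marginalize out all the intermediate diagrams $\lambda^{(j)},\mu^{(j)}$ with $j\notin\{\alpha_1,\ldots,\alpha_p\}$. Summing out the diagrams strictly between $\alpha_i$ and $\alpha_{i+1}$ collapses the corresponding chain of skew Schur factors into a single skew Schur function $s_{\lambda^{(\alpha_{i+1})}/\lambda^{(\alpha_i)}}$ in the union of the one-variable specializations sitting on the boundary steps with indices in $[\alpha_i,\alpha_{i+1})$; since all those intermediate steps lie strictly between $\beta_2$ and $\beta_1$ on the outer boundary, they are all \emph{vertical} (by the choice $\beta_2\le\alpha_1<\cdots<\alpha_p<\beta_1$), hence all on the ``$+$'' side, and the collected variables are exactly $q^{-j}$ for $j$ ranging over those indices. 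Rewriting $q^{-\alpha_{i+1}+1},\ldots,q^{-\alpha_i}$ in the normalized form $q^{A+\pi_1+2-\alpha_{i+1}},\ldots,q^{A+\pi_1+1-\alpha_i}$ is just multiplying all variables by the common factor $q^{A+\pi_1+1}$ (a Schur function of degree-$|\lambda|$ scales by $q^{(A+\pi_1+1)|\lambda|}$, which is absorbed into $Z$); I should check that $A+\pi_1+1$ is the index of the step carrying $q^{-\alpha}$ shifted appropriately, i.e.\ that $\beta_1=A+\pi_1+1$ — indeed $\beta_1$ is the topmost vertical step of $B^A/\pi$, the one with largest index, and a short computation with $\mathcal L(\pi)=\{A+\pi_i-i+1\}$ gives $\beta_1=A+\pi_1+1$. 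The leftmost factor $s_{\lambda^{(\alpha_1)}}(\cdots)$ collects \emph{all} the remaining vertical steps of the boundary below $\alpha_1$, i.e.\ those with indices in $[\,\beta_3,\beta_2)\cup[\beta_5,\beta_4)\cup\cdots$, which is precisely the list of geometric blocks displayed in (\ref{EffectiveSchurMeasure1}); summing out everything between index $1$ and $\alpha_1$ via (\ref{SummationFormula2}) again produces one (non-skew) Schur function in the union of those variables. Finally, the rightmost factor $s_{\lambda^{(\alpha_p)}}(1,q,\ldots,q^{B-\pi_1-1})$ comes from marginalizing the indices above $\alpha_p$: the horizontal steps there sit at indices $\alpha_p< j\le A+B+1$ that are not in $\mathcal L(\pi)$, carry variables $q^{j}$ on the ``$-$'' side, and one checks these are $q^{\,\beta_1+1},\ldots$, i.e.\ after the same normalization $1,q,\ldots,q^{B-\pi_1-1}$; the count $B-\pi_1$ is the number of horizontal steps strictly above the top of $\pi$, which matches the length of the first row of the complement.

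The main obstacle — and the only place requiring genuine care rather than bookkeeping — is the indexing arithmetic: correctly identifying, for each of the $p+1$ surviving Schur factors, exactly which boundary steps are swept into it, their orientation (vertical vs.\ horizontal, hence ``$+$'' vs.\ ``$-$''), and the exact exponents of $q$ after the uniform shift by $q^{A+\pi_1+1}$, including the boundary cases near $\beta_1,\beta_2,\beta_3,\beta_{2l-1},\beta_{2l}$. A clean way to organize this is to first write the generic Schur-process weight (\ref{SchurProcess}) with \emph{all} $A+B+1$ indices and the specializations as recalled, then apply (\ref{SummationFormula2}) repeatedly in $p$ groups (indices $[1,\alpha_1]$, $[\alpha_1,\alpha_2]$, \ldots, $[\alpha_{p-1},\alpha_p]$) plus one final group $[\alpha_p,A+B+1]$, each time quoting that the intermediate $\varrho^\pm$ in the group that are trivial drop out and the nontrivial ones merge; the geometric structure of $B^A/\pi$ guarantees that the only nontrivial specializations inside the ``middle'' groups are single ``$+$'' variables (all steps between $\beta_2$ and $\beta_1$ are vertical) and inside the ``bottom'' group they are again single ``$+$'' variables (all remaining vertical steps of $\pi$), while the ``top'' group contributes single ``$-$'' variables. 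All degree factors $q^{c|\lambda^{(\alpha_i)}|}$ produced by the shift, together with the original $Z_{\Schur}$, are absorbed into the new constant $Z$, which completes the identification with (\ref{EffectiveSchurMeasure1}).
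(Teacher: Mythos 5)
Your overall strategy (marginalize the full Schur process of the skew plane partition down to the slices $\lambda^{(\alpha_1)},\ldots,\lambda^{(\alpha_p)}$ and track which one-variable specializations merge into which factor) is the same as the paper's, and your identification of the surviving variables, of $\beta_1=A+\pi_1+1$, and of the homogeneity shift is correct. But there is a genuine gap in the marginalization step: you claim that the branching rule (\ref{SummationFormula2}) alone collapses the ``bottom'' group of indices $[1,\alpha_1]$ and the ``top'' group $[\alpha_p,A+B+1]$, on the grounds that the only nontrivial specializations in the bottom group are single ``$+$'' variables. This is false: the indices $1,\ldots,\beta_2$ run along the boundary of $\pi$, which alternates between vertical runs (indices in $\mathcal{L}(\pi)$, carrying $\varrho_j^+=(q^{-j})$) and horizontal runs (indices not in $\mathcal{L}(\pi)$, carrying $\varrho_j^-=(q^{j})$). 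Consequently the chain $\lambda^{(1)},\ldots,\lambda^{(\alpha_1)}$ is not monotone --- it has a local minimum at each inner corner $\beta_{2k}$ --- and (\ref{SummationFormula2}), which only sums out a middle term of a monotone chain $\mu\subset\nu\subset\lambda$, cannot eliminate these diagrams. One must invoke the Cauchy-type identity (\ref{SummationFormula1}) at every direction change: it commutes a ``$-$'' block past a ``$+$'' block at the cost of an $H(\cdot\,;\cdot)$ factor (which is a constant and goes into $Z$), and only after all such commutations do the $l$ blocks of ``$+$'' variables coming from the vertical runs of $\pi$ merge into the single Schur function $s_{\lambda^{(\alpha_1)}}(\cdots)$, while the ``$-$'' variables disappear against $\lambda^{(1)}=\emptyset$. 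The same issue arises at the top: $\lambda^{(\beta_1)}$ is a local maximum between the vertical run $[\alpha_p,\beta_1)$ (variables $q^{-\alpha_p},\ldots,q^{-(\beta_1-1)}$) and the horizontal run above $\beta_1$ (variables $q^{\beta_1},\ldots,q^{A+B}$); summing it out is an application of (\ref{SummationFormula1}), and this is precisely what makes those ``$+$'' variables vanish from the final answer and leaves $s_{\lambda^{(\alpha_p)}}(q^{\beta_1},\ldots,q^{A+B})$ --- a step your bookkeeping silently skips. The paper's proof does exactly this: it first projects onto the corner slices $\lambda^{(\beta_j)},\lambda^{(\alpha_i)}$ via (\ref{SummationFormula2}) and then eliminates the $\lambda^{(\beta_j)}$ using both (\ref{SummationFormula1}) and (\ref{SummationFormula2}).

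A smaller slip: the prefactors produced by the homogeneity rescaling are of the form $q^{(A+\pi_1+1)(|\lambda^{(\alpha_{i+1})}|-|\lambda^{(\alpha_i)}|)}$ and thus depend on the random diagrams, so they cannot be ``absorbed into $Z$''; they cancel telescopically between adjacent factors because $s_{\lambda/\mu}$ is homogeneous of degree $|\lambda|-|\mu|$. The conclusion is right, but it should be stated as a cancellation rather than an absorption into the normalization constant.
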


Now we are ready to state the main result of the present work. Let $\Pi$ be a random skew partition with support $B^{A}/\pi$
whose weight is determined by equation (\ref{PartitionWeight}). Let $\left(\lambda^{1},\ldots,\lambda^{\left(A+B+1\right)}\right)$
be a sequence of Young diagrams associated with $\Pi$. Consider the subsequence
$\left(\lambda^{\left(\alpha_1\right)},\ldots,\lambda^{\left(\alpha_p\right)}\right)$ of
$\left(\lambda^{1},\ldots,\lambda^{\left(A+B+1\right)}\right)$, where the indexes $\alpha_1$, $\ldots$, $\alpha_p$ are chosen as it is described above. By assigning to this
subsequence the point configuration (\ref{PoinConfiguration})
we obtain a random point process on $\left\{1,\ldots,p\right\}\times\Z$. Set $n=B-\pi_1$, $q=e^{-\epsilon}$, and define
$$
x_j^k=e^{-\epsilon\lambda_j^{(\alpha_k)}},\; k=1,\ldots, p;\;j=1,\ldots,n.
$$
\begin{thm}\label{MainTheorem}As $\epsilon\rightarrow 0$, the point process formed by configurations
$$
\left\{\left(k,x_j^k\right)\vert k=1,\ldots,p;\;j=1,\ldots,n\right\}
$$
converges to the product matrix process
associated with truncated unitary matrices, described in Section \ref{SectionTruncatedProcess}, and defined by probability distribution (\ref{TruncatedProcessJD}). The parameters of the relevant product matrix process are given by
\begin{itemize}
  \item $n=B-\pi_1$.
  \item $m_k=A+B+1-\beta_{2k}$ for $1\leq k\leq l$, and $m_{l+k}=A+B+1-\alpha_k$ for $1\leq k\leq p-1$.
  \item $\nu_1=A+\pi_1+1-\alpha_1$, $\nu_k=A+\pi_1+1-\beta_{2k-1}$ for $2\leq k\leq l$, and $\nu_{l+k}=A+\pi_1+1-\alpha_{k+1}$ for $1\leq k\leq p-1$.
 \end{itemize}
\end{thm}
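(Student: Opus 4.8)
The plan is to obtain Theorem \ref{MainTheorem} as an immediate consequence of two facts already at our disposal: Proposition \ref{PropositionEffectiveSchurMeasure}, which identifies the marginal of the $q^{\Volume(\Pi)}$-weighted skew plane partition on the diagonal slices $\lambda^{(\alpha_1)},\ldots,\lambda^{(\alpha_p)}$ with the explicit Schur process \eqref{EffectiveSchurMeasure1} built from geometric (principal) specializations, and Proposition \ref{PropositionGeneralLimit}, which shows that a Schur process of exactly that shape converges, as $\epsilon\to0$ under the rescaling \eqref{rconfigurations11}, to the product matrix process with truncated unitary matrices. Thus the only work is to put \eqref{EffectiveSchurMeasure1} into the canonical form \eqref{SchurProcess} after the substitution $q=e^{-\epsilon}$ and to read off the parameters $n$, $m_j$, $\nu_j$.

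Concretely, I would set $q=e^{-\epsilon}$, relabel $\lambda^{(k)}:=\lambda^{(\alpha_k)}$ for $k=1,\ldots,p$, and insert auxiliary diagrams $\mu^{(1)},\ldots,\mu^{(p-1)}$ with trivial negative specializations $\varrho_1^-,\ldots,\varrho_{p-1}^-$, which forces $\mu^{(j)}=\lambda^{(j)}$ and turns \eqref{SchurProcess} into the product $s_{\lambda^{(1)}}(\varrho_0^+)\prod_{k=1}^{p-1}s_{\lambda^{(k+1)}/\lambda^{(k)}}(\varrho_k^+)\,s_{\lambda^{(p)}}(\varrho_p^-)$; note that with this relabeling the two chains are both increasing, so no order reversal is needed. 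Then one matches factor by factor. The factor $s_{\lambda^{(\alpha_p)}}(1,q,\ldots,q^{B-\pi_1-1})$ gives $\varrho_p^-=(1,e^{-\epsilon},\ldots,e^{-(n-1)\epsilon})$ with $n=B-\pi_1$, which is \eqref{negativespecialization}. The factor $s_{\lambda^{(\alpha_{k+1})}/\lambda^{(\alpha_k)}}(q^{A+\pi_1+2-\alpha_{k+1}},\ldots,q^{A+\pi_1+1-\alpha_k})$ is a one-block geometric specialization whose first and last exponents give $\nu_{l+k}=A+\pi_1+1-\alpha_{k+1}$ and $m_{l+k}-n=A+\pi_1+1-\alpha_k$, i.e.\ $m_{l+k}=A+B+1-\alpha_k$. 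Finally the factor $s_{\lambda^{(\alpha_1)}}(\cdots)$, which is a union of $l$ geometric blocks, gives $\varrho_0^+$; its first block yields $\nu_1=A+\pi_1+1-\alpha_1$, $m_1=A+B+1-\beta_2$, and its $j$-th block ($2\le j\le l$, using $\beta_{2l}=1$) yields $\nu_j=A+\pi_1+1-\beta_{2j-1}$, $m_j=A+B+1-\beta_{2j}$. In every case the shift by $A+\pi_1$ in the exponents of \eqref{EffectiveSchurMeasure1} cancels against $n=B-\pi_1$, reproducing exactly the parameter list in the statement; one also checks that the number of variables in each block is consistent (for instance the $k$-th skew block has $\alpha_{k+1}-\alpha_k=m_{l+k}-n-\nu_{l+k}$ variables), which is automatic from the same arithmetic.

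Once the two Schur processes are identified, Theorem \ref{MainTheorem} follows directly from Proposition \ref{PropositionGeneralLimit}: the rescaling $x_j^k=e^{-\epsilon\lambda_j^{(\alpha_k)}}$ in the theorem is literally the rescaling \eqref{rconfigurations11}, and the asserted convergence to the process defined by \eqref{TruncatedProcessJD} is exactly its conclusion. I expect the only genuine point of care to be verifying that the parameters produced above obey the size conditions \eqref{Condition1}--\eqref{Condition2} under which Propositions \ref{PropositionProductTruncatedProcess} and \ref{PropositionGeneralLimit} were established: condition \eqref{Condition2} follows readily from the orderings $\beta_1>\beta_2>\cdots$ and $\alpha_1<\cdots<\alpha_p$ of the boundary labels, whereas \eqref{Condition1}, $m_1\ge 2n+\nu_1$, translates through the dictionary above into a lower bound on $\alpha_1$ and is the delicate one — this is precisely the restriction flagged in Remark \ref{Remark_size_restriction}, and it must be (and is) guaranteed by the hypotheses imposed on $\alpha_1,\ldots,\alpha_p$ in the construction preceding the theorem. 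Beyond this index bookkeeping there is no analytic content left in the proof.
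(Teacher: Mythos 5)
Your proof follows the paper's argument exactly: the paper likewise deduces Theorem \ref{MainTheorem} by setting $q=e^{-\epsilon}$, identifying the measure of Proposition \ref{PropositionEffectiveSchurMeasure} factor by factor with the Schur process of Proposition \ref{PropositionGeneralLimit}, reading off $n$, $m_j$, $\nu_j$, and applying that proposition. The one caveat is that the condition $m_1\ge 2n+\nu_1$, i.e.\ $\alpha_1\ge B-\pi_1+\beta_2$, is \emph{not} implied by the hypotheses $\beta_2\le\alpha_1<\dots<\alpha_p<\beta_1$ stated before the theorem (the paper only records it as an additional restriction in the remark following the theorem, consistent with Remark \ref{Remark_size_restriction}), so your parenthetical ``and is'' overstates things slightly, although you correctly single this out as the delicate point.
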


The truncated unitary matrices forming the product matrix process in Theorem \ref{MainTheorem} are shown schematically on Figure \ref{Fig_matrices}.

\begin{remark} The condition $m_1\geq 2n+\nu_1$ reads as
\begin{equation}
\alpha_1\geq B-\pi_1+\beta_2.
\end{equation}
The conditions $m_j\geq n+\nu_j+1$ (where $2\leq j\leq l$) can be rewritten as
$$
\beta_{2j-1}-\beta_{2j}\geq 0;\;\alpha_1-\beta_2\geq 0;
$$
and
$$
\alpha_{k+1}-\alpha_{k}\geq 0
$$
(where $1\leq k\leq p-1$), and are satisfied automatically.
\end{remark}

\begin{remark}
 The choice of the parameters $\nu_i$, $m_i$ is not unique. Namely, we need to identify the $l$ geometric
 series in  \eqref{varrho0specialization} with $l$ geometric series in the last two lines of \eqref{EffectiveSchurMeasure1} and there are $l!$ ways to do so. Formally, our proof goes through only for the choices which agree with the condition $m_1\ge 2 n+\nu_1$ of \eqref{Condition1}, see, however, Remark \ref{Remark_size_restriction}.
\end{remark}

\begin{example} Consider the particular case in which $\pi=\emptyset$, as in Figure \ref{Fig_Plane_partition}. In this situation $l=1$, $\beta_1=A+1$, $\beta_2=1$, and the parameters
$\alpha_1$, $\ldots$, $\alpha_p$  take values in $\left\{1,\ldots,A\right\}$. As a limit, we obtain the product matrix process with truncated unitary matrices whose parameters
$n$; $m_1$, $\ldots$, $m_p$; $\nu_1$, $\ldots$, $\nu_p$ are given by
\begin{itemize}
  \item $n=B$;
  \item $m_1=A+B$, and $m_{k}=A+B+1-\alpha_{k-1}$ for $2\leq k\leq p$;
  \item $\nu_k=A+1-\alpha_{k}$  for $1\leq k\leq p$.
 \end{itemize}
\end{example}

\begin{center}
\begin{figure}[h]
\setlength{\unitlength}{4pt}
\begin{picture}(80,150)
\put(0,0){\line(1,0){20}}
\put(0,0){\line(0,1){20}}
\put(0,20){\line(1,0){20}}
\put(20,0){\line(0,1){20}}
\put(40,0){\line(1,0){20}}
\put(40,0){\line(0,1){20}}
\put(40,20){\line(1,0){20}}
\put(60,0){\line(0,1){20}}
\put(25,10){\line(1,0){10}}
\put(35,10){\line(-1,1){2.5}}
\put(35,10){\line(-1,-1){2.5}}
\put(5,10){$U_{p+l-1}$}
\put(0,22){$A+B+1-\alpha_{p-1}$}
\put(-23,10){$A+B+1-\alpha_{p-1}$}
\put(45,10){$T_{p+l-1}$}
\put(40,22){$A+B+1-\alpha_{p-1}$}
\put(61,10){$A+B+1-\alpha_{p}$}
\put(10,25){$\vdots$}
\put(0,30){\line(1,0){20}}
\put(0,30){\line(0,1){20}}
\put(0,50){\line(1,0){20}}
\put(20,30){\line(0,1){20}}
\put(40,30){\line(1,0){20}}
\put(40,30){\line(0,1){20}}
\put(40,50){\line(1,0){20}}
\put(60,30){\line(0,1){20}}
\put(25,40){\line(1,0){10}}
\put(35,40){\line(-1,1){2.5}}
\put(35,40){\line(-1,-1){2.5}}
\put(5,40){$U_{l+1}$}
\put(0,52){$A+B+1-\alpha_{1}$}
\put(-23,40){$A+B+1-\alpha_{1}$}
\put(45,40){$T_{l+1}$}
\put(40,52){$A+B+1-\beta_{2l-1}$}
\put(61,40){$A+B+1-\alpha_{2}$}
\put(0,60){\line(1,0){20}}
\put(0,60){\line(0,1){20}}
\put(0,80){\line(1,0){20}}
\put(20,60){\line(0,1){20}}
\put(40,60){\line(1,0){20}}
\put(40,60){\line(0,1){20}}
\put(40,80){\line(1,0){20}}
\put(60,60){\line(0,1){20}}
\put(25,70){\line(1,0){10}}
\put(35,70){\line(-1,1){2.5}}
\put(35,70){\line(-1,-1){2.5}}
\put(5,70){$U_{l}$}
\put(0,82){$A+B+1-\beta_{2l}$}
\put(-23,70){$A+B+1-\beta_{2l}$}
\put(45,70){$T_{l}$}
\put(40,82){$A+B+1-\beta_{2l-3}$}
\put(61,70){$A+B+1-\beta_{2l-1}$}
\put(0,90){\line(1,0){20}}
\put(0,90){\line(0,1){20}}
\put(0,110){\line(1,0){20}}
\put(20,90){\line(0,1){20}}
\put(40,90){\line(1,0){20}}
\put(40,90){\line(0,1){20}}
\put(40,110){\line(1,0){20}}
\put(60,90){\line(0,1){20}}
\put(25,100){\line(1,0){10}}
\put(35,100){\line(-1,1){2.5}}
\put(35,100){\line(-1,-1){2.5}}
\put(5,100){$U_{2}$}
\put(0,112){$A+B+1-\beta_{4}$}
\put(-23,100){$A+B+1-\beta_{4}$}
\put(45,100){$T_{2}$}
\put(40,112){$A+B+1-\alpha_1$}
\put(61,100){$A+B+1-\beta_{3}$}
\put(0,120){\line(1,0){20}}
\put(0,120){\line(0,1){20}}
\put(0,140){\line(1,0){20}}
\put(20,120){\line(0,1){20}}
\put(40,120){\line(1,0){20}}
\put(40,120){\line(0,1){20}}
\put(40,140){\line(1,0){20}}
\put(60,120){\line(0,1){20}}
\put(25,130){\line(1,0){10}}
\put(35,130){\line(-1,1){2.5}}
\put(35,130){\line(-1,-1){2.5}}
\put(5,130){$U_{1}$}
\put(0,142){$A+B+1-\beta_{2}$}
\put(-23,130){$A+B+1-\beta_{2}$}
\put(45,130){$T_{1}$}
\put(40,142){$B-\pi_1$}
\put(61,130){$A+B+1-\alpha_1$}
\put(10,85){$\vdots$}
\end{picture}
\caption{The truncated unitary matrices forming the product matrix process associated with the random skew plane partitions. \label{Fig_matrices}}
\end{figure}
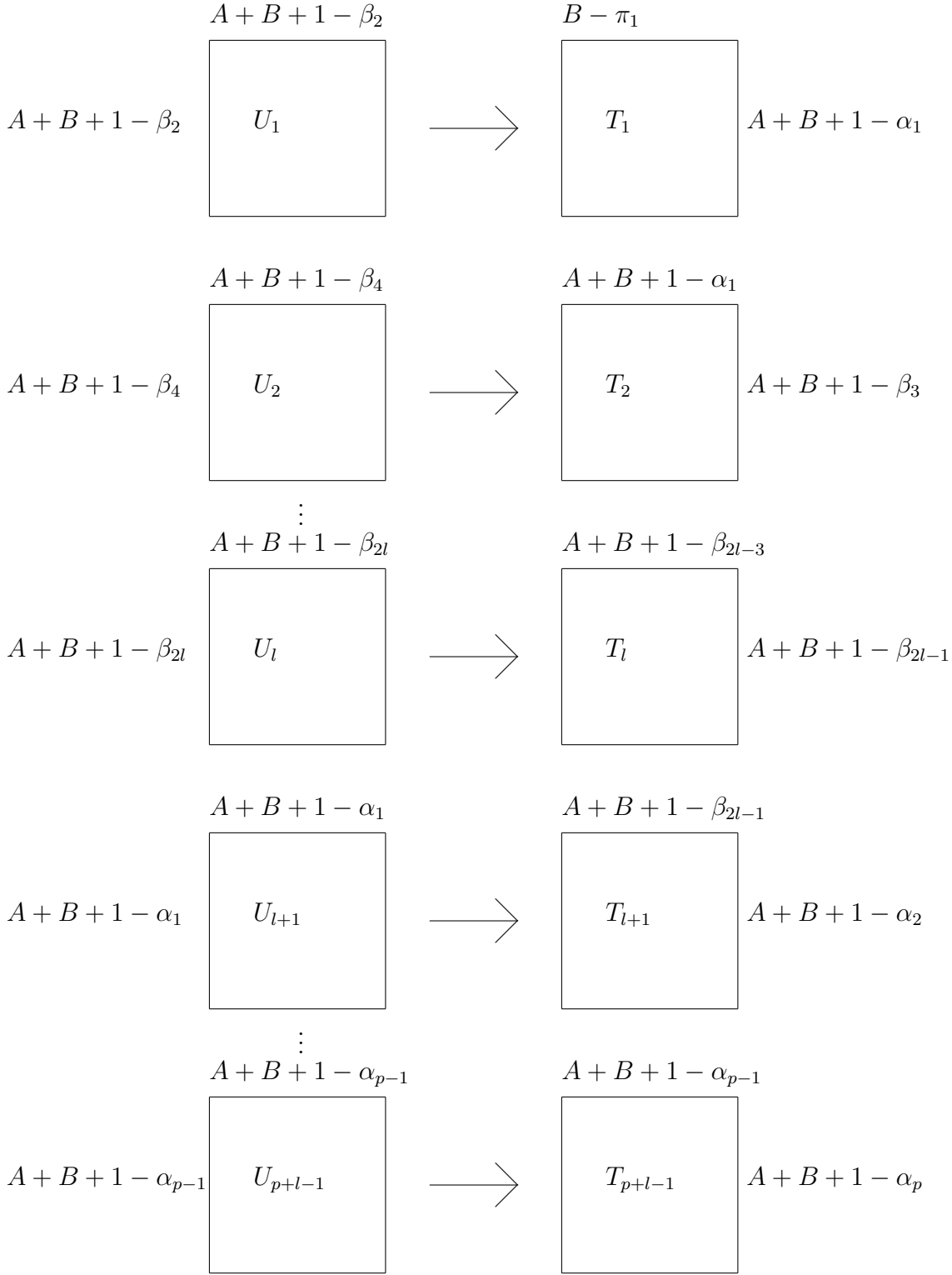
\end{center}
\clearpage

\section{Proof of Proposition \ref{PropositionProductTruncatedProcess}}
In order to prove  Proposition \ref{PropositionProductTruncatedProcess} we will use two results obtained in Kieburg, Kuijlaars, and Stivigny \cite{KieburgKuijlaarsStivigny}.
Namely, Corollary 2.6 in Kieburg, Kuijlaars, and Stivigny \cite{KieburgKuijlaarsStivigny} implies
that the probability distribution of $x^1=\left(x_1^1,\ldots,x_n^1\right)$ (which is the vector of the squared singular values of $T_l\ldots T_1$) can be written as
$$
\const \triangle\left(x^1\right)\det\left[w_k^{(l)}\left(x_j^1\right)\right]_{k,j=1}^ndx^1.
$$
The second result concerns the density of squared singular values for a product of a nonrandom and a truncated unitary matrix.
Namely, assume that $U$  is a Haar distributed unitary matrix of size $m\times m$, and let $T$ be an $(n+\nu)\times l$ truncation of $U$. In addition, let $X$ be a nonrandom matrix of size $l\times n$, and impose the following constraints for the parameters
$n$, $l$, $m$, and $\nu$:
$$
1\leq n\leq l\leq m,\;\;\; m\geq n+\nu+1.
$$
Denote by $\left(x_1,\ldots,x_n\right)$ the vector of squared singular values of $X$, and by $\left(y_1,\ldots,y_n\right)$ the vector of squared singular values of $TX$.
If $x_1,\ldots,x_n$ are pairwise distinct and nonzero, then the vector $\left(y_1,\ldots,y_n\right)$ has density
\begin{equation}
\const\left(\prod\limits_{j=1}^nx_j^{-m+n}\right)\left(\prod\limits_{j=1}^ny_j^{\nu}\right)
\det\left[\left(x_k-y_j\right)_+^{m-n-\nu-1}\right]_{j,k=1}^n
\frac{\triangle(y)}{\triangle(y)},
\end{equation}
see Kieburg, Kuijlaars, and Stivigny \cite{KieburgKuijlaarsStivigny}, Theorem 2.1. Now assume that $X=T_l\ldots T_1$. Applying the results stated above we immediately obtain
that the probability distribution of $\left(x_1^k,\ldots,x_n^k\right)$ is proportional to the product of determinants as in equation  (\ref{TruncatedProcessJD}).
In order to compute the normalization constant we use the Andr$\acute{\text{e}}$ief integral identity (see, for instance, De Bruijn \cite{Bruijn}), and the recurrence relation
$$
w_k^{(l+1)}(y)=\int\limits_0^1\tau^{\nu_{l+1}}\left(1-\tau\right)^{m_{l+1}-n-\nu_{l+1}-1}w_k^{(l)}\left(\frac{y}{\tau}\right)\frac{d\tau}{\tau},
$$
see Kieburg, Kuijlaars, and Stivigny \cite{KieburgKuijlaarsStivigny}, equation (2.22). The integration over $x_1^1$, $\ldots$, $x_n^1$ gives
\begin{equation}
\begin{split}
&\underset{0\leq x_1^1\leq\ldots\leq x_n^1<\infty}{\int\ldots\int}
\det\left[\left(x_j^2\right)^{\nu_{l+1}}\left(x_k^1-x_j^2\right)_+^{m_{l+1}-n-\nu_{l+1}-1}\left(x_k^1\right)^{n-m_{l+1}}\right]_{j,k=1}^n\det\left[w_k^{(l)}\left(x_j^1\right)\right]_{j,k=1}^ndx_1^1\ldots dx_n^1\\
&=\det\left[\int\limits_0^{\infty}
\left(x_j^2\right)^{\nu_{l+1}}\left(t-x_j^2\right)_+^{m_{l+1}-n-\nu_{l+1}-1}t^{n-m_{l+1}}w_k^{(l)}(t)dt\right]_{j,k=1}^n.
\end{split}
\nonumber
\end{equation}
Changing the integration variable $t=\frac{x_j^2}{\tau}$, we rewrite the integral inside the determinant above as
\begin{equation}
\begin{split}
&\left(x_j^2\right)^{\nu_{l+1}}\int\limits_0^1\left(\frac{x_j^2}{\tau}-x_j^2\right)^{m_{l+1}-n-\nu_{l+1}-1}
\frac{\left(x_j^2\right)^{n-m_{l+1}}}{\tau^{n-m_{l+1}}}w_k^{(l)}\left(\frac{x_j^2}{\tau}\right)\frac{x_j^2}{\tau^2}d\tau\\
&=\int\limits_0^1\tau^{\nu_{l+1}}\left(1-\tau\right)^{m_{l+1}-n-\nu_{l+1}-1}w_k^{(l)}\left(\frac{x_j^2}{\tau}\right)\frac{d\tau}{\tau}=
w_k^{(l+1)}\left(x_j^2\right).
\end{split}
\end{equation}

 As a result of integration over the variables $x_1^1$, $\ldots$, $x_n^1$; $\ldots$; $x_1^{p-1}$, $\ldots$, $x_n^{p-1}$ we find
$$
Z_{n,p+l}=\int\limits_{0\leq x_1\leq\ldots\leq x_n\leq\infty}\det\left(x_k^{j-1}\right)_{j,k=1}^n\det\left(w_k^{(l+p-1)}\left(x_j\right)\right)_{j,k=1}^n
dx_1\ldots dx_n.
$$
Applying the Andr$\acute{\text{e}}$ief integral identity again, we obtain
\begin{equation}
\begin{split}
&Z_{n,p+l}=\det\left[\int\limits_0^{\infty}x^{j-1}w_k^{(l+p-1)}\left(x\right)dx\right]_{j,k=1}^n\\
&=\left(c_{l+p-1}\right)^n\det\left[\int\limits_0^{\infty}x^{j-1}
G_{l+p-1,l+p-1}^{l+p-1,0}\left(\begin{array}{cccc}
                                     m_{l+p-1}-n, & \ldots, & m_2-n, & m_1-2n+k \\
                                     \nu_{l+p-1}, & \ldots, & \nu_2, & \nu_1+k-1
                                   \end{array}
\biggl|x\right)dx\right]_{j,k=1}^n.
\end{split}
\nonumber
\end{equation}
The integral inside the determinant can be computed explicitly in terms of Gamma functions.
Namely, formula (5.6.1.1) in Luke \cite{Luke} gives
\begin{equation}
\begin{split}
&\int\limits_0^{\infty}x^jG_{l+p-1,l+p-1}^{l+p-1,0}\left(\begin{array}{cccc}
                                                          m_{l+p-1} & \ldots & m_2-n & m_1-2n+k \\
                                                          \nu_{l+p-1} & \ldots & \nu_2 & \nu_1+k-1
                                                        \end{array}
\biggl| x\right)dx\\
&=\frac{\prod_{k=2}^{l+p-1}\Gamma\left(\nu_k+j\right)\Gamma\left(\nu_1+k-1+j\right)}{\prod_{k=2}^{l+p-1}\Gamma\left(m_k-n+j\right)
\Gamma\left(m_1-2n+k+j\right)},
\end{split}
\end{equation}
and we find
\begin{equation}
\begin{split}
&Z_{n,p+l}=\left[\Gamma\left(m_1-2n-\nu_1+1\right)\right]^n\prod\limits_{j=2}^{l+p-1}\left[\Gamma\left(m_j-n-\nu_j\right)\right]^n\\
&\times \frac{\prod_{k=2}^{l+p-1}\prod_{j=1}^n\Gamma\left(\nu_k+j\right)}{\prod_{k=2}^{l+p-1}\prod_{j=1}^n\Gamma\left(m_k-n+j\right)}
\det\left[\frac{\Gamma\left(\nu_1+k+j-1\right)}{\Gamma\left(m_1-2n+k+j\right)}\right]_{j,k=1}^n.
\end{split}
\nonumber
\end{equation}
The following formula  for the determinant  with Gamma functions entries is known
\begin{equation}
\det\left[\frac{\Gamma(c+i+j)}{\Gamma(d+i+j)}\right]_{i,j=0}^{n-1}=\prod\limits_{j=0}^{n-1}j!\frac{\Gamma(d-c+j)}{\Gamma(d-c)}\frac{\Gamma(c+j)}{\Gamma(d+n-1+j)},
\end{equation}
see equation (4.11) in Normand \cite{Normand}. Using this formula we obtain
$$
\det\left[\frac{\Gamma\left(\nu_1+k+j-1\right)}{\Gamma\left(m_1-2n+k+j\right)}\right]_{j,k=1}^n=\prod\limits_{j=1}^n\frac{\Gamma\left(\nu_1+j\right)}{\Gamma\left(m_1-n+j\right)}
\frac{\prod_{j=1}^n\Gamma(j)\Gamma\left(m_1-2n-\nu_1+j\right)}{\left(\Gamma\left(m_1-2n-\nu_1+1\right)\right)^n}.
$$
This gives
\begin{equation}
\begin{split}
&Z_{n,p+l}=\prod\limits_{j=2}^{l+p-1}\left[\Gamma\left(m_j-n-\nu_j\right)\right]^n\\
&\times \frac{\prod_{k=1}^{l+p-1}\prod_{j=1}^n\Gamma\left(\nu_k+j\right)}{\prod_{k=1}^{l+p-1}\prod_{j=1}^n\Gamma\left(m_k-n+j\right)}
\prod\limits_{j=1}^n\Gamma(j)\Gamma\left(m_1-2n-\nu_1+j\right).
\nonumber
\end{split}
\end{equation}
Since
$$
\prod\limits_{j=1}^n\frac{\Gamma\left(\nu+j\right)}{\Gamma(m-n+j)}=\prod\limits_{j=1}^{m-n-\nu}\frac{\Gamma(\nu+j)}{\Gamma(\nu+j+n)}
=\prod\limits_{j=1}^{m-n-\nu}\frac{1}{(j+\nu)_n},
$$
we can rewrite the normalization constant $Z_{n,p+l}$ in the same form as in the statement of the Proposition \ref{PropositionProductTruncatedProcess}.
\qed

\section{Limits of symmetric functions}\label{SectionLimitsSymmetricFunctions}
The aim of this Section is to obtain certain asymptotic formulae for the Schur functions, and for the skew Schur functions, see Proposition
\ref{PropositionLimitSchur} and Proposition \ref{LimitSkewSchur} below. We will need these formulae in the proofs of our main results (Proposition
\ref{PropositionGeneralLimit} and Theorem \ref{MainTheorem}).
\begin{prop}\label{PropositionLimitSchur} Let $\lambda$ be a Young diagram with $N$ rows, and assume that $M\geq N$. Set
$$
\lambda_1=-\frac{1}{\epsilon}\log r_1,\ldots,\lambda_N=-\frac{1}{\epsilon}\log r_N,
$$
where $0\leq r_1\leq\ldots\leq r_N<1$. Then
\begin{equation}\label{Equation41}
\begin{split}
&\underset{\epsilon\rightarrow 0+}{\lim}\left\{\epsilon^{MN-\frac{N(N+1)}{2}}s_{\lambda}\left(e^{-(1+\nu)\epsilon},\ldots,e^{-(M+\nu)\epsilon} \right)\right\}\\
&=\frac{1}{\prod\limits_{j=1}^N\Gamma(M-N+j)}\prod\limits_{i=1}^Nr_i^{1+\nu}\left(1-r_i\right)^{M-N}\prod\limits_{1\leq i<j\leq N}\left(r_j-r_i\right).
\end{split}
\end{equation}
The convergence is uniform in $r_j$'s.
\end{prop}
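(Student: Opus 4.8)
The plan is to use the Weyl character formula (the bialternant definition of the Schur polynomial) and extract the leading-order asymptotics of each factor as $\epsilon\to 0+$. Write $x_i=e^{-(i+\nu)\epsilon}$ for $i=1,\ldots,M$, so that
$$
s_\lambda(x_1,\ldots,x_M)=\frac{\det\left[x_i^{\lambda_j+M-j}\right]_{i,j=1}^M}{\det\left[x_i^{M-j}\right]_{i,j=1}^M},
$$
where we pad $\lambda$ with zeros, $\lambda_{N+1}=\cdots=\lambda_M=0$. The denominator is a Vandermonde determinant in the variables $x_i$, and since $x_i-x_j=e^{-(j+\nu)\epsilon}-e^{-(i+\nu)\epsilon}\sim (i-j)\epsilon$ as $\epsilon\to 0$, it behaves like $\prod_{1\le i<j\le M}(j-i)\cdot\epsilon^{\binom{M}{2}}\cdot(1+o(1))$; more precisely it equals a product of Barnes-type constants times $\epsilon^{\binom M2}$.

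For the numerator I would split the columns into two groups: columns $j=1,\ldots,N$ (coming from the nonzero parts) and columns $j=N+1,\ldots,M$ (coming from the zero parts). For $i=1,\ldots,N$, set $x_i^{\lambda_j}=r_j$ (by the very definition $\lambda_j=-\epsilon^{-1}\log r_j$, but note the substitution makes sense only in row $i=j\le N$; one must be careful because $\lambda_j$ depends on $j$, not $i$). The cleaner route is: in the numerator determinant, the rows are indexed by $i$ and we have entries $x_i^{\lambda_j+M-j}=e^{-(i+\nu)\epsilon(\lambda_j+M-j)}$. For the last $M-N$ columns $\lambda_j=0$ and the entries are $e^{-(i+\nu)(M-j)\epsilon}$, a near-Vandermonde block; for the first $N$ columns the dominant contribution as $\epsilon\to0$ comes from choosing rows $i=1,\ldots,N$ to pair with them. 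A Laplace/generalized-Cauchy-Binet expansion along the first $N$ columns shows that the leading term is the product of (i) the minor on rows $\{1,\ldots,N\}$ and the first $N$ columns, whose $(i,j)$ entry $\to r_j^{?}$... — here one sees that $e^{-(i+\nu)\epsilon\lambda_j}\to 1$ unless $i=$ something, so in fact this minor does NOT factor nicely, and the correct statement is that $e^{-(i+\nu)\epsilon\lambda_j}=r_j^{(i+\nu)/1}$, which equals $r_j^{i+\nu}$ only after absorbing $\epsilon$; since $\lambda_j\epsilon=-\log r_j$ is $O(1)$, actually $e^{-(i+\nu)\epsilon\lambda_j}=r_j^{\,i+\nu}$ exactly. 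Good — so the first-$N$-columns minor on rows $\{i_1<\cdots<i_N\}$ is $\det[r_j^{\,i_a+\nu}\cdot e^{-(i_a+\nu)(M-j)\epsilon}]$, and as $\epsilon\to0$ this is $\prod_a r_?$... The point is that only the choice $i_a=a$ survives with the right power of $\epsilon$, because the complementary minor (rows $\{N+1,\ldots,M\}$, last $M-N$ columns) is a Vandermonde of order $\epsilon^{\binom{M-N}{2}}$, maximized by taking the largest available row indices.

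So the key steps, in order: (1) write $s_\lambda$ as a ratio of determinants; (2) compute the $\epsilon\to0$ asymptotics of the denominator Vandermonde, getting the exact power $\epsilon^{\binom M2}$ and an explicit constant; (3) apply Cauchy–Binet / Laplace expansion to the numerator along its first $N$ columns, argue that the dominant term (in powers of $\epsilon$) corresponds to pairing the nonzero-part columns with rows $1,\ldots,N$ and the zero-part columns with rows $N+1,\ldots,M$; (4) evaluate the resulting two blocks: the first block gives $\prod_{i=1}^N r_i^{1+\nu}$ (from $r_i^{\,i+\nu}$ after a shift absorbed into the Vandermonde bookkeeping — this needs care) times a Vandermonde $\prod_{i<j\le N}(r_j-r_i)$ and a $(1-r_i)^{M-N}$ factor coming from summing the zero-column block geometric-series contributions; (5) collect powers of $\epsilon$: the numerator contributes $\epsilon^{\binom{M-N}{2}}$ from the zero-part block and nothing from the $r$-block (it is $O(1)$), the denominator contributes $\epsilon^{\binom M2}$, so the prefactor exponent is $\binom M2-\binom{M-N}{2}=MN-\binom{N+1}{2}$, matching $MN-\frac{N(N+1)}{2}$; (6) assemble the constant, recognizing $\prod_{j=1}^N\Gamma(M-N+j)$ as the ratio of the leftover Barnes constants. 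The uniformity in the $r_j$'s follows because all the error terms are $O(\epsilon)$ with constants depending only on $M,N,\nu$ and the compact range $0\le r_j<1$.

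The main obstacle I anticipate is bookkeeping in step (4): correctly tracking how the power $r_i^{\,i+\nu}$ in the surviving minor recombines with the row-Vandermonde to produce the clean factor $\prod r_i^{1+\nu}\prod_{i<j}(r_j-r_i)$, and simultaneously showing the $(1-r_i)^{M-N}$ factors emerge from the interaction of the $r$-block with the zero-part columns (equivalently, from a Schur-polynomial identity $s_\lambda(r_1,\ldots,r_N,\text{small corrections})$ in the limit, or from a direct determinant manipulation factoring $(1-r_i)$ out of suitable column combinations). An alternative, possibly cleaner, execution of step (4) is to first use $s_\lambda(x_1,\ldots,x_M)=s_\lambda(x_1,\ldots,x_N)\cdot(\text{stuff})$ is false, so instead use the branching/Jacobi–Trudi route: write $s_\lambda(e^{-(1+\nu)\epsilon},\ldots,e^{-(M+\nu)\epsilon})$ via the principal-specialization formula $s_\lambda(1,t,\ldots,t^{M-1})=t^{n(\lambda)}\prod_{(i,j)\in\lambda}\frac{1-t^{M+j-i}}{1-t^{h(i,j)}}$ with $t=e^{-\epsilon}$ and an overall power of $t$ from the shift by $\nu$, then take $\epsilon\to0$ directly; in that approach the hook-length product in the denominator is exactly what produces $\frac{1}{\prod_{j=1}^N\Gamma(M-N+j)}\prod_{i<j}(r_j-r_i)$ in the limit, the $(1-t^{M+j-i})$ factors produce $\prod r_i^{1+\nu}(1-r_i)^{M-N}$, and the power of $\epsilon$ comes from counting boxes. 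I would likely present the proof via this principal-specialization formula, as it sidesteps the Cauchy–Binet dominance argument entirely.
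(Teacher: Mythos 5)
You have landed on the right tool --- the principal specialization of the Schur polynomial, combined with homogeneity to absorb the shift by $\nu$ --- and your power count $\binom{M}{2}-\binom{M-N}{2}=MN-\tfrac{N(N+1)}{2}$ is correct. But both executions you sketch have concrete problems. In your primary route (bialternant plus Laplace expansion along the first $N$ columns), the dominance claim in step (3) is false: for \emph{every} choice of rows $S^c$ the complementary minor on the last $M-N$ columns is a genuine Vandermonde in the $x_i=e^{-(i+\nu)\epsilon}$ with exponents $0,1,\dots,M-N-1$, hence of order $\epsilon^{\binom{M-N}{2}}$ regardless of $S^c$, while the block on the first $N$ columns tends to $\det[r_j^{\,i_a+\nu}]$, which is generically a nonzero $O(1)$ quantity for every $S$. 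So all $\binom{M}{N}$ Laplace terms contribute at the same order in $\epsilon$, no single term dominates, and you are left having to evaluate the full Cauchy--Binet sum --- which is essentially re-deriving the product formula you were trying to avoid.

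The fallback route via the hook--content form $s_\lambda(1,t,\dots,t^{M-1})=t^{n(\lambda)}\prod_{(i,j)\in\lambda}\frac{1-t^{M+j-i}}{1-t^{h(i,j)}}$ also does not give the limit by the factor-by-factor reading you describe: the product runs over $|\lambda|\sim \epsilon^{-1}\sum_i(-\log r_i)$ boxes, so the number of factors diverges; the content product alone is a $q$-Pochhammer-type product with $O(1/\epsilon)$ terms and is exponentially small in $1/\epsilon$ (not $\prod_i r_i^{1+\nu}(1-r_i)^{M-N}$), and ``the power of $\epsilon$ comes from counting boxes'' cannot be literal since the box count depends on the $r_i$ and blows up while the exponent $MN-\tfrac{N(N+1)}{2}$ is fixed. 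The massive cancellation between hook and content factors must be performed \emph{before} the limit, and doing so just reduces the expression to the other half of the same Macdonald example, which is what the paper actually uses: after pulling out $e^{-(1+\nu)\epsilon|\lambda|}=\prod_i r_i^{1+\nu}$ by homogeneity, write
\begin{equation*}
s_\lambda\bigl(1,q,\dots,q^{M-1}\bigr)=q^{\sum_i (i-1)\lambda_i}\prod_{1\le i<j\le M}\frac{1-q^{\lambda_i-\lambda_j+j-i}}{1-q^{j-i}},\qquad q=e^{-\epsilon},
\end{equation*}
a product of exactly $\binom{M}{2}$ factors, each with an individual limit: the pairs $i<j\le N$ give $1-r_i/r_j$ and hence (after absorbing $q^{\sum(i-1)\lambda_i}=\prod_i r_i^{\,i-1}$) the Vandermonde $\prod_{i<j}(r_j-r_i)$; the pairs $i\le N<j$ give $\prod_i(1-r_i)^{M-N}$; the pairs $N<i<j$ cancel against the denominator; and the remaining denominator factors produce $\epsilon^{-(MN-N(N+1)/2)}\prod_{j=1}^N\Gamma(M-N+j)^{-1}$. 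With this form the proposition follows in a few lines, and uniformity is immediate since each of the finitely many factors converges uniformly on $0\le r_1\le\dots\le r_N<1$.
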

\begin{proof}
Homogeneity of the Schur polynomials implies
\begin{equation}\label{Schur1Limit}
s_{\lambda}\left(e^{-(1+\nu)\epsilon},\ldots,e^{-(M+\nu)\epsilon} \right)=e^{-(1+\nu)\epsilon\sum\limits_{i=1}^N\lambda_i}s_{\lambda}\left(1,e^{-\epsilon},\ldots,e^{-(M-1)\epsilon}\right).
\end{equation}
Moreover, the principal specialization of the Schur polynomials (see Macdonald \cite{Macdonald}, $\S$ 3, Example 1) gives
\begin{equation}\label{Zvezda}
s_{\lambda}\left(1,e^{-\epsilon},\ldots,e^{-(M-1)\epsilon}\right)=e^{-\epsilon\sum\limits_{i=1}^N(i-1)\lambda_i}\prod\limits_{1\leq i<j\leq M}
\frac{1-e^{-\epsilon\left(\lambda_i-\lambda_j-i+j\right)}}{1-e^{\epsilon\left(-i+j\right)}}.
\end{equation}
The product in the right-hand side of the expression  above can be written as
\begin{equation}
\begin{split}
&\prod\limits_{1\leq i<j\leq M}
\frac{1-e^{-\epsilon\left(\lambda_i-\lambda_j-i+j\right)}}{1-e^{\epsilon\left(-i+j\right)}}=\prod\limits_{1\leq i<j\leq N}\left(1-e^{-\epsilon\left(\lambda_i-\lambda_j
-i+j\right)}\right)\\
&\times\prod\limits_{i=1}^N\prod\limits_{j=N+1}^M\left(1-e^{-\epsilon\left(\lambda_i-i+j\right)}\right)
\underset{1\leq j\leq M}{\underset{1\leq i\leq N}{\prod\limits_{i<j}}}\frac{1}{1-e^{-\epsilon(-i+j)}}.
\end{split}
\end{equation}
Taking the limit $\epsilon\rightarrow 0+$, and using the fact that
\begin{equation}\label{Plimit}
\underset{1\leq j\leq M}{\underset{1\leq i\leq N}{\prod\limits_{i<j}}}\frac{1}{\epsilon(j-i)}=\frac{1}{\epsilon^{MN-\frac{N(N+1)}{2}}}\frac{1}{\prod\limits_{j=1}^N\Gamma(M-N+j)},
\end{equation}
we obtain the formula in the statement of the Proposition.
\end{proof}
\begin{lem}\label{PropositionHLimit} Let $k$, $m$ be two positive integers. Assume that $0<s,r<1$. If $k=-\frac{\log r}{\epsilon}$, and $m=-\frac{\log s}{\epsilon}$, then we have
\begin{equation}
\label{eq_h_convergence}
\underset{\epsilon\rightarrow 0+}{\lim}\left\{\epsilon^{M-1}h_{k-m}\left(e^{-(1+\nu)\epsilon},e^{-(2+\nu)\epsilon},\ldots,e^{-(M+\nu)\epsilon} \right)\right\}
=\frac{1}{\Gamma(M)}\frac{r^{1+\nu}}{s^{M+\nu}}\left(s-r\right)_+^{M-1},
\end{equation}
where $h_n=s_{(n)}$ is the $n$th complete homogeneous symmetric function, $M\geq 1$, $\nu>0$, and $(s-r)_+=\max\{0,s-r\}$.
The convergence is uniform in $s$, $r$.
\end{lem}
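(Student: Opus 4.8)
The plan is to recognize $h_n$ as the one–row Schur function $s_{(n)}$ and to read the statement off from Proposition~\ref{PropositionLimitSchur} specialized to $N=1$.

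First I would dispose of the degenerate regimes. Since $k=-\frac{\log r}{\epsilon}$ and $m=-\frac{\log s}{\epsilon}$, we have $k-m=\frac{1}{\epsilon}\log\frac{s}{r}$. If $s<r$, then $k-m<0$ for all sufficiently small $\epsilon$, hence $h_{k-m}\equiv 0$, which matches \eqref{eq_h_convergence} because $(s-r)_+=0$. If $s=r$, then $k=m$, $h_0=1$, and $\epsilon^{M-1}h_0\to\mathbf 1_{M=1}$, which again matches \eqref{eq_h_convergence} with the convention $0^0=1$.

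In the main regime $s>r$ we have $k-m\ge 1$ for all small $\epsilon$, so $(k-m)$ is a genuine Young diagram with one row and $h_{k-m}=s_{(k-m)}$. I would then apply Proposition~\ref{PropositionLimitSchur} with $N=1$, with $M$ and $\nu$ as given, and with $\lambda=(k-m)$; here $\lambda_1=k-m$ corresponds to $r_1=e^{-\epsilon(k-m)}=r/s\in(0,1)$. The exponent in that proposition is $MN-\tfrac{N(N+1)}{2}=M-1$, matching the prefactor $\epsilon^{M-1}$, and the Vandermonde product $\prod_{1\le i<j\le 1}(r_j-r_i)$ is empty, so Proposition~\ref{PropositionLimitSchur} yields
\[
\lim_{\epsilon\to 0+}\epsilon^{M-1}h_{k-m}\bigl(e^{-(1+\nu)\epsilon},\dots,e^{-(M+\nu)\epsilon}\bigr)=\frac{1}{\Gamma(M)}\Bigl(\frac{r}{s}\Bigr)^{1+\nu}\Bigl(1-\frac{r}{s}\Bigr)^{M-1}.
\]
Rewriting $1-\frac{r}{s}=\frac{s-r}{s}$ turns the right-hand side into $\frac{1}{\Gamma(M)}\frac{r^{1+\nu}}{s^{M+\nu}}(s-r)^{M-1}$, which is exactly \eqref{eq_h_convergence} since $s-r>0$ in this regime; and the uniformity of the convergence is inherited from the uniformity asserted in Proposition~\ref{PropositionLimitSchur}.

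If one prefers to keep the argument self-contained, the same computation can be done by hand: homogeneity of $h_n$ gives $h_{k-m}(e^{-(1+\nu)\epsilon},\dots,e^{-(M+\nu)\epsilon})=e^{-(1+\nu)(k-m)\epsilon}\,h_{k-m}(1,e^{-\epsilon},\dots,e^{-(M-1)\epsilon})$, the principal specialization formula gives $h_n(1,q,\dots,q^{M-1})=\prod_{i=1}^{M-1}\frac{1-q^{n+i}}{1-q^i}$ with $q=e^{-\epsilon}$, and then $1-e^{-i\epsilon}\sim i\epsilon$ together with $e^{-\epsilon(k-m)}\to r/s$ gives the asserted limit after multiplying through by $\epsilon^{M-1}$. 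Either way there is no substantial obstacle; the only points requiring a little care are the identity $h_n=s_{(n)}$, the fact that Proposition~\ref{PropositionLimitSchur} is stated for a genuine $N$-row diagram (which forces the separate treatment of $k\le m$), the boundary value at $s=r$, and making precise the claimed uniformity near $s=r$, where $r_1=r/s\to 1$ in the notation of Proposition~\ref{PropositionLimitSchur}.
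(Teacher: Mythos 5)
Your proof is correct and, in its self-contained variant, is exactly the paper's argument: homogeneity plus the principal specialization $h_n(1,q,\dots,q^{M-1})=\prod_{j=1}^{M-1}\frac{1-q^{n+j}}{1-q^j}$ together with $1-e^{-j\epsilon}\sim j\epsilon$, with the case $s<r$ disposed of by $h_{k-m}\equiv 0$. Your preliminary reduction to Proposition~\ref{PropositionLimitSchur} with $N=1$ is also legitimate (that proposition is proved independently, so there is no circularity) and merely repackages the same computation.
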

\begin{proof}
We start by noting that the right-hand side of \eqref{eq_h_convergence} has no singularity at $s=0$. Indeed, this follows from the observation that for positive $r$ we have $r^{1+\nu} (s-r)_+^{M-1}\le s^{1+\nu} s^{M-1}=s^{M+\nu}$.

Specializing (\ref{Zvezda}) to $\lambda=(m)$ gives
$$
h_m\left(1,q,\ldots,q^{M-1}\right)=\prod\limits_{j=1}^{M-1}\frac{1-q^{m+j}}{1-q^j}.
$$
Assuming  $m\leq k$ we obtain
\begin{equation}
\begin{split}
&h_{k-m}\left(e^{-(1+\nu)\epsilon},e^{-(2+\nu)\epsilon},\ldots,e^{-(M+\nu)\epsilon}\right)\\
&=e^{-(1+\nu)\epsilon(k-m)}h_{k-m}\left(1,e^{-\epsilon},\ldots,e^{-(M-1)\epsilon}\right)
=e^{-(1+\nu)\epsilon(k-m)}\prod\limits_{j=1}^{M-1}\frac{1-e^{-\epsilon(k-m+j)}}{1-e^{-\epsilon j}}\\
&\simeq\left(\frac{r}{s}\right)^{1+\nu}\prod\limits_{j=1}^{M-1}\frac{1-\frac{r}{s}}{\epsilon j}=\frac{1}{\epsilon^{M-1}}
\frac{1}{\Gamma(M)}\frac{r^{1+\nu}}{s^{M+\nu}}\left(s-r\right)^{M-1},
\end{split}
\end{equation}
where $s\geq r$. If $s<r$, then $m>k$, and $h_{k-m}\equiv 0$.
\end{proof}
\begin{prop}\label{LimitSkewSchur}
Let $\lambda$, $\mu$ be two Young diagrams, $l(\lambda)=l(\mu)=N$. Set
$$
\lambda_1=-\frac{1}{\epsilon}\log r_1,\ldots, \lambda_N=-\frac{1}{\epsilon}\log r_N,
$$
$$
\mu_1=-\frac{1}{\epsilon}\log s_1,\ldots, \mu_N=-\frac{1}{\epsilon}\log s_N,
$$
where $0<r_1\leq r_2\leq\ldots\leq r_N<1$, and $0<s_1\leq s_2\leq\ldots\leq s_N<1$. Then we have
\begin{equation}
\begin{split}
&\underset{\epsilon\rightarrow 0+}{\lim}\left\{\epsilon^{MN-N}s_{\lambda/\mu}\left(e^{-(1+\nu)\epsilon},e^{-(2+\nu)\epsilon},\ldots,e^{-(M+\nu)\epsilon} \right)\right\}\\
&=\frac{1}{\left(\Gamma(M)\right)^N}\frac{\prod_{i=1}^N\left(r_i\right)^{1+\nu}}{\prod_{i=1}^N\left(s_i\right)^{M+\nu}}
\det\left[\left(s_j-r_i\right)^{M-1}_+\right]_{i,j=1}^N.
\end{split}
\end{equation}
The convergence is uniform in $r_j$'s and $s_j$'s.
\end{prop}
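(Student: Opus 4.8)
The plan is to reduce the skew Schur function to a determinant of complete homogeneous symmetric functions via the Jacobi--Trudi identity, and then pass to the $\epsilon\to0^+$ limit entry by entry using Lemma~\ref{PropositionHLimit}.

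First I would write $\varrho_\epsilon=\left(e^{-(1+\nu)\epsilon},e^{-(2+\nu)\epsilon},\ldots,e^{-(M+\nu)\epsilon}\right)$. Since $l(\lambda)=l(\mu)=N$, the Jacobi--Trudi formula (see Macdonald \cite{Macdonald}, Section~I.5) gives
$$
s_{\lambda/\mu}\left(\varrho_\epsilon\right)=\det\left[h_{\lambda_i-\mu_j-i+j}\left(\varrho_\epsilon\right)\right]_{i,j=1}^N,
$$
with the standard convention $h_m\equiv 0$ for $m<0$. Since $MN-N=N(M-1)$, multiplying by $\epsilon^{MN-N}$ and absorbing one factor $\epsilon^{M-1}$ into each of the $N$ rows yields
$$
\epsilon^{MN-N}s_{\lambda/\mu}\left(\varrho_\epsilon\right)=\det\left[\epsilon^{M-1}h_{\lambda_i-\mu_j-i+j}\left(\varrho_\epsilon\right)\right]_{i,j=1}^N,
$$
so it suffices to control each matrix entry separately.

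Next I would compute the entrywise limit. Fix $i,j$ and set $n_{ij}=\lambda_i-\mu_j-i+j$, so that $e^{-\epsilon n_{ij}}=r_i s_j^{-1}e^{\epsilon(i-j)}\to r_i/s_j$. If $r_i>s_j$, then $n_{ij}<0$ for all small $\epsilon$, hence $h_{n_{ij}}\equiv 0$, consistent with $(s_j-r_i)_+^{M-1}=0$. If $r_i<s_j$, then $n_{ij}\ge 0$ for all small $\epsilon$, and I would invoke Lemma~\ref{PropositionHLimit} with, say, $k=n_{ij}+1$ and $m=1$; then $k-m=n_{ij}$, while $r=e^{-\epsilon(n_{ij}+1)}\to r_i/s_j$ and $s=e^{-\epsilon}\to 1$. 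Because the right-hand side of \eqref{eq_h_convergence} is invariant under $(r,s)\mapsto(tr,ts)$ and, for $M\ge 2$, continuous in $(r,s)$, and the convergence in Lemma~\ref{PropositionHLimit} is uniform, this gives
$$
\epsilon^{M-1}h_{\lambda_i-\mu_j-i+j}\left(\varrho_\epsilon\right)\longrightarrow \frac{1}{\Gamma(M)}\,\frac{r_i^{1+\nu}}{s_j^{M+\nu}}\,(s_j-r_i)_+^{M-1},
$$
uniformly in $(r_1,\dots,r_N)$ and $(s_1,\dots,s_N)$ on compacta (away from the diagonals $r_i=s_j$ when $M=1$). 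Since a determinant is a polynomial in its entries, uniform and uniformly bounded convergence of the entries yields convergence of $\epsilon^{MN-N}s_{\lambda/\mu}(\varrho_\epsilon)$ to $\det\!\big[\tfrac{1}{\Gamma(M)}r_i^{1+\nu}s_j^{-(M+\nu)}(s_j-r_i)_+^{M-1}\big]_{i,j=1}^N$; pulling $1/\Gamma(M)$ out of every column, $r_i^{1+\nu}$ out of row $i$, and $s_j^{-(M+\nu)}$ out of column $j$ produces exactly the claimed formula, and uniformity is inherited by multilinearity.

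I expect the only genuine difficulty to be keeping the uniform-convergence claim honest when passing to the limit inside the determinant: Lemma~\ref{PropositionHLimit} is being applied along $\epsilon$-dependent arguments (because of the $O(1)$ shifts $-i+j$), and near configurations with $r_i=s_j$ the kernel $(s_j-r_i)_+^{M-1}$ is only $C^{M-2}$ (and, for $M=1$, genuinely discontinuous), so the uniformity must be read on the region where the limiting expression is continuous. Everything else --- the Jacobi--Trudi reduction, the bookkeeping $\epsilon^{MN-N}=(\epsilon^{M-1})^N$, and the row/column factorization --- is routine.
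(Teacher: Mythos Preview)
Your proposal is correct and follows exactly the paper's approach: the paper's proof consists of the single sentence ``Apply the Jacobi--Trudi formula $s_{\lambda/\mu}=\det(h_{\lambda_i-\mu_j-i+j})_{i,j=1}^N$ together with Lemma~\ref{PropositionHLimit},'' which is precisely what you have unpacked in detail. Your added discussion of the uniform-convergence subtleties (the $O(1)$ index shifts and the behavior near $r_i=s_j$) goes beyond what the paper records, but the strategy is identical.
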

\begin{proof} Apply the Jacobi-Trudi formula (see \cite[Chapter I, (5.4)]{Macdonald}),
$$
s_{\lambda/\mu}=\det\left(h_{\lambda_i-\mu_j-i+j}\right)_{i,j=1}^N,
$$
together with Lemma \ref{PropositionHLimit}.
\end{proof}
\begin{prop}\label{PropositionAsymptoticsSchurGeneral} Let $\lambda=\left(\lambda_1,\ldots,\lambda_n\right)$ be a Young diagram with $n$ rows. Set
$x_1=e^{-\epsilon\lambda_1}$, $\ldots$, $x_n=e^{-\epsilon\lambda_n}$. We have, uniformly for $x_1,\ldots,x_n\in[0,1]$,
\begin{equation}\label{AsymptoticSchurGeneral}
\begin{split}
&\underset{\epsilon\rightarrow 0+}{\lim}\left[\epsilon^{\sum\limits_{k=1}^p\left(m_k-n-\nu_k\right)n-\frac{n(n+1)}{2}}s_{\lambda}\left(e^{-(1+\nu_1)\epsilon},\ldots, e^{-(m_1-n)\epsilon};\ldots;
e^{-(1+\nu_p)\epsilon},\ldots, e^{-(m_p-n)\epsilon}\right)\right]\\
&=\frac{\left[\Gamma\left(m_1-2n-\nu_1+1\right)\right]^n}{\prod\limits_{j=1}^n\Gamma\left(m_1-2n-\nu_1+j\right)}\\
&\times
\det\left[G_{p,p}^{p,0}
\left(\begin{array}{cccc}
        m_p-n+1 & \ldots & m_2-n+1 & m_1-2n+k+1 \\
        \nu_p+1 & \ldots & \nu_{2}+1 & \nu_1+k
      \end{array}
\biggl| x_j\right)\right]_{j,k=1}^n,
\end{split}
\end{equation}
where the parameters $m_j$, $\nu_j$ are those specified in Section \ref{SectionTruncatedProcess}.
\end{prop}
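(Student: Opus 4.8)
The plan is to argue by induction on $p$, peeling off one geometric progression at a time. Write $\varrho_k:=\bigl(e^{-(1+\nu_k)\epsilon},\dots,e^{-(m_k-n)\epsilon}\bigr)$ for the $k$-th progression, which has $M_k:=m_k-n-\nu_k$ terms, and abbreviate $c_q:=\sum_{k=1}^q M_k n-\tfrac{n(n+1)}2$, $C_0:=[\Gamma(m_1-2n-\nu_1+1)]^n\big/\prod_{j=1}^n\Gamma(m_1-2n-\nu_1+j)$, and $\Psi^{(q)}_k(x):=x\,c_q^{-1}w^{(q)}_k(x)$, where $w^{(q)}_k$ and $c_q$ are as in \eqref{TheWeightFunctions}, \eqref{cl}. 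A one-step shift $s\mapsto s-1$ of the contour in \eqref{TheWeightFunctions} shows that $\Psi^{(q)}_k$ is the Meijer $G$-function obtained from $w^{(q)}_k$ by raising all of its parameters by $1$; in particular, for $q=p$ the right-hand side of \eqref{AsymptoticSchurGeneral} is exactly $C_0\det[\Psi^{(p)}_k(x_j)]_{j,k=1}^n$, so it suffices to prove that $\epsilon^{c_p}s_\lambda(\varrho_1,\dots,\varrho_p)\to C_0\det[\Psi^{(p)}_k(x_j)]_{j,k=1}^n$ uniformly for $x_1,\dots,x_n\in[0,1]$. For $p=1$ this is Proposition \ref{PropositionLimitSchur} applied with $N=n$, $M=M_1\ge n$ (the inequality being \eqref{Condition1}) and $\nu=\nu_1$: its power of $\epsilon$ equals $c_1$, and evaluating the $1\times1$ Meijer $G$-function $\Psi^{(1)}_k$ and using the Vandermonde expansion $\det[x_j^k]_{j,k=1}^n=(\prod_j x_j)\prod_{i<j}(x_j-x_i)$ shows that $C_0\det[\Psi^{(1)}_k(x_j)]_{j,k=1}^n$ coincides with the limit furnished by Proposition \ref{PropositionLimitSchur}.

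For the inductive step (from $p-1$ to $p$) I would use the branching identity $s_\lambda(\varrho_1,\dots,\varrho_p)=\sum_{\mu}s_\mu(\varrho_1,\dots,\varrho_{p-1})\,s_{\lambda/\mu}(\varrho_p)$, the $\nu=\emptyset$ case of \eqref{SummationFormula2}, the sum being effectively over $\mu\subseteq\lambda$. Restricting to $\mu$ with exactly $n$ rows (the rest gives a lower-order contribution), put $y_j=e^{-\epsilon\mu_j}$ and, using $c_p-c_{p-1}=M_p n$, write
\[
\epsilon^{c_p}s_\lambda(\varrho_1,\dots,\varrho_p)=\epsilon^{n}\sum_{\mu}\bigl[\epsilon^{c_{p-1}}s_\mu(\varrho_1,\dots,\varrho_{p-1})\bigr]\bigl[\epsilon^{M_p n-n}s_{\lambda/\mu}(\varrho_p)\bigr].
\]
By the inductive hypothesis the first bracket tends, uniformly in $y$, to $C_0\det[\Psi^{(p-1)}_k(y_j)]_{j,k=1}^n$; by Proposition \ref{LimitSkewSchur} (with $M=M_p$, $\nu=\nu_p$, $N=n$) the second tends, uniformly in $x,y$, to $\Gamma(M_p)^{-n}(\prod_i x_i^{1+\nu_p})(\prod_i y_i^{-M_p-\nu_p})\det[(y_j-x_i)_+^{M_p-1}]_{i,j=1}^n$. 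Since the $\mu$-summand is thus a uniformly convergent family of continuous functions of $(y_1,\dots,y_n)$ and the lattice step in each $y_j=e^{-\epsilon\mu_j}$ is of order $\epsilon y_j$, the prefactor $\epsilon^n\sum_\mu$ becomes the Riemann integral $\int_{0<y_1<\dots<y_n<1}(\,\cdot\,)\prod_i dy_i/y_i$.

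Pulling out $C_0\,\Gamma(M_p)^{-n}\prod_i x_i^{1+\nu_p}$ and applying Andr\'eief's identity to the product of the two $n\times n$ determinants against the one-variable measure $y^{-M_p-\nu_p-1}\,dy$, the limit takes the form $C_0\det\bigl[\,x_j^{1+\nu_p}\Gamma(M_p)^{-1}\int_0^1\Psi^{(p-1)}_k(y)(y-x_j)_+^{M_p-1}y^{-M_p-\nu_p-1}\,dy\,\bigr]_{j,k=1}^n$. It then remains to check the one-variable identity
\[
\Psi^{(p)}_k(x)=\frac{x^{1+\nu_p}}{\Gamma(M_p)}\int_0^1\Psi^{(p-1)}_k(y)\,(y-x)_+^{M_p-1}\,y^{-M_p-\nu_p-1}\,dy .
\]
Substituting $\Psi^{(q)}_k=x\,c_q^{-1}w^{(q)}_k$ and then $y=x/\tau$ rewrites its right-hand side as $\tfrac{x}{c_{p-1}\Gamma(M_p)}\int_0^1\tau^{\nu_p}(1-\tau)^{M_p-1}w^{(p-1)}_k(x/\tau)\,d\tau/\tau$, which equals $\tfrac{x}{c_{p-1}\Gamma(M_p)}w^{(p)}_k(x)$ by the recurrence for $w^{(l)}_k$ recorded in the proof of Proposition \ref{PropositionProductTruncatedProcess}; since $c_p/c_{p-1}=\Gamma(m_p-n-\nu_p)=\Gamma(M_p)$ by \eqref{cl}, this is $\tfrac{x}{c_p}w^{(p)}_k(x)=\Psi^{(p)}_k(x)$. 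This closes the induction.

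The step I expect to require the most care is the Riemann-sum passage: one has to supply the uniform convergence together with a uniform integrable majorant so that the discrete sum over $\mu$ genuinely converges to the integral, and to dispose of the boundary effects — the partitions $\mu$ with fewer than $n$ parts, and the endpoint values $x_j\in\{0,1\}$ at which Propositions \ref{PropositionLimitSchur} and \ref{LimitSkewSchur} degenerate. Once that analysis is in place, the algebraic ingredients (the branching rule, Andr\'eief's identity, and the Meijer $G$-function recurrence) are routine.
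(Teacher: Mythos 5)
Your proposal is correct and follows essentially the same route as the paper's own proof: induction on $p$ via the branching rule \eqref{SummationFormula2}, with Propositions \ref{PropositionLimitSchur} and \ref{LimitSkewSchur} supplying the base case and the skew factor, the sum over $\mu$ passing to a Riemann integral, and Andr\'eief's identity combined with the recurrence for $w_k^{(l)}$ closing the induction. The only cosmetic differences are your packaging of the Meijer $G$-function parameter shift into the functions $\Psi_k^{(q)}$ (watch the clash of notation: you use $c_q$ both for the exponent of $\epsilon$ and for the constant of \eqref{cl}), and your explicit flagging of the Riemann-sum/uniformity issues, which the paper treats with the same brevity.
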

\begin{proof} The proof is by induction over $p$. Assume that $p=1$. Then equation (\ref{AsymptoticSchurGeneral}) takes the form
\begin{equation}\label{P531}
\begin{split}
&\underset{\epsilon\rightarrow 0}{\lim}\left[\epsilon^{\left(m_1-n-\nu_1\right)n-\frac{n(n+1)}{2}}
s_{\lambda}\left(e^{-(1+\nu_1)\epsilon},\ldots,e^{-(m_1-n)\epsilon} \right)\right]\\
&=\frac{\left[\Gamma\left(m_1-2n-\nu_1+1\right)\right]^n}{\prod\limits_{j=1}^n\Gamma\left(m_1-2n-\nu_1+j\right)}
\det\left[G_{1,1}^{1,0}\left(\begin{array}{c}
                               m_1-2n+k+1 \\
                               \nu_1+k
                             \end{array}
\biggl|x_j\right)\right]_{j,k=1}^n.
\end{split}
\end{equation}
As follows from equations (2.20)-(2.25) in Kieburg, Kuijlaars, and Strivigny \cite{KieburgKuijlaarsStivigny},
$$
G_{1,1}^{1,0}\left(\begin{array}{c}
                               m_1-2n+k+1 \\
                               \nu_1+k
                             \end{array}
\biggl|x\right)=\frac{1}{\Gamma\left(m_1-2n-\nu_1+1\right)}x^{\nu_1+k}(1-x)^{m_1-2n-\nu_1}.
$$
We see that equation (\ref{P531}) turns into equation (\ref{Equation41}) (with $N=n$, $M=m_1-n-\nu$), and conclude that Proposition \ref{PropositionAsymptoticsSchurGeneral}
holds true for $p=1$.

Assume that the statement of Proposition \ref{PropositionAsymptoticsSchurGeneral} holds true for certain natural $p$. Let us prove this statement for $p+1$.
Equation (\ref{SummationFormula2}) implies
\begin{equation}\label{ConvolutionOfSchurFunctions}
\begin{split}
&s_{\lambda}\left(e^{-\left(1+\nu_1\right)\epsilon},\ldots, e^{-\left(m_1-n\right)\epsilon};\ldots;
e^{-\left(1+\nu_{p+1}\right)\epsilon},\ldots, e^{-\left(m_{p+1}-n\right)\epsilon}\right)\\
&=\sum\limits_{\mu}s_{\lambda/\mu}\left(e^{-\left(1+\nu_{p+1}\right)\epsilon},\ldots, e^{-\left(m_{p+1}-n\right)\epsilon}\right)
s_{\mu}\left(e^{-\left(1+\nu_1\right)\epsilon},\ldots, e^{-\left(m_1-n\right)\epsilon};\ldots,
e^{-\left(1+\nu_{p}\right)\epsilon},\ldots, e^{-\left(m_{p}-n\right)\epsilon}\right).
\end{split}
\end{equation}
Here we can assume that the sum is over Young diagrams $\mu$ with at most $n$ rows. Set
$$
\lambda_j=-\frac{1}{\epsilon}\log x_j, \; 1\leq j\leq n.
$$
If
$$
\mu_j=-\frac{1}{\epsilon}\log y_j, \; 1\leq j\leq n,
$$
then Proposition \ref{LimitSkewSchur} (with $N=n$, $M=m_{p+1}-n-\nu_{p+1}$, and $\nu=\nu_{p+1}$) implies that
$$
\epsilon^{-n+\left(m_{p+1}-n-\nu_{p+1}\right)n}s_{\lambda/\mu}\left(e^{-\left(1+\nu_{p+1}\right)\epsilon},\ldots, e^{-\left(m_{p+1}-n\right)\epsilon}\right)
$$
converges to
$$
\frac{1}{\left[\Gamma\left(m_{p+1}-n-\nu_{p+1}\right)\right]^n}\frac{\prod_{i=1}^n\left(x_i\right)^{1+\nu_{p+1}}}{\prod_{i=1}^n\left(y_i\right)^{m_{p+1}-n}}
\det\left[\left(y_j-x_i\right)_+^{m_{p+1}-n-\nu_{p+1}-1}\right]_{i,j=1}^n,
$$
and the convergence is uniform in $x_j$'s and $y_j$'s.  Moreover, by our assumption
$$
\epsilon^{\sum\limits_{k=1}^p\left(m_k-n-\nu_k\right)n-\frac{n(n+1)}{2}}s_{\mu}\left(e^{-\left(1+\nu_1\right)\epsilon},\ldots, e^{-\left(m_1-n\right)\epsilon};\ldots,
e^{-\left(1+\nu_{p}\right)\epsilon},\ldots, e^{-\left(m_{p}-n\right)\epsilon}\right)
$$
converges to
$$
\frac{\left[\Gamma\left(m_1-2n-\nu_1+1\right)\right]^n}{\prod\limits_{j=1}^n\Gamma\left(m_1-2n-\nu_1+j\right)}
\det\left[G_{p,p}^{p,0}
\left(\begin{array}{cccc}
        m_p-n+1 & \ldots & m_2-n+1 & m_1-2n+k+1 \\
        \nu_p+1 & \ldots & \nu_{2}+1 & \nu_1+k
      \end{array}
\biggl| y_j\right)\right]_{j,k=1}^n,
$$
uniformly for $y_1,\ldots, y_n\in[0,1]$. Since $d\mu_j=-\frac{dy_j}{\epsilon y_j}$, we conclude that the right-hand side
of equation (\ref{ConvolutionOfSchurFunctions}) multiplied by
$$
\epsilon^{\sum\limits_{k=1}^{p+1}\left(m_k-n-\nu_k\right)n-\frac{n(n+1)}{2}}
$$
converges to
\begin{equation}
\begin{split}
&\frac{\left[\Gamma\left(m_1-2n-\nu_1+1\right)\right]^n}{\prod\limits_{j=1}^n\Gamma\left(m_1-2n-\nu_1+j\right)}
\frac{1}{\left[\Gamma\left(m_{p+1}-n-\nu_{p+1}\right)\right]^n}
\prod\limits_{i=1}^n\left(x_i\right)^{1+\nu_{p+1}}\\
&\times\underset{0\leq y_1\leq\ldots\leq y_n\leq 1}{\int\ldots\int}
\prod\limits_{i=1}^n\left(y_i\right)^{n-m_{p+1}}\det\left[\left(y_j-x_i\right)_+^{m_{p+1}-n-\nu_{p+1}-1}\right]_{i,j=1}^n\\
&\times\det\left[G_{p,p}^{p,0}
\left(\begin{array}{cccc}
        m_p-n+1 & \ldots & m_2-n+1 & m_1-2n+k+1 \\
        \nu_p+1 & \ldots & \nu_{p-1}+1 & \nu_1+k
      \end{array}
\biggl| y_j\right)\right]_{j,k=1}^n\frac{dy_1}{y_1}\ldots\frac{dy_n}{y_n},
\end{split}
\nonumber
\end{equation}
and the convergence is uniform in $x_1,\ldots,x_n$. The expression above can be rewritten as
\begin{equation}\label{EABOVE}
\begin{split}
&\frac{1}{\prod\limits_{j=1}^n\Gamma\left(m_1-2n-\nu_1+j\right)\prod\limits_{j=2}^{p+1}\left[\Gamma\left(m_j-n-\nu_j\right)\right]^n}\\
&\times\underset{0\leq y_1\leq\ldots\leq y_n\leq 1}{\int\ldots\int}dy_1\ldots dy_n
\prod\limits_{i=1}^n\left(y_i\right)^{n-m_{p+1}-1}\det\left[x_k^{1+\nu_p}\left(y_j-x_k\right)_+^{m_{p+1}-n-\nu_{p+1}-1}\right]_{k,j=1}^n\det\left[w_k^{(p)}\left(y_j\right)\right]_{j,k=1}^n,
\end{split}
\nonumber
\end{equation}
where we have used equations (\ref{TheWeightFunctions}), (\ref{cl}) to rewrite the Meijer $G$-function in terms of the corresponding weight function $w_{k}^{(p)}(y)$.
By the Andr$\acute{\mbox{e}}$ief  identity, and by the same calculations as in the proof of Proposition \ref{PropositionProductTruncatedProcess} it can be shown that
(\ref{EABOVE}) is equal to
$$
\frac{1}{\prod\limits_{j=1}^n\Gamma\left(m_1-2n-\nu_1+j\right)
\prod\limits_{j=2}^{p+1}\left[\Gamma\left(m_j-n-\nu_j\right)\right]^n}\det\left[x_jw_k^{(p+1)}\left(x_j\right)\right]_{j,k=1}^n.
$$
Rewriting the weight function $w_k^{(p+1)}$ in terms of the corresponding Meijer $G$-function we obtain that
\begin{equation}
\begin{split}
&\underset{\epsilon\rightarrow 0+}{\lim}\left[\epsilon^{\sum\limits_{k=1}^{p+1}\left(m_k-n-\nu_k\right)n-\frac{n(n+1)}{2}}s_{\lambda}\left(e^{-(1+\nu_1)\epsilon},\ldots, e^{-(m_1-n)\epsilon};\ldots;
e^{-(1+\nu_{p+1})\epsilon},\ldots, e^{-(m_{p+1}-n)\epsilon}\right)\right]\\
&=\frac{\left[\Gamma\left(m_1-2n-\nu_1+1\right)\right]^n}{\prod\limits_{j=1}^n\Gamma\left(m_1-2n-\nu_1+j\right)}\\
&\times
\det\left[G_{p+1,p+1}^{p+1,0}
\left(\begin{array}{cccc}
        m_{p+1}-n+1 & \ldots & m_2-n+1 & m_1-2n+k+1 \\
        \nu_{p+1}+1 & \ldots & \nu_{2}+1 & \nu_1+k
      \end{array}
\biggl| x_j\right)\right]_{j,k=1}^n,
\end{split}
\end{equation}
uniformly for $x_1,\ldots,x_n\in[0,1]$.
\end{proof}
\section{Convergence of the Schur process to the product matrix process with truncated unitary matrices. Proof of Proposition \ref{PropositionGeneralLimit}}
Now we begin to investigate the convergence of the Schur process to the product matrix process with truncated unitary matrices.
We start with the case where the initial conditions are defined by a single truncated matrix $T_1$, see Proposition \ref{TheoremConvergenceSchurProcess}
below. Then (using Proposition  \ref{PropositionAsymptoticsSchurGeneral}) we generalize
Proposition \ref{TheoremConvergenceSchurProcess} to the case where the initial conditions are specified by a product of $l$ truncated matrices $T_l\ldots T_1$, and prove
Proposition \ref{PropositionGeneralLimit}. We remark that, in principle, the second part can be avoided, as the general $\ell$ case can be obtained from the $\ell=1$ case by restriction of a distribution to a subset of matrices. In particular, in this way the consistency of Proposition \ref{PropositionProductTruncatedProcess} between different values of $\ell$ can be used to produce an alternative proof of Proposition \ref{PropositionAsymptoticsSchurGeneral}.

\medskip

Consider the Schur process defined in Section \ref{TheSchurProcesses}. Define the specializations
$\varrho_0^+$, $\ldots$, $\varrho_{p-1}^+$, $\varrho_1^{-}$, $\ldots$, $\varrho_{p}^{-}$ of the algebra of symmetric functions as follows
\begin{itemize}
  \item The specialization $\varrho_{p}^{-}$ is defined by
  $$
  \varrho_{p}^{-}=\left(1,e^{-\epsilon},\ldots, e^{-(n-1)\epsilon}\right).
  $$
  All other $\varrho_1^{-}$, $\ldots$, $\varrho_{p-1}^{-}$ are trivial.
  \item The  specializations $\varrho_0^+$, $\ldots$, $\varrho_{p-1}^+$ are defined by
  $$
  \varrho_{j}^+=\left(e^{-\left(1+\nu_j\right)\epsilon},e^{-\left(2+\nu_j\right)\epsilon},\ldots,e^{-\left(m_j-n\right)\epsilon}\right),\; 0\leq j\leq p-1.
  $$
\end{itemize}
With these specializations the Schur process lives on the point configurations
\begin{equation}
\left\{\left(1,\lambda_i^{(1)}-i\right)\right\}_{i=1}^{n}\cup\ldots\cup\left\{\left(1,\lambda_i^{(p)}-i\right)\right\}_{i=1}^{n}.
\end{equation}
Set
\begin{equation}\label{rconfigurations}
\begin{split}
x_j^k=e^{-\epsilon\lambda_j^{(k)}},\; k=1,\ldots,p;\; j=1,\ldots,n.
\end{split}
\end{equation}
The above Schur process induces a point process on $\left\{1,\ldots,n\right\}\times\R_{>0}$, and this process
is formed by  the configurations
\begin{equation}\label{rconfigurations1}
\left\{\left(k,x_j^k\right)\biggl|k=1,\ldots,p;\; j=1,\ldots,n\right\}.
\end{equation}
\begin{prop}\label{TheoremConvergenceSchurProcess}
As $\epsilon\rightarrow 0$, the point process formed by configurations (\ref{rconfigurations1}) converges to the product matrix process
associated with truncated unitary matrices, and defined in Section \ref{SectionTruncatedProcess}. The initial conditions of this process are defined by the matrix
$T_1$ (which is the truncation of $U_1$).
\end{prop}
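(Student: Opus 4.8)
The plan is to push forward the measure \eqref{SchurProcess}, with the stated specializations, to the variables $x_j^k=e^{-\epsilon\lambda_j^{(k)}}$, and to check that the resulting joint density converges as $\epsilon\to0$ to the density \eqref{TruncatedProcessJD} of Proposition \ref{PropositionProductTruncatedProcess} (with $l=1$); convergence of the point processes, and of all their correlation functions, then follows. First I would simplify the process: since $\varrho_1^-,\dots,\varrho_{p-1}^-$ are trivial, $s_{\lambda^{(r)}/\mu^{(r)}}(\varrho_r^-)=0$ unless $\mu^{(r)}=\lambda^{(r)}$, so the Schur process collapses to a measure on chains $\emptyset\subset\lambda^{(1)}\subset\cdots\subset\lambda^{(p)}$ with weight proportional to $s_{\lambda^{(1)}}(\varrho_0^+)\prod_{r=1}^{p-1}s_{\lambda^{(r+1)}/\lambda^{(r)}}(\varrho_r^+)\,s_{\lambda^{(p)}}(\varrho_p^-)$; because $\varrho_p^-$ has $n$ variables, every $\lambda^{(k)}$ has at most $n$ rows almost surely. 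In the $x$-coordinates (ordered $0<x_1^k\le\cdots\le x_n^k\le1$) the prelimit measure is supported on a grid of mesh $\sim\epsilon x_j^k$, so its density with respect to Lebesgue measure at $(x_j^k)$ equals the Schur weight multiplied by $\prod_{k,j}(\epsilon x_j^k)^{-1}\,(1+o(1))$.

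I would then run the asymptotics separately on numerator and normalization. For the numerator: Proposition \ref{PropositionLimitSchur} applied to $s_{\lambda^{(1)}}(\varrho_0^+)$ (with $N=n$, $M=m_1-n-\nu_1\ge n$ by \eqref{Condition1}) and to the principal specialization $s_{\lambda^{(p)}}(\varrho_p^-)$ (with $M=N=n$), together with Proposition \ref{LimitSkewSchur} applied to each $s_{\lambda^{(r+1)}/\lambda^{(r)}}(\varrho_r^+)$ (with $N=n$, $M=m_{r+1}-n-\nu_{r+1}\ge1$ by \eqref{Condition2}), gives, after multiplication, a single power $\epsilon^{-\,n\sum_{k=1}^p(m_k-n-\nu_k)+pn}$ times a product of $\triangle(x^1)$, $\triangle(x^p)$, the $p-1$ minors $\det[(x_j^r-x_i^{r+1})_+^{m_{r+1}-n-\nu_{r+1}-1}]$ with their rank-one $x$-prefactors, and the factor $\prod_i(x_i^1)^{1+\nu_1}(1-x_i^1)^{m_1-2n-\nu_1}$, all divided by $\prod_{j=1}^n\Gamma(m_1-2n-\nu_1+j)\cdot\prod_{j=1}^n\Gamma(j)\cdot\prod_{k=2}^p\bigl(\Gamma(m_k-n-\nu_k)\bigr)^n$, uniformly for $x_j^k\in[0,1]$. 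For the normalization: summing \eqref{SummationFormula2} along the chain and then applying \eqref{SummationFormula1} (equivalently the Cauchy identity $\sum_\lambda s_\lambda(\varrho)s_\lambda(\varrho')=H(\varrho;\varrho')$) yields $Z_{\Schur}=\prod_{j=0}^{p-1}H(\varrho_j^+;\varrho_p^-)$, a finite product of factors $(1-e^{-(a+b)\epsilon})^{-1}\sim((a+b)\epsilon)^{-1}$; hence $Z_{\Schur}\sim\epsilon^{-n\sum(m_k-n-\nu_k)}\prod_{k=1}^p\prod_{j_k=1}^{m_k-n-\nu_k}\bigl((j_k+\nu_k)_n\bigr)^{-1}$, which, on comparison with the explicit $Z_{n,p+1}$ of Proposition \ref{PropositionProductTruncatedProcess}, shows that $Z_{\Schur}^{-1}$ contributes exactly $\epsilon^{+n\sum(m_k-n-\nu_k)}$ times those same Gamma factors divided by $Z_{n,p+1}$.

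Assembling the three inputs, all powers of $\epsilon$ cancel and the prelimit density converges, uniformly on $[0,1]^{np}$, to $Z_{n,p+1}^{-1}\prod_{k,j}(x_j^k)^{-1}$ times the $x$-product above. Distributing the Jacobian $\prod_{k,j}(x_j^k)^{-1}$ into the rows and columns of the minors turns $\prod_i(x_i^{r+1})^{1+\nu_{r+1}}\prod_i(x_i^r)^{-(m_{r+1}-n)}\det[(x_j^r-x_i^{r+1})_+^{m_{r+1}-n-\nu_{r+1}-1}]$ into $\det[(x_j^{r+1})^{\nu_{r+1}}(x_k^r-x_j^{r+1})_+^{m_{r+1}-n-\nu_{r+1}-1}(x_k^r)^{n-m_{r+1}}]$, and the level-one factor $\prod_i(x_i^1)^{1+\nu_1}(1-x_i^1)^{m_1-2n-\nu_1}\triangle(x^1)$ into $\det[w_k^{(1)}(x_j^1)]$ via \eqref{TheWeightFunctions}--\eqref{cl} (for $l=1$ one has $w_k^{(1)}(x)=x^{\nu_1+k-1}(1-x)^{m_1-2n-\nu_1}$); the result is precisely \eqref{TruncatedProcessJD}. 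Finally, since the prelimit measures are (asymptotically) supported in the compact cube $[0,1]^{np}$, on which \eqref{TruncatedProcessJD} is a genuine probability density (the exponents $\nu_1+k-1\ge0$ and $m_k-n-\nu_k-1\ge0$ guaranteeing integrability at the boundary), uniform convergence of densities on compact subsets of the open cube promotes to convergence in total variation by Scheff\'{e}'s lemma, and the convergence of the point processes follows.

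I expect the decisive points to be two. First, the bookkeeping by which the powers of $\epsilon$ and the Gamma-function constants coming from three distinct sources --- the numerator Schur functions, the partition function $Z_{\Schur}$, and the mesh Jacobian $\prod(\epsilon x_j^k)^{-1}$ --- must conspire to reconstruct exactly $Z_{n,p+1}$; this is the reason the explicit evaluation of $Z_{n,p+1}$ in Proposition \ref{PropositionProductTruncatedProcess} has to be established first. Second, and harder, is upgrading pointwise asymptotics of the probability weights to a bona fide weak limit of point processes: here one must use the uniformity in the $x$-variables supplied by Propositions \ref{PropositionLimitSchur}--\ref{LimitSkewSchur} together with the boundary integrability of \eqref{TruncatedProcessJD} to preclude loss of mass toward $x=0$, where the discrete grid accumulates.
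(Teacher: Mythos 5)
Your proposal is correct and follows essentially the same route as the paper's proof: collapse the process to a chain of diagrams using the trivial $\varrho_1^-,\dots,\varrho_{p-1}^-$, apply Propositions \ref{PropositionLimitSchur} and \ref{LimitSkewSchur} to the numerator, evaluate $Z_{\Schur}=\prod_{j}H(\varrho_j^+;\varrho_p^-)$ via the Cauchy identity, and combine with the Jacobian $\prod_{k,j}(\epsilon x_j^k)^{-1}$ so that the powers of $\epsilon$ and the Gamma factors reassemble into $Z_{n,p+1}^{-1}$ times the density \eqref{TruncatedProcessJD}. The one caveat is your appeal to Scheff\'e's lemma: the prelimit measures are atomic and hence mutually singular with the absolutely continuous limit, so total-variation convergence cannot hold; however, the uniform convergence of the rescaled weights supplied by Propositions \ref{PropositionLimitSchur}--\ref{LimitSkewSchur} already gives the weak convergence (via Riemann-sum approximation of integrals of test functions) that the statement requires, which is all the paper's own argument establishes.
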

\begin{proof}Since  specializations $\varrho_1^-$, $\ldots$, $\varrho_{p-1}^-$ are trivial, the Schur
process turns into the probability measure
\begin{equation}\label{EffZ}
\begin{split}
&\Prob\left\{\lambda^{(1)},\lambda^{(2)},\ldots,\lambda^{(p-1)},\lambda^{(p)}\right\}\\
&=\frac{1}{Z}s_{\lambda^{(1)}}\left(\varrho_0^{+}\right)s_{\lambda^{(2)}/\lambda^{(1)}}\left(\varrho_1^+\right)
s_{\lambda^{(3)}/\lambda^{(2)}}\left(\varrho_2^+\right)\ldots s_{\lambda^{(p-1)}/\lambda^{(p-2)}}\left(\varrho_{p-2}^+\right)s_{\lambda^{(p)}/\lambda^{(p-1)}}\left(\varrho_{p-1}^+\right)
s_{\lambda^{(p)}}\left(\varrho_p^-\right),
\end{split}
\end{equation}
where
$$
\frac{1}{Z}=\frac{1}{H\left(\varrho_0^+;\varrho_p^-\right)H\left(\varrho_1^+;\varrho_p^-\right)\ldots H\left(\varrho_{p-1}^+;\varrho_p^-\right)}.
$$
Now we use the asymptotic formulae for the Schur functions obtained in Section \ref{SectionLimitsSymmetricFunctions}. In particular, Proposition \ref{PropositionLimitSchur}
gives
\begin{equation}
\begin{split}
&s_{\lambda^{(1)}}\left(e^{-\left(1+\nu_1\right)\epsilon},e^{-\left(2+\nu_1\right)\epsilon},\ldots,e^{-\left(m_1-n\right)\epsilon}\right)\\
&\simeq\frac{\epsilon^{-\left(m_1-n-\nu_1\right)n+\frac{n(n+1)}{2}}}{\prod\limits_{j=1}^n\Gamma\left(m_1-2n-\nu_1+j\right)}
\prod\limits_{i=1}^n\left(x_i^1\right)^{1+\nu_1}\left(1-x_i^1\right)^{m_1-2n-\nu_1}\prod\limits_{1\leq i<j\leq n}\left(x_j^1-x_i^1\right),
\end{split}
\end{equation}
as $\epsilon\rightarrow 0+$. In addition, Proposition \ref{LimitSkewSchur} implies that as $\epsilon\rightarrow 0+$,
\begin{equation}
\begin{split}
&s_{\lambda^{(j)}/\lambda^{(j-1)}}\left(e^{-\left(1+\nu_j\right)\epsilon},e^{-\left(2+\nu_j\right)\epsilon},\ldots,e^{-\left(m_j-n\right)\epsilon}\right)\\
&\simeq\frac{\epsilon^{-\left(m_j-n-\nu_j\right)n+n}}{\left[\Gamma\left(m_j-n-\nu_j\right)\right]^n}
\prod\limits_{i=1}^n\frac{\left(x_i^j\right)^{1+\nu_j}}{\left(x_i^{j-1}\right)^{m_j-n}}\det\left[\left(x_k^{j-1}-x_l^j\right)_+^{m_j-n-\nu_j-1}\right]_{k,l=1}^n,
\end{split}
\end{equation}
where $2\leq j\leq p$. Besides,  Proposition \ref{PropositionLimitSchur} also gives
\begin{equation}\label{StrivialA}
s_{\lambda^{(p)}}\left(1,e^{-\epsilon},\ldots,e^{-(n-1)\epsilon}\right)\simeq\frac{1}{\epsilon^{\frac{n(n-1)}{2}}\prod\limits_{j=1}^n\Gamma(j)}\prod\limits_{1\leq i<j\leq n}
\left(x_j^p-x_i^p\right).
\end{equation}
Let us find the asymptotics of the normalization constant
(defined by equation (\ref{EffZ}))\footnote{This is not strictly necessary as the uniform convergence of the configuration weights implies the convergence of the normalization constants. We perform this limit transition for the sake of completeness.}. We have
$$
\frac{1}{H\left(\varrho_{j-1}^+;\varrho_p^-\right)}=\prod\limits_{i=0}^{n-1}\prod\limits_{l=1}^{m_j-n-\nu_j}
\left(1-e^{-\left(i+l+\nu_j\right)\epsilon}\right)\simeq\epsilon^{\left(m_j-n-\nu_j\right)n}\prod\limits_{i=0}^{n-1}\prod\limits_{l=1}^{m_j-n-\nu_j}
\left(i+l+\nu_j\right),
$$
where $1\leq j\leq p$. Therefore,
\begin{equation}\label{ConstantAsymptotics}
\begin{split}
&\frac{1}{Z}\simeq\epsilon^{\sum\limits_{j=1}^p\left(m_j-n-\nu_j\right)n}\prod\limits_{i=0}^{n-1}\prod\limits_{j=1}^p\prod\limits_{l=1}^{m_j-n-\nu_j}
\left(i+l+\nu_j\right)\\
&=\epsilon^{\sum\limits_{j=1}^p\left(m_j-n-\nu_j\right)n}\prod\limits_{j=1}^p\prod\limits_{l=1}^{m_j-n-\nu_j}
\frac{\Gamma\left(l+\nu_j+n\right)}{\Gamma\left(l+\nu_j\right)}.
\end{split}
\end{equation}
Taking into account that
$$
d\lambda^{(j)}_i\sim-\frac{dx_i^j}{\epsilon x_i^j},\;\;\; 1\leq j\leq p,\;\;\; 1\leq i\leq n,
$$
we find that the probability measure on $\lambda^{(1)}$, $\lambda^{(2)}$,$\ldots$, $\lambda^{(p)}$ turns into the probability measure
\begin{equation}
\begin{split}
&\left(\prod\limits_{j=1}^p\prod\limits_{l=1}^{m_j-n-\nu_j}\frac{\Gamma\left(l+\nu_j+n\right)}{\Gamma\left(l+\nu_j\right)}\right)
\frac{1}{\prod\limits_{j=1}^n\Gamma(j)\Gamma\left(m_1-2n-\nu_1+j\right)}
\frac{1}{\prod\limits_{j=2}^p\left[\Gamma\left(m_j-n-\nu_j\right)\right]^n}\\
&\times\triangle\left(x^p\right)\prod\limits_{j=2}^p\det\left[\left(x_l^j\right)^{\nu_j}\left(x_k^{j-1}-x_l^j\right)^{m_j-n-\nu_j-1}
x_k^{n-m_j}\right]_{k,l=1}^n\\
&\times\det\left[\left(x_l^1\right)^{\nu_1+k-1}\left(1-x_l^1\right)^{m_1-2n-\nu_1}\right]_{l,k=1}^n
dx^1\ldots dx^n.
\end{split}
\end{equation}
This probability measure can be interpreted as the product matrix process associated with truncated unitary matrices, see Proposition \ref{PropositionProductTruncatedProcess}.
The initial conditions of this process are defined by  the single truncated unitary matrix $T_1$, which corresponds to $l=1$ in the definition of this product matrix process in Section
\ref{SectionTruncatedProcess}.
\end{proof}
\begin{prop}\label{PropositionConvergencetoSingularValues}Consider the Schur measure
\begin{equation}\label{S1}
\frac{1}{Z}s_{\lambda^{(p)}}\left(\varrho_0^+,\varrho_1^+,\ldots,\varrho_{p-1}^+\right)s_{\lambda^{(p)}}\left(\varrho_p^-\right),
\end{equation}
which is the projection\footnote{For a discussion of projections of Schur processes see Borodin \cite{Borodin}, Section 2.} of the Schur process (equation (\ref{EffZ})) to the Young diagram $\lambda^{(p)}$. If
$$
x_k^p=e^{-\epsilon\lambda^{(p)}_k},\;\;\; 1\leq k\leq n,
$$
then as $\epsilon\rightarrow 0+$ the  probability measure defined by equation (\ref{S1}) converges to
\begin{equation}\label{EnsembleSingularValues}
\begin{split}
&\frac{\prod_{k=1}^p\prod_{j_k=1}^{m_k-n-\nu_k}\left(j_k+\nu_k\right)_n}{\prod_{j=1}^n\Gamma\left(m_1-2n-\nu_1+j\right)\Gamma(j)\prod_{k=2}^p
\left(\Gamma\left(m_k-n-\nu_k\right)\right)^n}\\
&\times\triangle\left(x^p\right)\det\left[w_k^{(p)}\left(x_j\right)\right]_{k,j=1}^ndx_1^p\ldots dx_n^p.
\end{split}
\end{equation}
The probability measure defined by expression (\ref{EnsembleSingularValues}) can be understood as the ensemble
of squared singular values of the product matrix $T_p\ldots T_1$ (where $T_1$, $\ldots$, $T_p$ are the truncated unitary matrices defined in Section
\ref{SectionTruncatedProcess}).
\end{prop}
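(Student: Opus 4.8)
The plan is to substitute into the explicit Schur--measure formula \eqref{S1} the $\epsilon\to 0+$ asymptotics of each of its two factors, established in Section~\ref{SectionLimitsSymmetricFunctions}, to pass to the limit in the normalization constant $Z$, to account for the Jacobian of the change of variables $\lambda^{(p)}_i\mapsto x_i^p=e^{-\epsilon\lambda^{(p)}_i}$ (which turns the discrete sum over $\lambda^{(p)}$ into an integral over $x^p$), and finally to verify that all powers of $\epsilon$ and all $x^p$--monomial prefactors cancel, leaving exactly \eqref{EnsembleSingularValues}. As a preliminary remark, since $\varrho_p^-=(1,e^{-\epsilon},\dots,e^{-(n-1)\epsilon})$ is the only nontrivial negative specialization and uses $n$ variables, $s_{\lambda^{(p)}}(\varrho_p^-)=0$ unless $\ell(\lambda^{(p)})\le n$; hence \eqref{S1} is supported on diagrams with at most $n$ rows, and the asymptotic statements of Propositions~\ref{PropositionLimitSchur} and~\ref{PropositionAsymptoticsSchurGeneral} apply (with $x_i^p=e^{-\epsilon\lambda^{(p)}_i}\in(0,1]$).

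First I would apply Proposition~\ref{PropositionAsymptoticsSchurGeneral}, whose hypothesis is precisely this union of $p$ geometric specializations: it gives, uniformly in $(x_1^p,\dots,x_n^p)\in[0,1]^n$, that $s_{\lambda^{(p)}}(\varrho_0^+,\dots,\varrho_{p-1}^+)$ is asymptotic to $\epsilon^{-\sum_{k=1}^p(m_k-n-\nu_k)n+n(n+1)/2}$ times the constant $[\Gamma(m_1-2n-\nu_1+1)]^n/\prod_{j=1}^n\Gamma(m_1-2n-\nu_1+j)$ times the determinant of Meijer $G$--functions on the right--hand side of \eqref{AsymptoticSchurGeneral} (evaluated at the $x_j^p$); applying the homogeneity relation $x^a G^{p,0}_{p,p}(\mathbf a;\mathbf b\,|\,x)=G^{p,0}_{p,p}(\mathbf a+a;\mathbf b+a\,|\,x)$ and comparing with \eqref{TheWeightFunctions}--\eqref{cl}, that determinant equals $c_p^{-n}\big(\prod_{j=1}^n x_j^p\big)\det\big[w_k^{(p)}(x_j^p)\big]_{j,k=1}^n$. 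Next, Proposition~\ref{PropositionLimitSchur} with $M=N=n$ and $\nu=0$ (cf.~\eqref{StrivialA}) gives $s_{\lambda^{(p)}}(\varrho_p^-)\sim \epsilon^{-n(n-1)/2}\big(\prod_{j=1}^n\Gamma(j)\big)^{-1}\triangle(x^p)$, the computation behind \eqref{ConstantAsymptotics} gives $Z^{-1}=\prod_{j=0}^{p-1}H(\varrho_j^+;\varrho_p^-)^{-1}\sim \epsilon^{\sum_{k=1}^p(m_k-n-\nu_k)n}\prod_{k=1}^p\prod_{l=1}^{m_k-n-\nu_k}\Gamma(l+\nu_k+n)/\Gamma(l+\nu_k)$, and $d\lambda^{(p)}_i\sim -\epsilon^{-1}(x_i^p)^{-1}\,dx_i^p$ contributes $\epsilon^{-n}\prod_i(x_i^p)^{-1}$.

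What remains is bookkeeping. The total power of $\epsilon$ vanishes:
$$
-\sum_{k}(m_k-n-\nu_k)n+\frac{n(n+1)}{2}-\frac{n(n-1)}{2}+\sum_{k}(m_k-n-\nu_k)n-n=0.
$$
The monomial $\prod_i x_i^p$ produced by the Meijer--$G$ parameter shift cancels the $\prod_i(x_i^p)^{-1}$ coming from the Jacobian, and the remaining products of Gamma functions reassemble --- using \eqref{cl} and the identity $\Gamma(l+\nu_k+n)/\Gamma(l+\nu_k)=(l+\nu_k)_n$ --- into precisely the prefactor of \eqref{EnsembleSingularValues}; this is the same Pochhammer manipulation as at the end of the proof of Proposition~\ref{PropositionProductTruncatedProcess}. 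Since the convergence in Propositions~\ref{PropositionLimitSchur} and~\ref{PropositionAsymptoticsSchurGeneral} is uniform on $[0,1]^n$ and the limiting integrand is continuous and integrable, uniform convergence of the unnormalized weights together with convergence of $Z$ to a finite positive limit yields weak convergence of the associated probability measures, so \eqref{S1} converges to \eqref{EnsembleSingularValues}. Its identification with the law of the squared singular values of $T_p\cdots T_1$ is then immediate from Corollary~2.6 of Kieburg, Kuijlaars, and Stivigny \cite{KieburgKuijlaarsStivigny}, exactly as recalled at the beginning of the proof of Proposition~\ref{PropositionProductTruncatedProcess}; this also confirms that \eqref{EnsembleSingularValues} is a bona fide probability measure with the stated normalization.

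The main difficulty here is arithmetic rather than conceptual: one must track carefully the powers of $\epsilon$, the $x^p$--monomials, and the products of Gamma functions so that they combine into exactly the constant displayed in \eqref{EnsembleSingularValues}; the parameter shift between the Meijer $G$--functions in \eqref{AsymptoticSchurGeneral} and those defining $w_k^{(p)}$ in \eqref{TheWeightFunctions} is the one place where a nontrivial $x^p$--monomial is created, and it is precisely what is needed to absorb the Jacobian factor. A secondary point worth stating explicitly is the standard fact that uniform convergence of unnormalized densities on a compact set, together with convergence of the normalization constants, implies weak convergence of the associated probability measures.
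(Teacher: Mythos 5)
Your argument is correct, and every step of the bookkeeping checks out (the $\epsilon$-exponents sum to zero, the monomial $\prod_j x_j^p$ created by the unit shift of the Meijer $G$-parameters in \eqref{AsymptoticSchurGeneral} relative to \eqref{TheWeightFunctions} exactly absorbs the Jacobian, and the Gamma-function products reassemble into the Pochhammer prefactor of \eqref{EnsembleSingularValues}). However, you take a genuinely different route from the paper. The paper's proof is a two-line projection argument: it invokes the already-established multilevel convergence of Proposition \ref{TheoremConvergenceSchurProcess}, observes that \eqref{S1} is the projection of the Schur process \eqref{EffZ} onto $\lambda^{(p)}$ while \eqref{EnsembleSingularValues} is the $x^p$-marginal of \eqref{TruncatedProcessJD}, and concludes that the limit of the projection is the projection of the limit. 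You instead redo the asymptotic analysis directly at the single time slice, using Proposition \ref{PropositionAsymptoticsSchurGeneral} for the union specialization $(\varrho_0^+,\dots,\varrho_{p-1}^+)$ together with Proposition \ref{PropositionLimitSchur} for $\varrho_p^-$ and the normalization asymptotics \eqref{ConstantAsymptotics}. The two routes are not independent in substance — the inductive proof of Proposition \ref{PropositionAsymptoticsSchurGeneral} performs exactly the convolution integrals that marginalizing \eqref{TruncatedProcessJD} over $x^1,\dots,x^{p-1}$ would require — but your version is more self-contained for this one statement and has the merit of verifying the explicit constant in \eqref{EnsembleSingularValues} independently, whereas the paper's version is shorter and makes the structural point (compatibility of limits with projections) explicit. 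Your closing remark on why uniform convergence of the weights plus convergence of $Z$ suffices is at the same level of rigor as the paper's own treatment and is a reasonable thing to spell out.
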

\begin{proof} We already know that
the Schur process defined by equation (\ref{EffZ})  converges to the product matrix process defined by
the probability distribution (\ref{TruncatedProcessJD}). The Schur measure defined by expression (\ref{S1})
is the projection of this Schur process to the Young diagram $\lambda^{(p)}$, and the probability measure defined by expression
(\ref{EnsembleSingularValues}) can be understood as the projection of the product matrix process defined by
the probability distribution (\ref{TruncatedProcessJD}) to $x^p=\left(x_1^p,\ldots,x_n^p\right)$.
The result follows.
\end{proof}

\textit{Proof of Proposition} \ref{PropositionGeneralLimit}.\\
Proposition \ref{PropositionGeneralLimit} is a generalization of Proposition \ref{TheoremConvergenceSchurProcess} to the situation where the initial conditions are defined by a product of $l$ truncated matrices (and not by a single truncated matrix).
For the specializations specified in the statement of Proposition \ref{PropositionGeneralLimit} the Schur process turns into the probability measure
\begin{equation}\label{EffectiveSchurMeasure}
\begin{split}
&\frac{1}{Z}s_{\lambda^{(p)}}\left(1,e^{-\epsilon},\ldots,e^{-(n-1)\epsilon}\right)
s_{\lambda^{(p)}/\lambda^{(p-1)}}\left(e^{-\left(1+\nu_{l+p-1}\right)\epsilon},
e^{-\left(2+\nu_{l+p-1}\right)\epsilon},\ldots,e^{-\left(m_{l+p-1}-n\right)\epsilon}\right)\\
&s_{\lambda^{(p-1)}/\lambda^{(p-2)}}\left(e^{-\left(1+\nu_{l+p-2}\right)\epsilon},
e^{-\left(2+\nu_{l+p-2}\right)\epsilon},\ldots,e^{-\left(m_{l+p-2}-n\right)\epsilon}\right)\\
&\vdots\\
&s_{\lambda^{(2)}/\lambda^{(1)}}\left(e^{-\left(1+\nu_{l+1}\right)\epsilon},
e^{-\left(2+\nu_{l+1}\right)\epsilon},\ldots,e^{-\left(m_{l+1}-n\right)\epsilon}\right)\\
&s_{\lambda^{(1)}}\left(e^{-\left(1+\nu_{1}\right)\epsilon},
e^{-\left(2+\nu_{1}\right)\epsilon},\ldots,e^{-\left(m_{1}-n\right)\epsilon};\ldots;
e^{-\left(1+\nu_{l}\right)\epsilon},e^{-\left(2+\nu_{l}\right)\epsilon},\ldots,e^{-\left(m_{l}-n\right)\epsilon}\right),
\end{split}
\end{equation}
where $Z$ is a normalization constant. The rest of the proof of Proposition
\ref{PropositionGeneralLimit} follows the same line as that of
Proposition \ref{TheoremConvergenceSchurProcess}. The only essential difference is that  we use equation (\ref{AsymptoticSchurGeneral}) instead
of equation (\ref{EffZ}) for the asymptotics of the relevant Schur function
$$
s_{\lambda}\left(e^{-(1+\nu_1)\epsilon},\ldots, e^{-(m_1-n)\epsilon};\ldots;
e^{-(1+\nu_l)\epsilon},\ldots, e^{-(m_l-n)\epsilon}\right)
$$
as $\epsilon\rightarrow 0+$.
\qed
\section{Convergence of the correlation functions and the proof of Proposition \ref{THEOREMCorrelationKernel}}\label{Section 6}
Consider the Schur process defined by  probability measure (\ref{SchurProcess}). Assume that the  specializations
$\varrho_0^+$, $\ldots$, $\varrho_{p-1}^+$ of the Schur process are defined by equations (\ref{varrho0specialization})-
(\ref{varrhopminus1specialization}), and that the  specialization $\varrho_{p}^{-}$ is defined by
equation (\ref{negativespecialization}).
Let us agree that all other  $\varrho_1^{-}$, $\ldots$, $\varrho_{p-1}^{-}$ are trivial.
With these specializations the Schur process can be understood as a point process on $\left\{1,\ldots,p\right\}\times\Z$ formed by point configurations
(\ref{CSP}).
Denote by $\varrho_{k_1,\ldots,k_m}^{\epsilon,\Schur}\left(u_1^1,\ldots,u_{k_1}^1;\ldots;u_1^m,\ldots,u_{k_m}^m\right)$
the correlation functions of this Schur process, and by $K^{\epsilon}_{\Schur}\left(r,u;s,v\right)$ the correlation kernel of this Schur process.

Proposition \ref{PropositionGeneralLimit} says that the Schur process under consideration converges to the product matrix process with truncated unitary matrices. This implies  convergence of the correlation functions. Namely, if
$\varrho_{k_1,\ldots,k_m}\left(x_1^1,\ldots,x_{k_1}^1;\ldots;x_1^m,\ldots,x_{k_m}^m\right)$ denotes the correlation function of the product matrix process with truncated unitary matrices defined by probability measure (\ref{TruncatedProcessJD}), then we must have
\begin{equation}
\begin{split}
&\underset{\epsilon\rightarrow 0+}{\lim}\biggl\{
\frac{1}{\prod\limits_{i=1}^{k_1}\epsilon x_i^1\ldots\prod\limits_{i=1}^{k_m}\epsilon x_i^m}
\varrho_{k_1,\ldots,k_m}^{\epsilon,\Schur}\left(-\frac{1}{\epsilon}\log x_1^1,\ldots,-\frac{1}{\epsilon}\log x_{k_1}^1;\ldots;-\frac{1}{\epsilon}\log x_1^m,\ldots,-\frac{1}{\epsilon}\log x_{k_m}^m
\right)\biggr\}\\
&=\varrho_{k_1,\ldots,k_m}\left(x_1^1,\ldots,x_{k_1}^1;\ldots;x_1^m,\ldots,x_{k_m}^m\right),
\end{split}
\nonumber
\end{equation}
where the denominator came from the coordinate change.
This limiting relation between the correlation functions would naturally follow from the limiting relation between the correlation kernels,
\begin{equation}\label{Conv1}
\underset{\epsilon\rightarrow 0+}{\lim}
\left\{\frac{1}{\epsilon y}\hat{K}^{\epsilon}_{\Schur}\left(r,-\frac{1}{\epsilon}\log x;s,-\frac{1}{\epsilon}\log y\right)\right\}
=K_{n,p,l}\left(r,x;s,y\right),
\end{equation}
where $K_{n,p,l}\left(r,x;s,y\right)$ denotes the correlation kernel of the product matrix process with truncated unitary matrices, and where
$\hat{K}^{\epsilon}_{\Schur}\left(r,u;s,v\right)$ stands for a kernel equivalent to $K^{\epsilon}_{\Schur}\left(r,u;s,v\right)$.
\footnote{Recall that two kernels of a determinantal process are called equivalent if they define the same correlation functions.
In what follows we will choose $\hat{K}^{\epsilon}_{\Schur}\left(r,u;s,v\right)$ in such a way that the limit in the righthand side of equation
(\ref{Conv1}) will exist.}

The Okounkov-Reshetikhin formula for the correlation kernel $K_{\Schur}^{\epsilon}\left(r,u;s,v\right)$ is
\begin{equation}\label{CorrelationKernelSchurProcess}
\begin{split}
K_{\Schur}^{\epsilon}\left(r,u;s,v\right)=\frac{1}{\left(2\pi i\right)^2}
\oint\limits_{\Sigma_z}\oint\limits_{\Sigma_w}\frac{H\left(\varrho_{[r,p]}^-;z\right)H\left(\varrho_{[0,s)}^+;w\right)}{(zw-1)H\left(\varrho_{[0,r)}^+;z^{-1}\right)
H\left(\varrho_{[s,p]}^-;w^{-1}\right)}\frac{dzdw}{z^{u+1}w^{v+1}},
\end{split}
\end{equation}
see Theorem 2.2 in Ref. \cite{BorodinRains}.
The choice of the integration contours $\Sigma_z$ and $\Sigma_{w}$ depends on the time variables $r$ and $s$, and will be specified below.
In the formula for the correlation kernel $K_{\Schur}^{\epsilon}\left(r,u;s,v\right)$  above
we have used the notation
$$
\varrho_{[i,j]}^{\pm}=\varrho_i^{\pm}\cup\varrho_{i+1}^{\pm}\cup\ldots\cup\varrho_j^{\pm}.
$$
The specializations
$\varrho_0^+$, $\ldots$, $\varrho_{p-1}^+$, $\varrho_1^{-}$, $\ldots$, $\varrho_{p}^{-}$ of the algebra of symmetric functions
are specified by equations (\ref{varrho0specialization})-(\ref{negativespecialization}).

For simplicity, let us consider the case corresponding to $l=1$. The proof of Proposition \ref{THEOREMCorrelationKernel}
for a general $l$ is essentially  the same.

We find
$$
\frac{1}{H\left(\varrho_{[s,p]}^-;w^{-1}\right)}=
\prod\limits_{a_0=1}^n\left(1-e^{-(a_0-1)\epsilon}w^{-1}\right),
\;
H\left(\varrho_{[r,p]}^-;z\right)=
\frac{1}{\prod\limits_{a_0=1}^n\left(1-e^{-(a_0-1)\epsilon}z\right)},
$$
$$
H\left(\varrho_{[0,s)}^+;w\right)=\frac{1}{\prod_{b_1=1+\nu_1}^{m_1-n}\left(1-e^{-b_1\epsilon}w\right)
\ldots\prod_{b_s=1+\nu_s}^{m_s-n}\left(1-e^{-b_s\epsilon}w\right)},
$$
and
$$
\frac{1}{H\left(\varrho_{[0,r)}^+;w\right)}=\prod_{a_1=1+\nu_1}^{m_1-n}\left(1-e^{-a_1\epsilon}z^{-1}\right)
\ldots\prod_{a_r=1+\nu_r}^{m_r-n}\left(1-e^{-a_r\epsilon}z^{-1}\right).
$$
Therefore, we can write
\begin{equation}\label{K1}
\begin{split}
&K_{\Schur}^{\epsilon}\left(r,u;s,v\right)=\frac{1}{\left(2\pi i\right)^2}
\oint\limits_{\Sigma_z}\oint\limits_{\Sigma_w}\frac{dzdw}{(zw-1)z^{u+1}w^{v+1}}\prod_{a_0=1}^n\frac{1-e^{-\left(a_0-1\right)\epsilon}w^{-1}}{1-e^{-\left(a_0-1\right)\epsilon}z}\\
&\times\frac{\prod_{a_1=1+\nu_1}^{m_1-n}\left(1-e^{-a_1\epsilon}z^{-1}\right)
\ldots\prod_{a_r=1+\nu_r}^{m_r-n}\left(1-e^{-a_r\epsilon}z^{-1}\right)}{\prod_{b_1=1+\nu_1}^{m_1-n}\left(1-e^{-b_1\epsilon}w\right)
\ldots\prod_{b_s=1+\nu_s}^{m_s-n}\left(1-e^{-b_s\epsilon}w\right)}.
\end{split}
\end{equation}
Assume that $r\geq s$.  According to Theorem 2.2 in Ref. \cite{BorodinRains}, we can choose $\Sigma_{w}$ as a counterclockwise circle contour with its center at $0$, whose radius is larger than 1. Moreover,
$\Sigma_{w}$  should be chosen in such a way  that all the points $e^{(1+\nu_1)\epsilon}$, $\ldots$, $e^{(m_1-n)\epsilon}$; $\ldots$; $e^{(1+\nu_s)\epsilon}$, $\ldots$, $e^{(m_s-n)\epsilon}$
of the complex $w$-plane will be  situated outside of the circle $\Sigma_{w}$ on its right. So we define the contour $\Sigma_{w}$ by
$$
\Sigma_w=\left\{w:w=e^{\epsilon\beta+i\varphi},\varphi\in[-\pi,\pi)\right\},
$$
where $0<\beta<\min{\left\{\left(1+\nu_1\right),\ldots,\left(m_1-n\right);\ldots;\left(1+\nu_s\right),\ldots,\left(m_s-n\right)\right\}}$.

 In addition, we can choose $\Sigma_z$  as a counterclockwise circle with its center at $0$, such that $|z||w|>1$ for all $z\in\Sigma_z$, for all $w\in\Sigma_w$, and such that the points $1$, $e^{\epsilon}$, $\ldots$, $e^{(n-1)\epsilon}$ of the complex $z$-plane are situated outside of $\Sigma_z$ . So we define $\Sigma_z$ by
 $$
 \Sigma_z=\left\{z: z=e^{-\alpha\epsilon+i\varphi}, \varphi\in[-\pi,\pi)\right\},
 $$
 where $0<\alpha<\beta$.

Let us consider the integral over $w$ in  (\ref{K1}). Since $v\geq-n$, the residue at infinity of the integrand is equal to zero.
Moreover, since $\beta>\alpha>0$, the singular point $w=\frac{1}{z}$ is situated inside the contour $\Sigma_w$, for every $z\in\Sigma_z$.
This enables us  (without changing the integral in the right-hand side of equation (\ref{K1})) transform
$\Sigma_w$ through the extended $w$-complex plane into an integration contour $\Sigma_w'$ which encircles all the points
$e^{(1+\nu_1)\epsilon}$, $\ldots$, $e^{(m_1-n)\epsilon}$; $\ldots$; $e^{(1+\nu_s)\epsilon}$, $\ldots$, $e^{(m_s-n)\epsilon}$ once in the clockwise direction, and leaves $w=e^{\epsilon\beta}$ on its left.
The contour $\Sigma_w'$ can be viewed as an image of a contour $C_{\zeta}$ in the complex $\zeta$-plane under the transformation $\zeta\mapsto w=e^{-\epsilon\zeta}$.
The contour $\Sigma_w'$ can be chosen in such a way that $C_{\zeta}$ will be a clockwise oriented closed contour encircling all the points
$-\left(1+\nu_1\right)$, $\ldots$, $-\left(m_1-n\right)$; $\ldots$; $-\left(1+\nu_s\right)$, $\ldots$, $-\left(m_s-n\right)$ of the negative
real axis, and leaving $\zeta=0$ on its right.

Now let us consider the integral over $z$ in formula (\ref{K1}). We note that $u\geq -n$, so the residue of the integrand at infinity is zero.
Moreover, since $\beta>\alpha$,  and since $\Sigma_w'$ leaves $e^{\epsilon\beta}$ on its left, the singular point $z=\frac{1}{w}$ is situated inside $\Sigma_z$, for every $w\in\Sigma_w'$.
Thus we can deform $\Sigma_z$ through the extended complex plane into a new contour $\Sigma_z'$ encircling  the points $1$, $e^{\epsilon}$, $\ldots$, $e^{\epsilon(n-1)}$
of the complex $z$-plane once in the clockwise direction, and this deformation will not affect the integral.  The contour $\Sigma_z'$ can be obtained from a clockwise oriented contour $C_t$ by the transformation $t\mapsto z=e^{\epsilon t}$. Clearly, $\Sigma_z'$ can be chosen in such a way that $C_t$ will encircle the interval $[0,n-1]$, and will not intersect $C_{\zeta}$.

We make the change of
integration variables,
$$
z=e^{t\epsilon},\;\; w=e^{-\zeta\epsilon},
$$
and set
$$
u=-\frac{1}{\epsilon}\log x,\;\; v=-\frac{1}{\epsilon}\log y.
$$
This gives
\begin{equation}
\begin{split}
&K_{\Schur}^{\epsilon}\left(r,-\frac{1}{\epsilon}\log x;s,-\frac{1}{\epsilon}\log y\right)=\frac{\epsilon^2}{\left(2\pi i\right)^2}\oint\limits_{C_t}dt\oint\limits_{C_{\zeta}}d\zeta
\frac{x^t}{y^{\zeta}}\frac{1}{1-e^{-\epsilon(\zeta-t)}}\prod\limits_{a_0=1}^n\frac{1-e^{\epsilon\left(\zeta-a_0+1\right)}}{1-e^{\epsilon\left(t-a_0+1\right)}}\\
&\times\frac{\prod_{a_1=1+\nu_1}^{m_1-n}\left(1-e^{-\epsilon\left(t+a_1\right)}\right)\ldots \prod_{a_r=1+\nu_r}^{m_r-n}\left(1-e^{-\epsilon\left(t+a_r\right)}\right)}{\prod_{b_1=1+\nu_1}^{m_1-n}\left(1-e^{-\epsilon\left(\zeta+b_1\right)}\right)
\ldots\prod_{b_s=1+\nu_s}^{m_s-n}\left(1-e^{-\epsilon\left(\zeta+b_s\right)}\right)},
\end{split}
\end{equation}
where the integration contours are chosen as in the statement of Proposition \ref{THEOREMCorrelationKernel}.

Set
\begin{equation}\label{functiong}
\begin{split}
g\left(t,\zeta;\epsilon\right)=\frac{\epsilon}{1-e^{-\epsilon(\zeta-t)}}\prod\limits_{a_0=1}^n
\frac{\frac{1-e^{\epsilon\left(\zeta-a_0+1\right)}}{\epsilon}}{\frac{1-e^{\epsilon\left(t-a_0+1\right)}}{\epsilon}}
\frac{\prod_{a_1=1+\nu_1}^{m_1-n}\left(\frac{1-e^{-\epsilon\left(t+a_1\right)}}{\epsilon}\right)
\ldots\prod_{a_r=1+\nu_1}^{m_r-n}\left(\frac{1-e^{-\epsilon\left(t+a_r\right)}}{\epsilon}\right)}{
\prod_{b_1=1+\nu_1}^{m_1-n}\left(\frac{1-e^{-\epsilon\left(\zeta+b_1\right)}}{\epsilon}\right)\ldots
\prod_{b_s=1+\nu_s}^{m_s-n}\left(\frac{1-e^{-\epsilon\left(\zeta+b_s\right)}}{\epsilon}\right)}
\end{split}
\end{equation}
As $\epsilon\rightarrow 0+$, the function $g\left(t,\zeta;\epsilon\right)$ converges uniformly
(with respect to $\zeta$ on $C_{\zeta}$, and with respect to $t$ on $C_t$) to
\begin{equation}\label{ex}
\frac{1}{\zeta-t}\prod\limits_{a_0=1}^n\frac{\zeta-a_0+1}{t-a_0+1}
\frac{\prod_{a_1=1+\nu_1}^{m_1-n}\left(t+a_1\right)\ldots\prod_{a_r=1+\nu_r}^{m_r-n}\left(t+a_r\right)}{
\prod_{b_1=1+\nu_1}^{m_1-n}\left(\zeta+b_1\right)\ldots\prod_{b_s=1+\nu_s}^{m_s-n}\left(\zeta+b_s\right)},
\end{equation}
as $\epsilon\rightarrow 0+$.

Now we define
$$
\hat{K}_{\Schur}^{\epsilon}\left(r,u;s,v\right)=
\frac{\prod\limits_{k=1}^s\epsilon^{m_k-n-\nu_k}}{\prod\limits_{k=1}^r\epsilon^{m_k-n-\nu_k}}K_{\Schur}^{\epsilon}\left(r,u;s,v\right).
$$
Clearly, the kernels $\hat{K}_{\Schur}^{\epsilon}\left(r,u;s,v\right)$ and $K_{\Schur}^{\epsilon}\left(r,u;s,v\right)$ are equivalent.
Let us consider the limit
\begin{equation}\label{limit}
\underset{\epsilon\rightarrow 0+}{\lim}\left[\frac{1}{\epsilon y}\hat{K}_{\Schur}^{\epsilon}\left(r,-\frac{1}{\epsilon}\log x;s,-\frac{1}{\epsilon}\log y\right)\right].
\end{equation}
The fact that $g(t,\zeta;\epsilon)$ converges uniformly (with respect to $\zeta$ on $C_{\zeta}$, and with respect to $t$ on $C_t$) to expression (\ref{ex})
enables us to interchange the limit and the integrals, and to compute (\ref{limit}) explicitly.
By formula (\ref{Conv1}) this limit is equal to $K_{n,p,l=1}(r,x;s,y)$.  Thus we have
\begin{equation}\label{Conv3}
\begin{split}
&K_{n,p,l=1}(r,x;s,y)=
\frac{1}{\left(2\pi i\right)^2}
\oint\limits_{C_t}dt\oint\limits_{C_{\zeta}}d\zeta\frac{x^t}{y^{\zeta+1}(\zeta-t)}\prod_{a_0=1}^n\frac{\zeta-a_0+1}{t-a_0+1}\\
&\times\frac{\prod_{a_1=1+\nu_1}^{m_1-n}\left(a_1+t\right)\prod_{a_2=1+\nu_2}^{m_2-n}\left(a_2+t\right)\ldots
\prod_{a_r=1+\nu_r}^{m_r-n}\left(a_r+t\right)}{\prod_{b_1=1+\nu_1}^{m_1-n}\left(b_1+\zeta\right)\prod_{b_2=1+\nu_2}^{m_2-n}\left(b_2+\zeta\right)\ldots
\prod_{b_s=1+\nu_s}^{m_s-n}\left(b_s+\zeta\right)},
\end{split}
\end{equation}
where $r\geq s$. We rewrite the products inside the integrals above in terms of Gamma functions as follows
\begin{equation}
\prod_{a_0=1}^n\frac{\zeta-a_0+1}{t-a_0+1}=\frac{\Gamma(\zeta+1)}{\Gamma(\zeta+1-n)}\frac{\Gamma(t+1-n)}{\Gamma(t+1)},
\end{equation}
\begin{equation}
\begin{split}
&\prod_{a_1=1+\nu_1}^{m_1-n}\left(a_1+t\right)=\frac{\Gamma\left(t+m_1-n+1\right)}{\Gamma\left(t+\nu_1+1\right)},
\ldots,
\prod_{a_r=1+\nu_r}^{m_r-n}\left(a_r+t\right)=\frac{\Gamma\left(t+m_r-n+1\right)}{\Gamma\left(t+\nu_r+1\right)},\\
\end{split}
\end{equation}
and
\begin{equation}
\begin{split}
&\prod_{b_1=1+\nu_1}^{m_1-n}\left(b_1+\zeta\right)=\frac{\Gamma\left(\zeta+m_1-n+1\right)}{\Gamma\left(\zeta+\nu_1+1\right)},
\ldots,\prod_{b_s=1+\nu_s}^{m_s-n}\left(b_s+\zeta\right)=\frac{\Gamma\left(\zeta+m_s-n+1\right)}{\Gamma\left(\zeta+\nu_s+1\right)}.
\end{split}
\end{equation}
Taking this into account we see that the righthand side of equation (\ref{Conv3}) can be rewritten as  that of equation (\ref{CorrelationKernelProductProcess}).
This proves Proposition \ref{THEOREMCorrelationKernel} for $r\geq s$ (and $l=1$).

Assume that $r<s$. In this case we can choose both $\Sigma_z$ and  $\Sigma_w$ as counterclockwise circle contours
whose centers are at $0$, and whose radii are less than 1, see Theorem 2.2 in Ref. \cite{BorodinRains}.
Let us agree that $|z|<|w|<1$ for all $z\in\Sigma_z$, and $w\in\Sigma_{w}$. In addition, we will choose $\Sigma_z$ in such a way that all the points
$e^{-\left(1+\nu_1\right)\epsilon}$, $\ldots$, $e^{-\left(m_1-n\right)\epsilon}$; $\ldots$; $e^{-\left(1+\nu_s\right)\epsilon}$, $\ldots$, $e^{-\left(m_s-n\right)\epsilon}$
will be situated inside $\Sigma_z$.

We will deform the contour $\Sigma_w$ through the extended complex plane into a new contour $\Sigma_w'$
encircling  all the points $e^{\left(1+\nu_1\right)\epsilon}$, $\ldots$, $e^{\left(m_1-n\right)\epsilon}$; $\ldots$; $e^{\left(1+\nu_s\right)\epsilon}$, $\ldots$, $e^{\left(m_s-n\right)\epsilon}$ in the clockwise direction, and leaving the points $w=\frac{1}{z}$, $z\in\Sigma_z$ outside.
As we  deform $\Sigma_w$ into $\Sigma_w'$, we should pick up the contribution at the residue at $w=\frac{1}{z}$. Thus we rewrite  equation (\ref{K1}) as
\begin{equation}\label{K2}
\begin{split}
&K_{\Schur}^{\epsilon}\left(r,u;s,v\right)=
-\frac{1}{2\pi i}\oint\limits_{\Sigma_z}\frac{dz}{z^{u-v+1}}\frac{1}{\prod_{b_{r+1}=1+\nu_{r+1}}^{m_{r+1}-n}\left(1-e^{-b_{r+1}\epsilon}z^{-1}\right)
\ldots\prod_{b_s=1+\nu_s}^{m_s-n}\left(1-e^{-b_s\epsilon}z^{-1}\right)}\\
&+\frac{1}{\left(2\pi i\right)^2}
\oint\limits_{\Sigma_z}\oint\limits_{\Sigma_w'}\frac{dzdw}{(zw-1)z^{u+1}w^{v+1}}\prod_{a_0=1}^n\frac{1-e^{-\left(a_0-1\right)\epsilon}w^{-1}}{1-e^{-\left(a_0-1\right)\epsilon}z}\\
&\times\frac{\prod_{a_1=1+\nu_1}^{m_1-n}\left(1-e^{-a_1\epsilon}z^{-1}\right)
\ldots\prod_{a_r=1+\nu_r}^{m_r-n}\left(1-e^{-a_r\epsilon}z^{-1}\right)}{\prod_{b_1=1+\nu_1}^{m_1-n}\left(1-e^{-b_1\epsilon}w\right)
\ldots\prod_{b_s=1+\nu_s}^{m_s-n}\left(1-e^{-b_s\epsilon}w\right)}.
\end{split}
\end{equation}
Denote by $K_{\Schur}^{\epsilon, I}\left(r,u;s,v\right)$ the first term in the right-hand side of the above equation, and by
$K_{\Schur}^{\epsilon, II}\left(r,u;s,v\right)$ the second term, so that
\begin{equation}\label{KI}
K_{\Schur}^{\epsilon, I}\left(r,u;s,v\right)=
-\frac{1}{2\pi i}\oint\limits_{\Sigma_z}\frac{dz}{z^{u-v+1}}\frac{1}{\prod_{b_{r+1}=1+\nu_{r+1}}^{m_{r+1}-n}\left(1-e^{-b_{r+1}\epsilon}z^{-1}\right)
\ldots\prod_{b_s=1+\nu_s}^{m_s-n}\left(1-e^{-b_s\epsilon}z^{-1}\right)},
\end{equation}
and
\begin{equation}
\begin{split}
&K_{\Schur}^{\epsilon, II}\left(r,u;s,v\right)=
\frac{1}{\left(2\pi i\right)^2}
\oint\limits_{\Sigma_z}\oint\limits_{\Sigma_w'}\frac{dzdw}{(zw-1)z^{u+1}w^{v+1}}\prod_{a_0=1}^n\frac{1-e^{-\left(a_0-1\right)\epsilon}w^{-1}}{1-e^{-\left(a_0-1\right)\epsilon}z}\\
&\times\frac{\prod_{a_1=1+\nu_1}^{m_1-n}\left(1-e^{-a_1\epsilon}z^{-1}\right)
\ldots\prod_{a_r=1+\nu_r}^{m_r-n}\left(1-e^{-a_r\epsilon}z^{-1}\right)}{\prod_{b_1=1+\nu_1}^{m_1-n}\left(1-e^{-b_1\epsilon}w\right)
\ldots\prod_{b_s=1+\nu_s}^{m_s-n}\left(1-e^{-b_s\epsilon}w\right)}.
\end{split}
\end{equation}
In the formula for $K_{\Schur}^{\epsilon, II}\left(r,u;s,v\right)$
we deform the contour $\Sigma_z$ through the extended complex plane into a new contour $\Sigma_z'$ encircling the points
$1$, $e^{\epsilon}$, $\ldots$, $e^{(n-1)\epsilon}$ in the clockwise direction.
Since we agree that all the points $e^{-\left(1+\nu_1\right)\epsilon}$, $\ldots$, $e^{-\left(m_1-n\right)\epsilon}$
$\ldots$; $e^{-\left(1+\nu_s\right)\epsilon}$, $\ldots$, $e^{-\left(m_s-n\right)\epsilon}$
are situated inside $\Sigma_z$, the contour $\Sigma_w'$ can be chosen such that this deformation will not affect
the value of $K_{\Schur}^{\epsilon, II}\left(r,u;s,v\right)$. The contours $\Sigma_w'$ and $\Sigma_z'$ can be viewed as images of  contours $C_{\zeta}$, $C_{t}$
under the transformations $\zeta\mapsto w=e^{-\epsilon\zeta}$, and $t\rightarrow z=e^{\epsilon t}$, and the contours $C_{\zeta}$, $C_t$ will be those described in the statement of the Proposition.

Set
$$
\hat{K}_{\Schur}^{\epsilon,I}\left(r,u;s,v\right)=
\frac{\prod\limits_{k=1}^s\epsilon^{m_k-n-\nu_k}}{\prod\limits_{k=1}^r\epsilon^{m_k-n-\nu_k}}K_{\Schur}^{\epsilon,I}\left(r,u;s,v\right),
$$
$$
\hat{K}_{\Schur}^{\epsilon,II}\left(r,u;s,v\right)=
\frac{\prod\limits_{k=1}^s\epsilon^{m_k-n-\nu_k}}{\prod\limits_{k=1}^r\epsilon^{m_k-n-\nu_k}}K_{\Schur}^{\epsilon,II}\left(r,u;s,v\right),
$$
and
$$
\hat{K}_{\Schur}^{\epsilon}\left(r,u;s,v\right)=\hat{K}_{\Schur}^{\epsilon,I}\left(r,u;s,v\right)+\hat{K}_{\Schur}^{\epsilon,II}\left(r,u;s,v\right).
$$
After the change of variables we find
\begin{equation}\label{K3}
\begin{split}
&\widehat{K}_{\Schur}^{\epsilon,II}\left(r,-\frac{1}{\epsilon}\log x;s,-\frac{1}{\epsilon}\log y\right)=
\frac{\epsilon}{\left(2\pi i\right)^2}
\oint\limits_{C_t}dt\oint\limits_{C_{\zeta}}d\zeta\frac{x^t}{y^{\zeta}}g(t,\zeta;\epsilon),
\end{split}
\end{equation}
where $g(t,\zeta;\epsilon)$ is defined by equation (\ref{functiong}). We thus have
\begin{equation}\label{ZD}
\begin{split}
&\underset{\epsilon\rightarrow 0+}{\lim}\left[\frac{1}{\epsilon y}\widehat{K}_{\Schur}^{\epsilon,II}\left(r,-\frac{1}{\epsilon}\log x;s,-\frac{1}{\epsilon}\log y\right)\right]=
\frac{1}{\left(2\pi i\right)^2}
\oint\limits_{C_t}dt\oint\limits_{C_{\zeta}}d\zeta\frac{x^t}{y^{\zeta+1}(\zeta-t)}\prod_{a_0=1}^n\frac{\zeta-a_0+1}{t-a_0+1}\\
&\times\frac{\prod_{a_1=1+\nu_1}^{m_1-n}\left(a_1+t\right)\prod_{a_2=1+\nu_2}^{m_2-n}\left(a_2+t\right)\ldots
\prod_{a_r=1+\nu_r}^{m_r-n}\left(a_r+t\right)}{\prod_{b_1=1+\nu_1}^{m_1-n}\left(b_1+\zeta\right)\prod_{b_2=1+\nu_2}^{m_2-n}\left(b_2+\zeta\right)\ldots
\prod_{b_s=1+\nu_s}^{m_s-n}\left(b_s+\zeta\right)},
\end{split}
\end{equation}
where again we have used the uniform convergence of $g(t,\zeta;\epsilon)$ to take the limit inside the integrals.
Note that (as in the case $r\geq s$)  the right-hand side of equation (\ref{ZD}) can be written as the second term in the righthand side
of equation (\ref{CorrelationKernelProductProcess}).

Now consider the formula for $\hat{K}_{\Schur}^{\epsilon,I}\left(r,u;s,v\right)$ given by (\ref{KI}).
Assume that $u>v$. In this case the residue at infinity is equal to zero, and all finite poles are situated inside the contour $\Sigma_z$.
This implies that $\hat{K}_{\Schur}^{\epsilon,I}\left(r,u;s,v\right)$ is equal to zero for $u>v$.
If $u\leq v$, then the residue at $0$ is equal to $0$, and we can deform $\Sigma_z$ into a
contour $\Sigma_z'$ encircling the poles $e^{-\left(1+\nu_{r+1}\right)\epsilon}$, $\ldots$, $e^{-\left(m_{r+1}-n\right)\epsilon}$;
$\ldots;$ $e^{-\left(1+\nu_{s}\right)\epsilon}$, $\ldots$, $e^{-\left(m_{s}-n\right)\epsilon}$ once in the counterclockwise direction,
and leaving $0$ on the left.
The contour $\Sigma_z'$ can be viewed as an image of a contour $C_t$ under the transformation
$t\mapsto z=e^{\epsilon t}$, where $C_t$ will be that described in the statement of the Proposition.
After the change of variables we find that
\begin{equation}
\hat{K}_{\Schur}^{\epsilon,I}\left(r,u;s,v\right)=-\frac{\epsilon}{2\pi i}\oint\limits_{C_t}dt
\left(\frac{y}{x}\right)^{-t}
\frac{1}{\prod_{b_{r+1}=1+\nu_{r+1}}^{m_{r+1}-n}\frac{1-e^{\epsilon\left(t+b_{r+1}\right)}}{\epsilon}\ldots
\prod_{b_{s}=1+\nu_{s}}^{m_{s}-n}\frac{1-e^{\epsilon\left(t+b_s\right)}}{\epsilon}},
\end{equation}
where $y<x$. We have
\begin{equation}\label{Z}
\begin{split}
&\underset{\epsilon\rightarrow 0+}{\lim}\left[\frac{1}{\epsilon y}\widehat{K}_{\Schur}^{\epsilon,I}\left(r,-\frac{1}{\epsilon}\log x;s,-\frac{1}{\epsilon}\log y\right)\right]\\
&=-\frac{1}{2\pi i}\oint\limits_{C_{t}}dt\frac{x^{t}}{y^{t+1}}
\frac{1}{\prod\limits_{b_{r+1}=1+\nu_{r+1}}^{m_{r+1}-n}\left(b_{r+1}+t\right)\ldots\prod\limits_{b_{s}=1+\nu_{s}}^{m_{s}-n}\left(b_s+t\right)}\\
&=-\frac{1}{2\pi i}\oint\limits_{C_{t}}dt\frac{x^{t}}{y^{t+1}}
\frac{\Gamma\left(t+\nu_{r+1}+1\right)\ldots \Gamma\left(t+\nu_s+1\right)}{\Gamma\left(t+m_{r+1}-n+1\right)\ldots\Gamma\left(t+m_s-n+1\right)}\\
&=-\frac{1}{x}G^{s-r,0}_{s-r,s-r}\left(\begin{array}{ccc}
                                         m_{r+1}-n, & \ldots, & m_s-n \\
                                         \nu_{r+1}, & \ldots, & \nu_s
                                       \end{array}
\biggr|\frac{y}{x}\right),
\end{split}
\end{equation}
where $y<x$. As  follows from equations (2.22), (2.23) and (2.24) in Kieburg, Kuijlaars, and Stivigny \cite{KieburgKuijlaarsStivigny}, the function
$$
G^{s-r,0}_{s-r,s-r}\left(\begin{array}{ccc}
                                         m_{r+1}-n, & \ldots, & m_s-n \\
                                         \nu_{r+1}, & \ldots, & \nu_s
                                       \end{array}
\biggr|\frac{y}{x}\right)
$$
is equal to $0$ for $y\geq x$. Therefore, equation
\begin{equation}
\begin{split}
&\underset{\epsilon\rightarrow 0+}{\lim}\left[\frac{1}{\epsilon y}\widehat{K}_{\Schur}^{\epsilon,I}\left(r,-\frac{1}{\epsilon}\log x;s,-\frac{1}{\epsilon}\log y\right)\right]\\
&=-\frac{1}{x}G^{s-r,0}_{s-r,s-r}\left(\begin{array}{ccc}
                                         m_{r+1}-n, & \ldots, & m_s-n \\
                                         \nu_{r+1}, & \ldots, & \nu_s
                                       \end{array}
\biggr|\frac{y}{x}\right)
\end{split}
\end{equation}
holds true for $y\geq x$ as well.
Finally, formula (\ref{Conv1}), equations (\ref{ZD}) and (\ref{Z}) give the desired formula for the correlation kernel in the case $r<s$.
\qed
\section{Proof of Proposition \ref{PropositionEffectiveSchurMeasure} and Theorem \ref{MainTheorem}}\label{Section7}
\textit{Proof of Proposition} \ref{PropositionEffectiveSchurMeasure}\\
In order to compute the probability of the point configuration (\ref{PoinConfiguration}), we need to compute the projection
of the Schur process associated with  a skew plane partition $\Pi$ to the diagrams $\lambda^{(\alpha_1)}$, $\ldots$,$\lambda^{(\alpha_p)}$.
It is convenient to obtain first the projection on $\lambda^{(\beta_{2l+1})}$, $\ldots$, $\lambda^{(\beta_2)}$, $\lambda^{(\alpha_1)}$, $\ldots$,$\lambda^{(\alpha_p)}$,
see Figure \ref{Figure_corners}.
Using equation (\ref{SummationFormula2}) we find
\begin{equation}
\begin{split}
&\frac{1}{Z'}
s_{\lambda^{(\beta_{2l+1})}}\left(q^{-1},\ldots,q^{-(\beta_{2l+1}-1)}\right)
s_{\lambda^{(\beta_{2l+1})}/\lambda^{(\beta_{2l})}}\left(q^{\beta_{2l+1}},\ldots,q^{\beta_{2l}-1}\right)\\
& \times s_{\lambda^{(\beta_{2l-1})}/\lambda^{(\beta_{2l})}}\left(q^{-\beta_{2l}},\ldots,q^{-(\beta_{2l-1}-1)}\right)
s_{\lambda^{(\beta_{2l-1})}/\lambda^{(\beta_{2l-2})}}\left(q^{\beta_{2l-1}},\ldots,q^{\beta_{2l-2}-1}\right)\\
&\vdots\\
&\times s_{\lambda^{(\beta_{3})}/\lambda^{(\beta_{4})}}\left(q^{-\beta_{4}},\ldots,q^{-(\beta_{3}-1)}\right)
s_{\lambda^{(\beta_{3})}/\lambda^{(\beta_{2})}}\left(q^{\beta_{3}},\ldots,q^{\beta_{2}-1}\right)\\
&\times s_{\lambda^{(\alpha_1)}/\lambda^{(\beta_{2})}}\left(q^{-\beta_{2}},\ldots,q^{-(\alpha_{1}-1)}\right)
s_{\lambda^{(\alpha_{2})}/\lambda^{(\alpha_{1})}}\left(q^{-\alpha_1},\ldots,q^{-(\alpha_{2}-1)}\right)\\
&\vdots\\
&\times s_{\lambda^{(\alpha_{p})}/\lambda^{(\alpha_{p-1})}}\left(q^{-\alpha_{p-1}},\ldots,q^{-(\alpha_{p}-1)}\right)
s_{\lambda^{(\beta_{1})}/\lambda^{(\alpha_{p})}}\left(q^{-\alpha_p},\ldots,q^{-(\beta_{1}-1)}\right)\\
&\times s_{\lambda^{(\beta_{1})}}\left(q^{\beta_1},\ldots,q^{A+B}\right).
\end{split}
\end{equation}
Equations (\ref{SummationFormula1}) and (\ref{SummationFormula2}) enable us to sum over
the Young diagrams $\lambda^{(\beta_{2l+1})}$,  $\lambda^{(\beta_{2l})}$, $\ldots$, $\lambda^{(\beta_{2})}$, and
$\lambda^{(\beta_{1})}$. The result is
\begin{equation}
\begin{split}
&\frac{1}{Z''}s_{\lambda^{(\alpha_1)}}\left(q^{-1},\ldots,q^{-(\beta_{2l+1}-1)};q^{-\beta_{2l}},\ldots,q^{-(\beta_{2l-1}-1)};
\ldots;q^{-\beta_2},\ldots,q^{-(\alpha_1-1)}\right)\\
&\times s_{\lambda^{(\alpha_{2})}/\lambda^{(\alpha_{1})}}\left(q^{-\alpha_1},\ldots,q^{-(\alpha_{2}-1)}\right)
\ldots s_{\lambda^{(\alpha_{p})}/\lambda^{(\alpha_{p-1})}}\left(q^{-\alpha_{p-1}},\ldots,q^{-(\alpha_{p}-1)}\right)\\
&\times s_{\lambda^{(\alpha_{p})}}\left(q^{\beta_1},\ldots,q^{A+B}\right).
\end{split}
\end{equation}
Taking into account the homogeneity of the Schur polynomials, and noting that $\beta_1=A+\pi_1+1$, we see that the expression  above can be rewritten
as in the statement of the Proposition \ref{PropositionEffectiveSchurMeasure}.
\qed\\
\textit{Proof of Theorem} \ref{PropositionEffectiveSchurMeasure}.\\
Proposition \ref{PropositionEffectiveSchurMeasure} says that the probability of the point configuration
$$
\left\{\left(1,\lambda_i^{(\alpha_1)}-i\right)\right\}_{i\geq 1}\cup\ldots\cup\left\{\left(1,\lambda_i^{(\alpha_p)}-i\right)\right\}_{i\geq 1}
$$
can be written as a product of skew Schur functions, see equation (\ref{EffectiveSchurMeasure1}). If $q=e^{-\epsilon}$, the parameters $n$; $m_1,\ldots,m_{l+p-1}$;
$\nu_1$, $\ldots$, $\nu_{l+p-1}$ are related with parameters $A$, $B$; $\pi_1$; $\alpha_1$, $\ldots$, $\alpha_p$; $\beta_1$, $\ldots$, $\beta_{2l+1}$ as
in the statement of Theorem \ref{MainTheorem}, and $\lambda^{(\alpha_1)}$, $\ldots$, $\lambda^{(\alpha_p)}$ are identified with
$\lambda^{(1)},\ldots,\lambda^{(p)}$, then the probability measure defined by equation (\ref{EffectiveSchurMeasure}) turns into the probability measure defined by equation (\ref{EffectiveSchurMeasure1}). Application of Proposition \ref{PropositionGeneralLimit} gives the result.
\qed

\section{Alternative proof through symmetric functions and zonal polynomials}

\label{Section_spherical}

In this section we sketch an alternative path to derive Proposition \ref{PropositionGeneralLimit}.
For the clarity of the exposition we only detail one simplest case. At the end of the section we outline ways for generalizations.

\noindent {\bf Simplest case: product of two $2\times 2$ matrices.} Let $U_1$ and $U_2$ be two $4\times 4$ independent Haar-distributed random unitary complex
matrices. Let $T_1$ and $T_2$ be principal $2\times 2$ corners of $U_1$ and $U_2$, respectively.
Our aim is to link the distribution of the squared singular values of $T_1 T_2$ to a Schur measure.

First, note that $T_1$ and $T_2$ are almost surely non-degenerate. The squared singular values of
$T_1 T_2$ are eigenvalues of $T_1 T_2 T_2^* T_1^*$. Since eigenvalues are preserved under
conjugations, they are the same as the eigenvalues of $(T_1^* T_1)(T_2 T_2^*)$. Since, $T_1$ and
$T_1^*$ have the same distribution, we can further rewrite the law of interest as the distribution
of eigenvalues for the matrix $(T_1 T_1^*) (T_2 T_2^*)$.

Set $A=T_1 T_1^*$ and $B=T_2 T_2^*$. The (ordered) eigenvalues of $A$ are real numbers $a_1, a_2$
distributed with probability density
\begin{equation}
\label{eq_Jacobi_2}
 \rho(a_1,a_2)=12(a_2-a_1)^2,\quad 0\le a_1 \le a_2 \le 1.
\end{equation}
This computation is a particular case of the identification of singular values of a corner of С„ a
random unitary matrix with Jacobi ensemble, see \cite{Collins} and references therein. This is also the  $k=p=\ell=1$ case in Proposition \ref{PropositionProductTruncatedProcess}.  The
eigenvalues of $B$, $0\le b_1\le b_2\le 1$ have the same distribution.

Next, we fix $0<q<1$ and consider a distribution on pairs of integers $\lambda_1\ge \lambda_2 \ge
0$ with weight
\begin{equation}
\label{eq_Schur_2}
 P_q(\lambda_1,\lambda_2)=(1-q)(1-q^2)^2(1-q^3)\, s_{(\lambda_1,\lambda_2)}(1,q)\, s_{(\lambda_1,\lambda_2)}(q,q^2).
\end{equation}
On one hand, \eqref{eq_Schur_2} is a particular case of the Schur measure. On the other hand, the
explicit evaluations
$$
 s_{(\lambda_1,\lambda_2)}(1,q)=\frac{\det\begin{pmatrix} 1 & 1 \\ q^{\lambda_1+1} & q^{\lambda_2} \end{pmatrix}}{1-q}=
 \frac{q^{\lambda_2}-q^{\lambda_1+1}}{1-q},
 \qquad  s_{(\lambda_1,\lambda_2)}(q,q^2)= q^{\lambda_1+\lambda_2}\frac{q^{\lambda_2}-q^{\lambda_1+1}}{1-q},
$$
imply that upon the change of variables $q^{\lambda_i}=a_i$, $i=1,2$, in the limit $q\to 1$,
\eqref{eq_Schur_2} becomes \eqref{eq_Jacobi_2}. The same computation works for matrices of
arbitrary size and for any values of the random matrix parameter $\beta$, see
\cite[Section 3.1]{ForresterRains}, \cite[Theorem 2.8]{BG_GFF}.

The next step is to use the Cauchy--Littlewood identity (see \cite[Chapter I]{Macdonald})
\begin{equation}
\label{eq_Cauchy_2}
 \sum_{(\lambda_1,\lambda_2)}s_{(\lambda_1,\lambda_2)}(u_1,u_2) s_{(\lambda_1,\lambda_2,0^{k-2})}(v_1,v_2,\dots,v_k)=\prod_{i=1}^2\prod_{j=1}^k \frac{1}{1-u_i v_j},\quad k\ge 2.
\end{equation}
Equation \eqref{eq_Schur_2} and the identity \eqref{eq_Cauchy_2} lead to the expectation evaluation
\begin{equation}
\label{eq_expectation_Cauchy}
 \E_{P_q}\left[ \frac{s_{(\lambda_1,\lambda_2)}(u_1,u_2)} {s_{(\lambda_1,\lambda_2)}(1,q)}\right]=
  \prod_{i=1}^2 \frac{(1-q^{i})(1-q^{i+1})}{ (1-u_i q)(1-u_i q^2)}.
\end{equation}
Here $u_1$ and $u_2$  can be any complex numbers such that the series
defining the expectation absolutely converges. We make a particular choice,
$(u_1,u_2)=(q^{\mu_1+1},q^{\mu_2})$ for two integers $\mu_1\ge\mu_2\ge 0$. Then, using the
label--variable duality (which is an immediate consequence of the definition of the Schur polynomials as ratios of two determinants)
\begin{equation}
\label{eq_Schur_duality}
 \frac{s_{(\lambda_1,\lambda_2)}(q^{\mu_1+1},q^{\mu_2})} {s_{(\lambda_1,\lambda_2)}(1,q)}=
 \frac{s_{(\mu_1,\mu_2)}(q^{\lambda_1+1},q^{\lambda_2})} {s_{(\mu_1,\mu_2)}(1,q)},
\end{equation}
 we get
\begin{equation}
\label{eq_expectation_Cauchy_duality}
 \E_{P_q}\left[ \frac{s_{(\mu_1,\mu_2)}(q^{\lambda_1+1},q^{\lambda_2})} {s_{(\mu_1,\mu_2)}(1,q)} \right]=
 \prod_{i=1}^2 \frac{(1-q^{i})(1-q^{i+1})}{ (1-q^{\mu_i-i+3})(1-q^{\mu_i-i+4})},\qquad \mu_1\ge \mu_2\ge 0.
\end{equation}
Note that as one varies $\mu_1,\mu_2$, the left--hand side of \eqref{eq_expectation_Cauchy_duality}
uniquely determines all the moments of $P_q$--random vector $(q^{\lambda_1+1},q^{\lambda_2})$. Indeed, since $\lambda_1+1>\lambda_2$, it is enough to consider only symmetric linear combinations of moments, and those are finite combinations of Schur polynomials. Since $(q^{\lambda_1+1},q^{\lambda_2})$ is supported inside $[0,1]\times[0,1]$, the moments uniquely determine the distribution $P_q$. The conclusion is that
\eqref{eq_expectation_Cauchy_duality} is \emph{equivalent} to the definition \eqref{eq_Schur_2}.
Further, the same equivalency holds in the limit $q\to 1$, as $(q^{\lambda_1+1},q^{\lambda_2})\to
(a_1,a_2)$. Equation \eqref{eq_expectation_Cauchy_duality} becomes
\begin{equation}
\label{eq_expectation_RM}
 \E_{\rho(a_1,a_2)} \left[\frac{s_{(\mu_1,\mu_2)}(a_1,a_2)} {s_{(\mu_1,\mu_2)}(1,1)}\right]= \prod_{i=1}^2 \frac{i(i+1)}{ (\mu_i-i+3)(\mu_i-i+4)},
 \qquad \mu_1\ge \mu_2\ge 0.
\end{equation}

We can now describe what is happening with expectations \eqref{eq_expectation_Cauchy},
\eqref{eq_expectation_Cauchy_duality}, \eqref{eq_expectation_RM}, when we multiply the matrices.
The computation relies on the following integral identity:
\begin{equation}
\label{eq_characters_functional}
 \int_{U(2)} \frac{s_{(\mu_1,\mu_2)}( V U W U^{-1})}{s_{(\mu_1,\mu_2)}(1,1)}dU= \frac{s_{(\mu_1,\mu_2)}(v_1,v_2)}{s_{(\mu_1,\mu_2)}(1,1)} \cdot
\frac{s_{(\mu_1,\mu_2)}(w_1,w_2)}{s_{(\mu_1,\mu_2)}(1,1)}, \quad \mu_1\ge \mu_2\ge 0.
\end{equation}
In \eqref{eq_characters_functional}, the integration goes over the group $U(2)$ of $2\times 2$
unitary complex matrices, $V$ and $W$ are two fixed complex matrices with eigenvalues $(v_1,v_2)$
and $(w_1,w_2)$, respectively, and by $s_{(\mu_1,\mu_2)}( V U W U^{-1})$ we mean the Schur
polynomial $s_{(\mu_1,\mu_2)}$ evaluated on two eigenvalues of the matrix $V U W U^{-1}$. When $V$
and $W$ are unitary, the relation \eqref{eq_characters_functional} is known as \emph{the functional
equation} for the characters of $U(2)$.
More generally, \eqref{eq_characters_functional} is the identification of \emph{zonal polynomials}
of the symmetric space $GL(2;\mathbb C)/U(2)$ with Schur polynomials, see \cite[Chapter
VII]{Macdonald}, \cite[Section 13.4.3]{For}. For real and quaternion matrices, an analogue of
\eqref{eq_characters_functional}  holds with Schur polynomials replaced by the Jack polynomials.
Using $N=2$ in $U(N)$ also plays no special role in \eqref{eq_characters_functional} and the identity
extends to all $N>0$.

Coming back to $AB$, the matrices $A=T_1 T_1^*$ and $B=T_2 T_2^*$ are independent, and the distribution of each of them
is $U(2)$--invariant, with respect to the action by conjugation. In other words, while the
eigenvalues have a specific distribution \eqref{eq_Jacobi_2}, the eigenvectors are chosen uniformly
at random (in the set of all possible pairs of orthogonal unit vectors in $\mathbb C^2$). Thus, if
we plug $V=A$, $W=B$ into \eqref{eq_characters_functional} and take expectation with respect to $A$
and $B$, we get
$$
 \E  \left[\frac{s_{(\mu_1,\mu_2)}(AB)} {s_{(\mu_1,\mu_2)}(1,1)}\right]=
 \E  \left[\frac{s_{(\mu_1,\mu_2)}(A)} {s_{(\mu_1,\mu_2)}(1,1)}\right]
 \E  \left[\frac{s_{(\mu_1,\mu_2)}(B)} {s_{(\mu_1,\mu_2)}(1,1)}\right],  \quad \mu_1\ge \mu_2\ge 0.
$$
Combining with \eqref{eq_expectation_RM}, this implies
$$
\E \left[\frac{s_{(\mu_1,\mu_2)}(AB)} {s_{(\mu_1,\mu_2)}(1,1)}\right]=
\left(\prod_{i=1}^2 \frac{i(i+1)}{ (\mu_i-i+3)(\mu_i-i+4)}\right)^2,
 \qquad \mu_1\ge \mu_2\ge 0,
$$
which is (again by \eqref{eq_Cauchy_2} and \eqref{eq_Schur_duality})  precisely the $q\to 1$ limit
of
$$
\E_{\tilde P_q} \left[ \frac{s_{(\mu_1,\mu_2)}(q^{\nu_1+1},q^{\nu_2})}{s_{(\mu_1,\mu_2)}(1,q)}
\right], \quad \mu_1\ge \mu_2\ge 0,
$$ for the integral vector $\nu_1\ge\nu_2\ge 0$ distributed according to the Schur measure
\begin{equation}
\label{eq_Schur_for_product}
  \tilde P_q(\nu_1,\nu_2)=(1-q)^2(1-q^2)^4(1-q^3)^2\, s_{(\nu_1,\nu_2)}(1,q)\, s_{(\nu_1,\nu_2)}(q,q^2,q,q^2).
\end{equation}
We conclude that the eigenvalue distribution for $AB$ is the $q\to 1$ limit of
$(q^{\nu_1+1},q^{\nu_2})$  distributed according to the Schur measure \eqref{eq_Schur_for_product}.

\bigskip

\noindent {\bf Generalizations.} Let us discuss how to see the structure of the full Schur processes, rather than just its slices given by the Schur measures. Note that the transition between $P_q$ of \eqref{eq_Schur_2} and $\tilde P_q$ of \eqref{eq_Schur_for_product} can be seen as one step of  Markov chain on two-row Young diagrams $\lambda=(\lambda_1,\lambda_2)$ with transition probabilities $P(\lambda\to \nu)$ found from the decomposition
\begin{equation}
\label{eq_Markov_transition_Schur}
 \frac{s_\lambda(u_1,u_2)}{s_\lambda(1,q)} \cdot \left[  \prod_{i=1}^2 \frac{(1-q^{i})(1-q^{i+1})}{ (1-u_i q)(1-u_i q^2)}\right]=\sum_{\nu} P(\lambda\to\nu) \frac{s_\nu(u_1,u_2)}{s_\nu(1,q)}.
\end{equation}
Summing \eqref{eq_Markov_transition_Schur} using \eqref{eq_expectation_Cauchy}, we get $\sum_{\lambda} P_q(\lambda) P(\lambda\to\nu)=\tilde P_q(\nu)$. In the $q\to 1$ limit, the same structure of the Markov chain can be seen for the projection of the joint law of $A$ and $AB$ onto their eigenvalues; this is again a corollary of \eqref{eq_characters_functional}.

Comparison of \eqref{eq_Markov_transition_Schur} with the \emph{skew Cauchy identity} (see, e.g., \cite[Section I.5, Example 26]{Macdonald})  yields that $P(\lambda\to\nu)$ is given by the fomula
\begin{equation}
\label{eq_transition}
 P(\lambda\to\nu)=(1- q)(1-q^2)^2(1-q^3) \frac{s_\nu(1,q) s_{\nu/\lambda}(q,q^2) }{s_\lambda(1,q)}.
\end{equation}
Combining the definition of $P_q$ with \eqref{eq_transition}, we conclude that the two times Markov chain with initial state $P_q$, transitional probability $P(\lambda\to\nu)$ (and final state $\tilde P_q$) is the Schur process. Sending $q\to 1$, we see that that the joint law of squared singular values of $A$ and $AB$ is the continuous limit of this Schur process.

\medskip

At this moment we can generalize the argument to products of more matrices. Each additional factor gives another time step of the Markov chain. These transition probabilities generalize \eqref{eq_transition} and therefore, the link to Schur processes persists. Let us make a remark about the sizes of the matrices that are being multiplied. The computation leading to \eqref{eq_Jacobi_2} and its connection to \eqref{eq_Schur_2} can be generalized to rectangular corners of random unitary matrices of arbitrary sizes (and we can also deal with real/squaternion $\beta=1,4$ cases). The identity \eqref{eq_characters_functional} has similar extensions. However, when we start iterating \eqref{eq_characters_functional} it is convenient to assume that all the involved matrices have the same square size, as then the present arguments extend word-for-word. This is less general than the setting of Proposition \ref{PropositionGeneralLimit}. It is plausible that the arguments of this section can be adapted to the changing sizes as well,\footnote{When we pass between a rectangular matrix $T$ and its square counterpart $T T^*$ encapsulating the singular values, the following distinction becomes important: the matrices $T T^*$ and $T^* T$ have the same non-trivial eigenvalues, but additional $0$s show up because of different sizes. One needs to translate this into the language of Schur processes and expectations \eqref{eq_expectation_Cauchy}, \eqref{eq_expectation_Cauchy_duality}.} but we will not pursue this direction.

Finally, one can go beyond corners of unitary matrices and consider factors with more complicated unitarily--invariant distributions. We refer to \cite{GSun} for a progress in this direction.

\section{Appendix. Measures given by products of determinants, the Eynard-Mehta theorem, and the second proof of Proposition \ref{THEOREMCorrelationKernel}}\label{SECTIONEynardMehta}
The aim of this Appendix is to give another, more direct proof of Proposition \ref{THEOREMCorrelationKernel} based on an application of the Eynard-Mehta theorem \cite{EynardMehta}.
The starting point of this proof is the fact that the density of the product matrix process
under considerations is given by a product of determinants, see Proposition \ref{PropositionProductTruncatedProcess}.
This enables us to apply the Eynard-Mehta theorem to the product matrix process with truncated unitary matrices.
Although, the proof below is similar to the arguments of \cite{BorodinRains} and \cite{StrahovD}, we decided to reproduce it in the present setting for pedagogical reasons.

\subsection{The Eynard-Mehta theorem}
Let us first recall the formulation of the Eynard-Mehta theorem.
Let $n, p\geq 1$ be two fixed natural numbers, and let $\X_0$, $\X_{p+1}$ be two given sets. Let $\X$ be a complete separable
metric space, and consider a probability measure on $(\X^n)^p$ given by
\begin{equation}\label{ProductDeterminantsMeasure}
\begin{split}
\Prob_{n,p}(\underline{x})d\mu(\underline{x})&=\frac{1}{Z_{n,p}}\det\left(\phi_{0,1}(x_i^0,x_j^1)\right)_{i,j=1}^n
\det\left(\phi_{p,p+1}(x_i^p,x_j^{p+1})\right)_{i,j=1}^n\\
&
\times\prod\limits_{r=1}^{p-1}\det\left(\phi_{r,r+1}(x_i^r,x_j^{r+1})\right)_{i,j=1}^nd\mu(\underline{x}).
\end{split}
\end{equation}
In the formula  above, $Z_{n,p}$ is the normalization constant, the functions $\phi_{r,r+1}: \X\times \X\rightarrow \C$, $r=1,\ldots,p-1$
are given \textit{intermediate one-step  transition functions},  $\phi_{0,1}: \X_0\times \X\rightarrow \C$ is a given \textit{initial one-step transition function},
and $\phi_{p,p+1}: \X\times \X_{p+1}\rightarrow \C$ is a given \textit{final one-step transition function}. Also,
$$
\underline{x}=\left(x^1,\ldots,x^p\right)\in\left(\X^n\right)^p; \;\; x^r=\left(x^r_1,\ldots,x^r_n\right), r=1,\ldots, p,
$$
the vectors
$$
x^0=(x^0_1,\ldots,x^0_n)\in \X_0^n,\;\; x^{p+1}=(x^{p+1}_1,\ldots,x^{p+1}_n)\in \X_{p+1}^n,
$$
are fixed initial and final vectors,
and
$$
d\mu(\underline{x})=\prod\limits_{r=1}^p\prod\limits_{j=1}^nd\mu(x_j^r).
$$
Here $\mu$ is a given Borel measure on $\X$.
Given two transition functions $\phi$ and $\psi$ set
$$
\phi\ast\psi(x,y)=\int_{\X}\phi(x,t)\psi(t,y)d\mu(t).
$$

The theorem below is the Eynard-Mehta theorem.
\begin{thm}\label{TheoremEynardMehta} The probability measure $\Prob_{n,p}(\underline{x})d\mu(\underline{x})$
given by equation (\ref{ProductDeterminantsMeasure})  defines a determinantal point process on
$\left\{1,\ldots,p\right\}\times \X$. The correlation kernel of this determinantal point process, $K_{n,p}(r,x;s,y)$
(where $r,s\in\left\{1,\ldots,p\right\}$, and $x, y\in \X)$, is given by the formula
\begin{equation}\label{CorrelationKernelGeneralFormula}
K_{n,p}(r,x;s,y)=-\phi_{r,s}(x,y)+\sum\limits_{i,j=1}^n\phi_{r,p+1}(x,x_i^{p+1})\left(A^{-1}\right)_{i,j}\phi_{0,s}(x_j^0,y).
\end{equation}
The functions $\phi_{r,s}$, and the matrix $A=(a_{i,j})$ (where $i,j=1,\ldots,n$) are defined in terms of transition functions as follows \begin{equation}
\phi_{r,s}(x,y)=\left\{
                  \begin{array}{ll}
                    \left(\phi_{r,r+1}\ast\ldots\ast\phi_{s-1,s}\right)(x,y), & 0\leq r<s\leq p+1, \\
                    0, & r\geq s,
                  \end{array}
                \right.
\end{equation}
and
\begin{equation}
a_{i,j}=\phi_{0,p+1}(x_i^0,x_j^{p+1}).
\end{equation}
\end{thm}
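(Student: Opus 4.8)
The plan is to give the direct proof of Borodin and Rains \cite{BorodinRains}: compute the normalization and all correlation functions by iterated applications of the Andr\'eief integration identity and then recognize the answer as a determinant of the claimed kernel. To fix the normalization one treats $x^0$ and $x^{p+1}$ as parameters and integrates out the levels $x^1,x^2,\dots,x^p$ one at a time; each such integration couples a level of $n$ variables to the two neighbouring determinants, and Andr\'eief's identity replaces the two adjacent transition functions by their convolution $\ast$ while producing a factor $n!$. After $p$ steps one gets $Z_{n,p}=(n!)^p\det\big(\phi_{0,p+1}(x_i^0,x_j^{p+1})\big)_{i,j=1}^n=(n!)^p\det A$, so that the normalizability of the right-hand side of \eqref{ProductDeterminantsMeasure} forces $\det A\neq0$ and $A^{-1}$ in \eqref{CorrelationKernelGeneralFormula} makes sense.

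For the $m$-point correlation function at observation points $(r_1,y_1),\dots,(r_m,y_m)$, which we may assume ordered so that $r_1\le\dots\le r_m$, one uses that $\Prob_{n,p}$ is symmetric under permuting the $n$ particles within each level: $\rho_m$ equals a combinatorial prefactor times $Z_{n,p}^{-1}$ times the integral, over all the remaining ``free'' variables, of the product of determinants with the observed particles frozen at the $y_a$'s. The workhorse is the partial version of Andr\'eief's identity in which only \emph{some} of the variables of a pair of determinants are integrated and the rest are held fixed: expanding both determinants by the Laplace rule along the frozen rows and columns expresses such a partial integral as a sum over subsets of products of (i) minors built from the frozen variables and (ii) minors of the convolved transition function obtained by integrating the complementary variables. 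Applying this repeatedly --- sweeping rightward from the endpoint $x^0$ and leftward from the endpoint $x^{p+1}$ until one reaches the observation times, and likewise collapsing the chains between consecutive observation times --- reduces every chain of transition functions not pinned by an observation to a single composition $\phi_{0,\cdot}$, $\phi_{\cdot,\cdot}$, or $\phi_{\cdot,p+1}$, and leaves a finite sum describing the ways the frozen points are routed through the chain.

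It then remains to reassemble this sum. The contributions in which a frozen point $(r_a,y_a)$ is linked \emph{directly} to a later frozen point $(r_b,y_b)$ through the chain --- possible only when $r_a<r_b$ --- produce, once the signs from the Laplace expansions are collected, exactly the term $-\phi_{r_a,r_b}(y_a,y_b)\mathbf 1_{r_a<r_b}$, consistent with $\phi_{r,s}\equiv0$ for $r\ge s$. The remaining contributions route forward through the boundary column $x^{p+1}$ and back through $x^0$, producing factors $\phi_{r_a,p+1}(y_a,x_i^{p+1})$ and $\phi_{0,r_b}(x_j^0,y_b)$; here the leftover free integrations contribute minors of $A$, and Jacobi's cofactor identity $\det(A)\,(A^{-1})_{ij}=\pm(\text{complementary minor of }A)$ together with the $(n!)^p\det A$ in the denominator turns these minors into exactly $(A^{-1})_{ij}$. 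Summing the two families yields $\rho_m=\det\big[K_{n,p}(r_a,y_a;r_b,y_b)\big]_{a,b=1}^m$ with $K_{n,p}$ as in \eqref{CorrelationKernelGeneralFormula}. The main obstacle is the sign and subset bookkeeping through the iterated Laplace expansions, in particular checking that the ``direct'' and ``through-the-boundary'' contributions assemble into precisely the two stated terms of the kernel and nothing else; a secondary, routine point is the analytic justification (absolute convergence of the integrals and the legitimacy of exchanging sums with integrals), which in our application is automatic since all integrations are over compact sets with bounded integrands.
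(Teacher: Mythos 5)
First, a point of comparison: the paper itself offers no proof of this statement. Theorem \ref{TheoremEynardMehta} is the classical Eynard--Mehta theorem, recalled from the literature (\cite{EynardMehta}, in the formulation of \cite{BorodinRains}) and then applied in the Appendix, so there is no in-paper argument to measure yours against. Within your proposal, the normalization computation is correct and standard: iterating Andr\'eief's identity over the levels $x^1,\dots,x^p$ gives $Z_{n,p}=(n!)^p\det A$, whence $\det A\neq 0$ and $A^{-1}$ is well defined.

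The genuine gap is that the decisive step --- ``it then remains to reassemble this sum'' --- \emph{is} the theorem, and you assert it rather than prove it. After the iterated partial Laplace/Andr\'eief expansions one is left with a large signed sum indexed by the ways the $m$ observed points are routed through the chain and through the boundary sets $x^0$ and $x^{p+1}$; on the other side, $\det\bigl[K_{n,p}(r_a,y_a;r_b,y_b)\bigr]_{a,b=1}^m$ expands over permutations into up to $2^m\,m!$ terms, since each entry is itself a difference of two expressions. Matching these two sums, with signs coming simultaneously from the permutations, from the minus in front of $\phi_{r,s}$, from the Laplace expansions, and from Jacobi's cofactor identity (and using the convention $\phi_{r,s}\equiv 0$ for $r\ge s$ and handling several observed points on one level), is precisely the content of the result; flagging it as ``the main obstacle'' does not discharge it. I would also correct the attribution: the proof in \cite{BorodinRains} is not this direct expansion but proceeds by realizing the measure as a conditional $L$-ensemble and computing a block of the inverse of a block-tridiagonal matrix, which converts all of the bookkeeping into linear algebra. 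If you do want to complete the direct route, the standard device is to first biorthogonalize, i.e.\ to replace the families $\phi_{0,1}(x_i^0,\cdot)$ and $\phi_{p,p+1}(\cdot,x_j^{p+1})$ by linear combinations for which $A$ becomes the identity matrix; then $A^{-1}$ drops out of \eqref{CorrelationKernelGeneralFormula} and the reassembly becomes tractable.
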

\begin{remark}
 For the process defined by probability measure (\ref{ProductDeterminantsMeasure}), and  described by Theorem \ref{TheoremEynardMehta}, the correlation functions
can be written as determinants of block matrices, namely
\begin{equation}
\begin{split}
&\varrho_{k_1,\ldots,k_p}\left(x_1^1,\ldots,x_{k_1}^1;\ldots;x_1^p,\ldots,x_{k_p}^p\right)\\
&=\det\left[\begin{array}{ccc}
             \left(K_{n,p}(1,x_i^1;1,x_j^1)\right)_{i=1,\ldots,k_1}^{j=1,\ldots,k_1}
              & \ldots & \left(K_{n,p}(1,x_i^1;p,x_j^p)\right)_{i=1,\ldots,k_1}^{j=1,\ldots,k_p} \\
              \vdots &  &  \\
 \left(K_{n,p}(p,x_i^p;1,x_j^1)\right)_{i=1,\ldots,k_p}^{j=1,\ldots,k_1}
  & \ldots & \left(K_{n,p}(p,x_i^p;p,x_j^p)\right)_{i=1,\ldots,k_p}^{j=1,\ldots,k_p}
            \end{array}
\right],
\end{split}
\nonumber
\end{equation}
where $1\leq k_1,\ldots,k_p\leq n$, and for $1\leq l,r\leq p$
$$
\left(K_{n,p}(l,x_i^l;r,x_j^r)\right)_{i=1,\ldots,k_l}^{j=1,\ldots,k_r}
=\left(\begin{array}{ccc}
         K_{n,p}(l,x_1^l;r,x_1^r) & \ldots & K_{n,p}(l,x_1^l;r,x_{k_r}^r)\\
         \vdots &  &  \\
         K_{n,p}(l,x_{k_l}^l;r,x_1^r) & \ldots & K_{n,p}(l,x_{k_l}^l;r,x_{k_r}^r)
       \end{array}
\right).
$$
\end{remark}

In what follows  the functions
$$
\phi_{0,s}(i,y),\;\; 2\leq s\leq p,
$$
will be called \textit{initial transition functions}, and the functions
$$
\phi_{r,p+1}(x,j),\;\; 1\leq r\leq p-1,
$$
will be called  \textit{final transition functions}. In addition, the functions of the form
$$
\phi_{r,s}(x,y),\;\; 1\leq r\leq p-2,\;\; r+2\leq s\leq p,
$$
will be called \textit{intermediate transition functions.} Finally, the function
$$
\phi_{0,p+1}(i,j)
$$
will be called \textit{the total transition function}.

In order to prove Proposition \ref{THEOREMCorrelationKernel} we rewrite the density of the product matrix process
with truncated unitary matrices obtained in Proposition \ref{PropositionProductTruncatedProcess} as in the statement
of the Eynard-Mehta theorem, and obtain explicit expressions for the transition functions.
\subsection{Explicit formulae for the transition functions}\label{SectionFormulaeTransitionFunctions}
In our situation $\X_0=\left\{1,\ldots,n\right\}$, $\X_{p+1}=\left\{1,\ldots,n\right\}$, $\X=\R_{>0}$.
The initial one-step transition function is defined by
$$
\phi_{0,1}:\;\left\{1,\ldots,n\right\}\times\R_{>0}\longrightarrow\R;\; \phi_{0,1}(i,x)=w_i^{(l)}(x).
$$
The final one-step transition function is defined by
$$
\phi_{p,p+1}:\;\R_{>0}\times\left\{1,\ldots,n\right\}\longrightarrow\R;\; \phi_{p,p+1}(x,k)=x^{k-1}.
$$
In addition, the intermediate transition functions can be written as
$$
\phi_{r,r+1}:\;\R_{>0}\times\R_{>0}\longrightarrow\R,\;\; r=1,\ldots,p-1,
$$
$$
\phi_{r,r+1}(x,y)=y^{\nu_{r+l}}\left(x-y\right)_+^{m_{r+l}-n-\nu_{r+l}-1}x^{-m_{r+l}+n},
$$
where $(x-y)_+=\max\left(0,x-y\right)$.

For the initial transition functions $\phi_{0,s}, s=2,\ldots,p,$ we obtain the following recurrence
relation
\begin{equation}
\phi_{0,s}\left(i,x\right)=\int\limits_0^1\tau^{\nu_{s-1+l}}(1-\tau)^{m_{s-1+l}-n-\nu_{s-1+l}-1}\phi_{0,s-1}\left(i,\frac{x}{\tau}\right)\frac{d\tau}{\tau}.
\end{equation}
This is the recurrence relation for  $\phi_{0,s}(i,x)=w_i^{(s-1+l)}(x)$ (where $1\leq s\leq p$), see
Kieburg, Kuijlaars, and Stivigny, equation (2.23).

The total transition function $\phi_{0,p+1}(i,j)$ can be written as
\begin{equation}
\begin{split}
&\phi_{0,p+1}\left(i,j\right)=\int\limits_0^{\infty}\phi_{0,p}\left(i,p\right)\phi_{p,p+1}(t,j)dt=\int\limits_0^{\infty}t^{j-1}w_i^{(p-1+l)}(t)dt\\
&=c_{p-1+l}
\int\limits_0^{\infty}t^{j-1}
G_{p+l-1,p+l-1}^{p+l-1,0}\left(\begin{array}{cccc}
                           m_{p+l-1}-n, & \ldots, & m_2-n, & m_1-2n+i \\
                           \nu_{p+l-1}, & \ldots, & \nu_2, & \nu_1+i-1
                         \end{array}
\biggl\vert t\right)dt,
\end{split}
\end{equation}
where $c_{p-1+l}$ is defined by equation (\ref{cl}).
The Mellin transform of a Meijer $G$-function is
\begin{equation}
\int\limits_0^{\infty}dx x^{s-1}G_{p,q}^{m,n}\left(\begin{array}{cccc}
                           a_1, & \ldots, & a_p \\
                           b_1, & \ldots, & b_q
                         \end{array}
\biggl\vert xy\right)=\frac{1}{y^s}\frac{\prod\limits_{i=1}^m\Gamma\left(b_i+s\right)
\prod\limits_{j=1}^n\Gamma\left(1-a_j-s\right)}{\prod\limits_{k=m+1}^q\Gamma\left(1-b_k-s\right)
\prod\limits_{l=n+1}^p\Gamma\left(a_l+s\right)}.
\end{equation}
This gives an expression of the \textit{total transition function} $\phi_{0,p+1}(i,j)$
in terms of Gamma functions
\begin{equation}\label{TotalTransitionFunction}
\begin{split}
\phi_{0,p+1}\left(i,j\right)&=c_{p-1+l}
\frac{\prod_{k=2}^{p+l-1}\Gamma\left(\nu_k+j\right)\Gamma\left(\nu_1+i+j-1\right)}{\prod_{k=2}^{p+l-1}\Gamma\left(m_k-n+j\right)\Gamma\left(m_1-2n+i+j\right)}.
\end{split}
\end{equation}
Similar calculations give us the formula for the \textit{final transition functions}.
In particular, we can write
\begin{equation}
\begin{split}
&\phi_{p-1,p+1}(x,j)=\int\limits_0^{\infty}\phi_{p-1,p}(x,y)\phi_{p,p+1}(y,j)dy\\
&=\int\limits_0^{\infty}y^{\nu_{p-1+l}}\left(x-y\right)_+^{m_{p-1+l}-n-\nu_{p-1+l}-1}x^{-m_{p-1+l}+n}y^{j-1}dy\\
&=x^{-m_{p-1+l}+n}\int\limits_0^xy^{\nu_{p-1+l}}\left(x-y\right)^{m_{p-1+l}-n-\nu_{p-1+l}-1}y^{j-1}dy.
\end{split}
\end{equation}
The change of the integration variable $y=\tau x$ gives
\begin{equation}
\begin{split}
&\phi_{p-1,p+1}(x,j)=x^{j-1}\int\limits_0^1\tau^{\nu_{p-1+l}+j-1}\left(1-\tau\right)^{m_{p-1+l}-n-\nu_{p-1+l}-1}d\tau\\
&=x^{j-1}\frac{\Gamma\left(\nu_{p-1+l}+j\right)\Gamma\left(m_{p-1+l}-n-\nu_{p-1+l}\right)}{\Gamma\left(m_{p-1+l}-n+j\right)}.
\end{split}
\end{equation}
Repeating this calculation, we arrive to  a general formula for the final transition function
\begin{equation}\label{FinalTransitionFunctions}
\phi_{r,p+1}\left(x,j\right)=\prod\limits_{a=r}^{p-1}\frac{\Gamma\left(\nu_{a+l}+j\right)
\Gamma\left(m_{a+l}-n-\nu_{a+l}\right)}{\Gamma\left(m_{a+l}-n+j\right)}
x^{j-1},\;\;\; r\in\left\{1,\ldots,p-1\right\}.
\end{equation}

Finally, let us find the \textit{intermediate transition functions} $\phi_{r,s}(x,y)$.
Since
$$
G_{1,1}^{1,0}\left(\begin{array}{c}
                     a \\
                     b
                   \end{array}
\biggl\vert x\right)=\frac{(1-x)^{a-b-1}x^b}{\Gamma(a-b)},\;\;\; 0<x<1,
$$
we can rewrite the transition function $\phi_{r,r+1}(x,y)$ as
\begin{equation}\label{PhiMeijerRepresentation}
\phi_{r,r+1}(x,y)=\frac{1}{x}\Gamma\left(m_{r+l}-n-\nu_{r+l}\right)G_{1,1}^{1,0}\left(\begin{array}{c}
                     m_{r+l}-n \\
                     \nu_{r+l}
                   \end{array}
\biggl\vert \frac{y}{x}\right).
\end{equation}
In addition, we have the following recurrence relation
\begin{equation}\label{Grecurrence}
\begin{split}
&\int\limits_0^1x^{\nu_r}(1-x)^{m_r-n-\nu_r-1}G_{r-1,r-1}^{r-1,0}\left(\begin{array}{cccc}
                                                                        m_{r-1}-n, & \ldots, & m_2-n, & m_1-n \\
                                                                        \nu_{r-1}, & \ldots, & \nu_2, & \nu_{1}+n-1
                                                                      \end{array}
\biggl\vert\frac{y}{x}\right)\frac{dx}{x}\\
&=\Gamma\left(m_r-n-\nu_r\right)G_{r,r}^{r,0}\left(\begin{array}{cccc}
                                                                        m_{r}-n, & \ldots, & m_2-n, & m_1-n \\
                                                                        \nu_{r}, & \ldots, & \nu_2, & \nu_{1}+n-1
                                                                      \end{array}
\biggl\vert y\right),
\end{split}
\end{equation}
see Beals and Szmigielski \cite{Beals}, equation (5). Starting from (\ref{PhiMeijerRepresentation}), and applying
(\ref{Grecurrence}), we obtain
\begin{equation}\label{IntermediateTransitionFunctions}
\phi_{r,s}\left(x,y\right)=\frac{1}{x}\prod\limits_{k=r+1}^s\Gamma\left(m_{k+l-1}-n-\nu_{k+l-1}\right)
G_{s-r,s-r}^{s-r,0}\left(\begin{array}{cccc}
                          m_{r+l}-n, & \ldots, & m_{s+l-1}-n \\
                           \nu_{r+l}, & \ldots, & \nu_{s+l-1}
                         \end{array}
\biggl\vert \frac{y}{x}\right),
\end{equation}
where $1\leq r\leq p-2$, and  $r+2\leq s\leq p$.
\subsection{The inverse of $A$}
Here we find an explicit formula for $\left(A^{-1}\right)_{i,j}$ in the formula for the correlation kernel in
Theorem \ref{THEOREMCorrelationKernel}.  If $A=\left(a_{i,j}\right)_{i,j=1}^n$, then
$a_{i,j}$ is equal to the total transition function $\phi_{0,p+1}\left(i,j\right)$ given by equation (\ref{TotalTransitionFunction}).
We need to find the inverse of the matrix $\check{A}=\left(\check{a}_{i,j}\right)_{i,j=1}^n$ defined by
$$
\check{a}_{i,j}=\frac{\Gamma(\nu_1-1+i+j)}{\Gamma(m_1-2n+i+j)},\;\;\;\; i,j\in\left\{1,\ldots,n\right\}.
$$
\begin{prop}\label{PropositionZhangChenTheorem10} For $N=1,2,\ldots$ and $-\alpha,-\beta\in\C\setminus\mathbb{N}$, the matrix
$$
\left[\frac{(\alpha+1)_{i+j}}{(\alpha+\beta+2)_{i+j}}\right]_{i,j=0}^{N-1}
$$
is invertible  and its inverse matrix $\left(\gamma_{i,j}\right)_{i,j=0}^{N-1}$ is given by
\begin{equation}
\begin{split}
&\gamma_{i,j}=
\frac{(-1)^{i+j}\left(\alpha+\beta+1\right)_i\left(\alpha+\beta+1\right)_j}{\left(\alpha+1\right)_i\left(\alpha+1\right)_j\left(\alpha+\beta+1\right)}\\
&\times\sum\limits_{p=0}^{N-1}\frac{(2p+\alpha+\beta+1)(\alpha+1)_pp!}{(\alpha+\beta+1)_p(\beta+1)_p(p-i)!i!(p-j)!j!}
\left(\alpha+\beta+i+1\right)_p\left(\alpha+\beta+j+1\right)_p.
\end{split}
\end{equation}
\end{prop}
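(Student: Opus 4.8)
The plan is to recognize the matrix as a scalar multiple of a truncated Hankel moment matrix for the Jacobi weight on $[0,1]$ and to invert it via orthogonal polynomials. First I would note that, for $\re\alpha>-1$ and $\re\beta>-1$, the Beta integral gives $\frac{(\alpha+1)_{i+j}}{(\alpha+\beta+2)_{i+j}}=\frac{1}{B(\alpha+1,\beta+1)}\int_0^1 x^{i+j}\,x^{\alpha}(1-x)^{\beta}\,dx$, so the matrix in the statement equals $\frac{1}{B(\alpha+1,\beta+1)}\,M$ where $M=\bigl(\int_0^1 x^{i+j}w(x)\,dx\bigr)_{i,j=0}^{N-1}$ is the moment matrix of $d\mu(x)=w(x)\,dx$, $w(x)=x^{\alpha}(1-x)^{\beta}$. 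Since both sides of the claimed formula are rational functions of $(\alpha,\beta)$, it suffices to prove it on this parameter range and then extend by analytic continuation; the hypothesis $-\alpha,-\beta\notin\mathbb N$ guarantees that every Pochhammer symbol appearing is finite and nonzero, equivalently that all principal Hankel minors of $M$ are nonzero.

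Next I would use the standard linear-algebra description of the inverse of a moment matrix: if $p_0,\dots,p_{N-1}$ are the orthonormal polynomials for $\mu$, $\deg p_k=k$, written $p_k(x)=\sum_{l=0}^{k}\kappa_{k,l}x^{l}$, and $K=(\kappa_{k,l})_{k,l=0}^{N-1}$ (lower triangular), then orthonormality reads $KMK^{T}=I$, whence $M^{-1}=K^{T}K$, i.e.\ $(M^{-1})_{i,j}=\sum_{k=0}^{N-1}\kappa_{k,i}\kappa_{k,j}$. For the measure at hand the $p_k$ are the Jacobi polynomials on $[0,1]$: the unnormalized shifted Jacobi polynomial is $\widetilde P_n(x)={}_2F_1(-n,\,n+\alpha+\beta+1;\,\alpha+1;\,x)=\sum_{k\ge 0}\frac{(-n)_k (n+\alpha+\beta+1)_k}{(\alpha+1)_k\,k!}x^{k}$, with the classical norm $\int_0^1 \widetilde P_n(x)^2 w(x)\,dx=h_n:=\frac{n!\,\Gamma(\alpha+1)^2\,\Gamma(n+\beta+1)}{(2n+\alpha+\beta+1)\,\Gamma(n+\alpha+1)\,\Gamma(n+\alpha+\beta+1)}$ (obtained from the usual Jacobi norm on $[-1,1]$ via $t=1-2x$). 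Hence $\kappa_{n,k}=h_n^{-1/2}\frac{(-n)_k (n+\alpha+\beta+1)_k}{(\alpha+1)_k\,k!}$.

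Then I would substitute and simplify $\gamma_{i,j}=B(\alpha+1,\beta+1)\sum_{n}\kappa_{n,i}\kappa_{n,j}$. Using $(-n)_i=(-1)^i n!/(n-i)!$ (so $1/(n-i)!=0$ automatically kills the terms with $n<i$, matching the sum starting at $p=0$ in the statement) and the splitting $(n+\alpha+\beta+1)_i=(\alpha+\beta+1)_i\,(\alpha+\beta+i+1)_n/(\alpha+\beta+1)_n$, this sum becomes precisely the claimed double product $\tfrac{(-1)^{i+j}(\alpha+\beta+1)_i(\alpha+\beta+1)_j}{(\alpha+1)_i(\alpha+1)_j(\alpha+\beta+1)}$ times the stated sum over $p=n$, provided one checks the single identity $\frac{B(\alpha+1,\beta+1)\,(n!)^2}{h_n\,(\alpha+\beta+1)_n^2}=\frac{(2n+\alpha+\beta+1)\,(\alpha+1)_n\,n!}{(\alpha+\beta+1)_n(\beta+1)_n(\alpha+\beta+1)}$; writing each Pochhammer as a ratio of Gamma functions and using $(\alpha+\beta+1)\Gamma(\alpha+\beta+1)=\Gamma(\alpha+\beta+2)$, this reduces to the displayed value of $h_n$. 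Relabelling $n$ as $p$ completes the proof.

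The main obstacle is not conceptual but the careful bookkeeping of the Gamma- and Pochhammer-function normalizations — pinning down $h_n$ and verifying the scalar identity above — together with the (routine) justification that the orthogonal-polynomial argument, a priori valid only for real $\alpha,\beta>-1$ where $\mu$ is positive, transfers to all admissible complex parameters by analytic continuation (or, equivalently, by working with formal orthogonal polynomials over $\C(\alpha,\beta)$). As an alternative one may avoid orthogonal polynomials entirely and verify directly that $\sum_{j}\frac{(\alpha+1)_{i+j}}{(\alpha+\beta+2)_{i+j}}\,\gamma_{j,k}=\delta_{i,k}$: the inner sum over $j$ collapses, term by term in $p$, to a terminating balanced ${}_3F_2(1)$ that is evaluated by the Pfaff--Saalsch\"utz identity; this route replaces the analytic input by a hypergeometric summation but is of comparable length.
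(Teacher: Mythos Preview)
Your argument is correct. The paper does not prove this proposition itself but simply cites Theorem~10 of Zhang--Chen \cite{ZhangChen}, whose very title (``Matrix inversion using orthogonal polynomials'') signals exactly the method you carry out: recognize the matrix as the Hankel moment matrix of the shifted Jacobi weight $x^{\alpha}(1-x)^{\beta}$ on $[0,1]$, invert it via $M^{-1}=K^{T}K$ with $K$ the coefficient matrix of the orthonormal Jacobi polynomials, and simplify the resulting sum --- so your approach coincides with that of the cited reference.
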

\begin{proof}
See Theorem 10 in  Zhang and Chen \cite{ZhangChen}.
\end{proof}
We apply Proposition \ref{PropositionZhangChenTheorem10}, and find
\begin{equation}
A^{-1}=\left(b_{i,j}\right)_{i,j=1}^n,
\end{equation}
where
\begin{equation}
\begin{split}
&b_{i,j}=\left(c_{p-1+l}\right)^{-1}\frac{\prod_{k=2}^{l+p-1}\Gamma\left(m_k-n+i\right)}{\prod_{k=2}^{l+p-1}\Gamma\left(\nu_k+i\right)}
\frac{\Gamma\left(m_1-2n+2\right)}{\Gamma\left(\nu_1+1\right)}\\
&\times(-1)^{i+j}\frac{\left(m_1-2n+1\right)_{i-1}\left(m_1-2n+1\right)_{j-1}}{\left(\nu_1+1\right)_{i-1}\left(\nu_1+1\right)_{j-1}\left(m_1-2n+1\right)}\\
&\times\sum\limits_{k=0}^{n-1}\frac{\left(2k+m_1-2n+1\right)\left(\nu_1+1\right)_k}{\left(m_1-2n+1\right)_k\left(m_1-2n-\nu_1+1\right)_k}
\frac{k!\left(m_1-2n+i\right)_k\left(m_1-2n+j\right)_k}{(k-i+1)!(i-1)!(k-j+1)!(j-1)!}.
\end{split}
\end{equation}
Now we can write the second term in the formula for the correlation kernel (equation (\ref{CorrelationKernelGeneralFormula})) as
\begin{equation}\label{KTILDA}
\begin{split}
&\widetilde{K}_{n,p}(r,x;s,y)=\sum\limits_{i,j=1}^n\phi_{r,p+1}(x,i)\left(A^{-1}\right)_{i,j}\phi_{0,s}(j,y)\\
&=\frac{\Gamma\left(m_1-2n+1\right)}{\Gamma\left(m_1-2n-\nu_1+1\right)\Gamma\left(\nu_1+1\right)}\frac{1}{\prod_{k=2}^{l+p-1}\Gamma\left(m_k-n-\nu_k\right)}\\
&\times\sum\limits_{k=0}^{n-1}\frac{k!\left(\nu_1+1\right)_k\left(2k+m_1-2n+1\right)}{\left(m_1-2n+1\right)_k\left(m_1-2n-\nu_1+1\right)_k}
P_{r,k}(x)Q_{s,k}(y),
\end{split}
\end{equation}
where
\begin{equation}
P_{r,k}(x)=\sum\limits_{i=0}^k\frac{(-1)^{k-i}\left(m_1-2n+i+1\right)_k\left(m_1-2n+1\right)_i\prod_{a=2}^{p+l-1}\Gamma\left(m_a-n+i+1\right)}{
(k-i)!i!\left(\nu_1+1\right)_i\prod_{a=2}^{p+l-1}\Gamma\left(\nu_a+i+1\right)}\phi_{r,p+1}(x,i+1),
\end{equation}
and
\begin{equation}\label{QasSum}
Q_{s,k}(y)=\sum\limits_{j=0}^k\frac{(-1)^k\left(m_1-2n+j+1\right)_k\left(m_1-2n+1\right)_j}{(k-j)!j!\left(\nu_1+1\right)_j}
\phi_{0,s}\left(j+1,y\right).
\end{equation}
\subsection{The contour integral representation for $P_{r,k}(x)$}
Using explicit formula for the final transition function $\phi_{r,p+1}(x,i)$ (see equation (\ref{FinalTransitionFunctions})) we rewrite
$P_{r,k}(x)$ as
\begin{equation}
\begin{split}
P_{r,k}(x)&=\frac{\Gamma\left(\nu_1+1\right)\prod_{a=r+l}^{p+l-1}\Gamma\left(m_a-n-\nu_a\right)}{\Gamma\left(m_1-2n+1\right)}\\
&\times\sum\limits_{i=0}^k\frac{(-1)^{k-i}}{(k-i)!i!}\frac{\Gamma\left(m_1-2n+i+k+1\right)
\prod_{a=2}^{r+l-1}\Gamma\left(m_a-n+i+1\right)}{\prod_{a=1}^{r+l-1}\Gamma\left(\nu_a+i+1\right)}x^i.
\end{split}
\end{equation}
The Residue Theorem gives the following contour integral representation for $P_{r,k}(x)$:
\begin{equation}\label{Prk}
\begin{split}
P_{r,k}(x)&=\frac{\Gamma\left(\nu_1+1\right)\prod_{a=r+l}^{p+l-1}\Gamma\left(m_{a}-n-\nu_{a}\right)}{\Gamma\left(m_1-2n+1\right)}\\
&\times\frac{1}{2\pi i}\oint\limits_{\Sigma_k}\frac{\Gamma(t-k)\Gamma\left(m_1-2n+t+k+1\right)
\prod_{a=2}^{r+l-1}\Gamma\left(m_a-n+t+1\right)}{\prod_{a=0}^{r+l-1}\Gamma\left(\nu_a+t+1\right)}x^tdt,
\end{split}
\end{equation}
where $\Sigma_k$ is a closed contour encircling the interval $[0,k]$ once in the positive direction, and where $\nu_0=0$.
\subsection{The contour integral representation for $Q_{s,k}(y)$}
Equation (\ref{QasSum}) together with the formula for the initial transition functions obtained in Section \ref{SectionFormulaeTransitionFunctions}
give
\begin{equation}\label{QasSum1}
\begin{split}
& Q_{s,k}(y)=\frac{\Gamma\left(m_1-2n-\nu_1+1\right)\Gamma\left(\nu_1+1\right)\prod_{a=2}^{s+l-1}\Gamma\left(m_a-n-\nu_a\right)}{\Gamma\left(m_1-2n+1\right)}
\\
&\times\sum\limits_{j=0}^k\frac{(-1)^{k-j}}{(k-j)!j!}
\frac{\Gamma\left(m_1-2n+j+k+1\right)}{\Gamma\left(\nu_1+1+j\right)}\\
&\times G_{s+l-1,s+l-1}^{s+l-1,0}\left(\begin{array}{cccc}
                     m_{s+l-1}-n, & \ldots, & m_2-n, & m_1-2n+j+1 \\
                     \nu_{s+l-1}, & \ldots, & \nu_2, & \nu_1+j
                   \end{array}
\biggl\vert y\right).
\end{split}
\end{equation}
The contour integral representation for the Meijer $G$-function inside the sum above is
\begin{equation}
\begin{split}
&G_{s+l-1,s+l-1}^{s+l-1,0}\left(\begin{array}{cccc}
                     m_{s+l-1}-n, & \ldots, & m_2-n, & m_1-2n+j+1 \\
                     \nu_{s+l-1}, & \ldots, & \nu_2, & \nu_1+j
                   \end{array}
\biggl\vert y\right)\\
&=\frac{1}{2\pi i}\int\limits_{C}\frac{\Gamma\left(\nu_1+j+u\right)\prod_{a=2}^{s+l-1}
\Gamma\left(\nu_a+u\right)}{\Gamma\left(m_1-2n+j+u+1\right)\prod_{a=2}^{s+l-1}\Gamma\left(m_a-n+u\right)}
y^{-u}du,
\end{split}
\end{equation}
where the contour $C$ is a positively oriented curve in the complex $u$-plane that starts and ends at $-\infty$, and encircles the negative real axis.
In equation (\ref{QasSum1}) the resulting sum is
\begin{equation}
\begin{split}
&\sum\limits_{j=0}^k\frac{(-1)^{k-j}}{(k-j)!j!}
\frac{\Gamma\left(m_1-2n+j+k+1\right)}{\Gamma\left(\nu_1+1+j\right)}
\frac{\Gamma\left(\nu_1+j+u\right)}{\Gamma\left(m_1-2n+j+u+1\right)}\\
&=\frac{(-1)^k}{k!}\frac{\Gamma\left(u+\nu_1\right)}{\Gamma\left(1+\nu_1\right)}
\frac{\Gamma\left(m_1-2n+k+1\right)}{\Gamma\left(m_1-2n+u+1\right)}
\sum\limits_{j=0}^k\frac{\left(-k\right)_j}{j!}\frac{\left(\nu_1+u\right)_j}{\left(\nu_1+1\right)_j}
\frac{\left(m_1-2n+k+1\right)_j}{\left(m_1-2n+k+1\right)_j}\\
&=\frac{(-1)^k}{k!}\frac{\Gamma\left(u+\nu_1\right)}{\Gamma\left(1+\nu_1\right)}
\frac{\Gamma\left(m_1-2n+k+1\right)}{\Gamma\left(m_1-2n+u+1\right)}\\
&\times
{}_3F_2\left(-k,u+\nu_1,m_1-2n+k+1;1+\nu_1,m_1-2n+u+1;1\right).
\end{split}
\nonumber
\end{equation}
The Pfaff-Saalsch$\ddot{\mbox{u}}$ltz Theorem says that
$$
{}_3F_2\left(-k,a,b;c,d;1\right)=\frac{(c-a)_k(c-b)_k}{(c)_k(c-a-b)_k},
$$
if the balanced condition, $c+d=1-k+a+b$, is satisfied, see, for example, Ismail \cite{Ismail}, Section 1.4. In our case
$$
a=u+\nu_1,\;b=l-2n+1+k,\; c=1+\nu_1,\; d=m_1-2n+1+u,
$$
and the balanced condition is satisfied. Thus we have
$$
{}_3F_2\left(-k,u+\nu_1,m_1-2n+1+k;1+\nu_1,m_1-2n+1+u;1\right)=\frac{(1-u)_k(\nu_1-m_1+2n-k)_k}{(1+\nu_1)_k(2n-u-m_1-k)_k}.
$$
Taking into account that
$$
\frac{(1-u)_k}{(2n-m_1-u-k)_k}=\frac{(u-k)_k}{(u-2n+m_1+1)_k},
$$
we obtain the formula
\begin{equation}\label{QspTruncation1}
\begin{split}
&Q_{s,k}(y)=(-1)^k\frac{\Gamma(m_1-2n-\nu_1+1)}{\Gamma(m_1-2n+1)}\frac{\Gamma(m_1-2n+1+k)(\nu_1-m_1+2n-k)_k}{k!(1+\nu_1)_k}\\
&\times\frac{\prod_{a=2}^{s+l-1}\Gamma\left(m_a-n-\nu_a\right)}{2\pi i}\int\limits_{C}\frac{\prod_{a=1}^{s+l-1}\Gamma\left(\nu_a+u\right)}{\prod_{a=2}^{s+l-1}\Gamma\left(m_a-n+u\right)}
\frac{(u-k)_k}{\left(u-2n+m_1+1\right)_k}\frac{y^{-u}du}{\Gamma\left(m_1-2n+1+u\right)}.
\end{split}
\end{equation}
\subsection{Derivation of the correlation kernel}
Equation (\ref{IntermediateTransitionFunctions}) gives the first term in equation (\ref{CorrelationKernelGeneralFormula}) for the correlation kernel.
 To write explicitly the second term in equation (\ref{CorrelationKernelGeneralFormula}) we insert  the formulae for $P_{r,k}$ (equation (\ref{Prk})),   and $Q_{s,k}$  (equation (\ref{QspTruncation1})) into equation (\ref{KTILDA}). After simplifications we see that the second term in equation (\ref{CorrelationKernelGeneralFormula})
 can be written as
\begin{equation}
\begin{split}
&\widetilde{K}_{n,p}\left(r,x;s,y\right)=\frac{\prod_{a=r+l}^{p+l-1}\Gamma\left(m_a-n-\nu_a\right)}{\prod_{a=s+l}^{p+l-1}\Gamma\left(m_a-n-\nu_a\right)}\\
&\times\frac{1}{(2\pi i)^2}
\oint\limits_{\Sigma_k}dt\int_{C}du
\frac{\prod_{a=2}^{r+l-1}\Gamma\left(m_a-n+t+1\right)}{\prod_{a=0}^{r+l-1}\Gamma\left(t+\nu_a+1\right)}
\frac{\prod_{a=0}^{s+l-1}\Gamma\left(\nu_a+u\right)}{\prod_{a=2}^{s+l-1}\Gamma\left(m_a-n+u\right)}
\\
&\times\sum\limits_{k=0}^{n-1}(m_1-2n+2k+1)\frac{\Gamma(t-k)\Gamma(t+m_1-2n+k+1)}{\Gamma(u-k)\Gamma(u+m_1-2n+k+1)}x^ty^{-u}.
\end{split}
\end{equation}
The sum inside the integral is the same  as that in Kuijlaars,  Stivigny \cite{KuijlaarsStivigny}
(see the proof of Proposition 4.4 in Kuijlaars,  Stivigny \cite{KuijlaarsStivigny}), and
the rest of the proof  is the same as that of
Proposition 4.4 and Theorem 4.7 in Kuijlaars,  Stivigny \cite{KuijlaarsStivigny}.
\qed



\begin{thebibliography}{99}
\bibitem{AdlerMoerbekeWang}Adler, M.; van Moerbeke, P.; Wang, D. Random matrix minor process
    related to percolation theory. Random Matrix Theory Appl. 2 (2013), no. 4, 1350008.


\bibitem{Akemann1} Akemann, G.; Burda, Z. Universal microscopic correlation functions for products
    of independent Ginibre matrices. J. Phys. A: Math. Theor. {\bf 45} (2012)  465201.

\bibitem{AkemannKieburgWei} Akemann, G.; Kieburg M.; Wei, L. Singular value correlation functions
    for products of Wishart random matrices. J. Phys. A. {\bf 46} (2013) 275205.

\bibitem{AkemannIpsenKieburg} Akemann, G.; Ipsen, J.; Kieburg M. Products of rectangular random
    matrices: singular values and progressive scattering.  Phys. Rev. E {\bf 88} (2013) 052118.

\bibitem{AkemannStrahovModels} Akemann, G.; Strahov, E. Product matrix processes for coupled
    multi-matrix models and their hard edge scaling limits. arXiv:1711.01873

\bibitem{BaikDeiftSuidan}Baik, J.; Deift, P.;  Suidan, T. Combinatorics and Random Matrix Theory.
    Graduate Studies in Mathematics, 172.

\bibitem{Baryshnkiov} Baryshnikov, Yu. M. GUEs and queues, Probability Theory and
    Related Fields, 119(2):256--274, 2001.

\bibitem{Beals} Beals, R.; Szmigielski, J. Meijer G-functions: a gentle introduction. Notices Amer.
    Math. Soc. 60 (2013), no. 7, 866--872.


\bibitem{B-det} Borodin, A. Determinantal point processes, in Oxford Handbook of Random
    Matrix Theory, Oxford University Press, 2011. arXiv:0911.1153.

\bibitem{Borodin}
 Borodin, A. Schur dynamics of the Schur processes. Adv. Math. 228 (2011), no. 4, 2268вЂ“-2291

\bibitem{BorodinGorin}
 Borodin, A.; Gorin, V. Lectures on integrable probability. Probability and statistical physics in St. Petersburg, 155--214,
 Proc. Sympos. Pure Math., 91, Amer. Math. Soc., Providence, RI, 2016.

\bibitem{BG_GFF} Borodin, A.; Gorin, V. General beta Jacobi corners process and
    the Gaussian Free Field,
 Communications on Pure and Applied Mathematics, 68, no.\ 10, 1774--1844, (2015). arXiv:1305.3627.


\bibitem{BorodinRains}
 Borodin, A.; Rains, E. M.; Eynard-Mehta theorem, Schur process, and their Pfaffian analogs. J. Stat. Phys. 121 (2005), no. 3--4, 291--317.


\bibitem{BJ} Bougerol, P.; Jeulin, T.; Paths in Weyl chambers and random matrices, Probability
    Theory and Related Fields, 124, no.\ 4 (2002), 517--543.

\bibitem{Bruijn}De Bruijn, N. G.
 On some multiple integrals involving determinants, J. Indian Math.
Soc. (N.S.) 19 (1955), 133–-151.

\bibitem{BG_Schur} Bufetov, A.; Gorin, V.; Fourier transform on high--dimensional unitary groups with
    applications to random tilings, arXiv:1712.09925.


\bibitem{Collins} Collins, B.; Product of random projections, Jacobi ensembles and universality
    problems arising from free probability, Probability Theory and Related Fields, 133, no. 3
    (2005), 315--344, arXiv:math/0406560.

\bibitem{Dimitrov_six} Dimitrov, E.; Six-vertex models and the GUE-corners process, International
    Mathematics Research Notices, rny072, arXiv:1610.06893.

\bibitem{EynardMehta} Eynard, B.; Mehta, M. L. Matrices coupled in a chain. I. Eigenvalue
    correlations. J. Phys. A 31 (1998), no. 19, 4449--4456.



\bibitem{For} Forrester, P.J.; Log-Gases and Random Matrices, Princeton University
    Press, 2010.


\bibitem{ForresterRains} Forrester, P.; Rains, E.; Interpretations of some parameter dependent
    generalizations of classical matrix ensembles, Probability Theory and Related Fields, 131, no.\ 1 (2005),
    1--61. arXiv:math-ph/0211042

\bibitem{Gorin_ASM} Gorin,V.; From Alternating Sign Matrices to the Gaussian Unitary Ensemble,
    Communications in Mathematical Physics, 332, no.\ 1 (2014), 437--447, arXiv:1306.6347.

\bibitem{GP} Gorin, V; Panova.G; Asymptotics of symmetric polynomials with
    applications to statistical mechanics and representation theory, Annals of Probability,
    43, no.\ 6, (2015) 3052--3132. arXiv:1301.0634.

\bibitem{GorinMarcus} Gorin,V; Marcus, A.W.; Crystallization of random matrix orbits, International
    Mathematics Research Notices, rny052, arXiv:1706.07393.

\bibitem{GSun} Gorin, V; Sun, Y; In preparation.

\bibitem{GTW} Gravner, J.; Tracy, C.A.; Widom, H.; Limit Theorems for Height Fluctuations in a
    Class
    of Discrete Space and Time Growth Models, 	J. of Statistical Physics 102 (2001), 1085--1132, arXiv:math/0005133

\bibitem{Ismail}
 Ismail, M. E. H. Classical and quantum orthogonal polynomials in one variable. Encyclopedia of Mathematics and its Applications, 98. Cambridge University Press, Cambridge, 2009.

\bibitem{KieburgKuijlaarsStivigny}
 Kieburg, M.; Kuijlaars, A. B. J.; Stivigny, D. Singular value statistics of matrix products with truncated unitary matrices. Int. Math. Res. Not. IMRN 2016, no. 11, 3392--3424.

\bibitem{KK} Kieburg, M.; Kosters, H; Products of Random Matrices from Polynomial Ensembles,
    arXiv:1601.03724.

\bibitem{KuijlaarsStivigny}
 Kuijlaars, A. B. J.; Stivigny, D. Singular values of products of random matrices and polynomial ensembles. Random Matrices Theory Appl. 3 (2014), no. 3, 1450011.

\bibitem{KuijlaarsZhang} Kuijlaars, A.B.J.; Zhang, L. Singular values of products of Ginibre random
    matrices, multiple orthogonal polynomials and hard edge scaling limits. Commun. Math. Phys.
    {\bf 332} (2014) 759--781.

\bibitem{JohanssonNordenstam} Johansson, K.; Nordenstam, E. Eigenvalues of GUE Minors. Electron. J.
    Probab. 11 (2006), 1342--1371.

\bibitem{Luke}
Luke, Y.L.  The special functions and their approximations. Academic Press, New York 1969.

\bibitem{Macdonald}
 Macdonald, I. G. Symmetric functions and Hall polynomials. Second edition. With contributions by A. Zelevinsky. Oxford Mathematical Monographs. Oxford Science Publications. The Clarendon Press, Oxford University Press, New York, 1995.

\bibitem{Normand}
 Normand, J.-M. Calculation of some determinants using the s-shifted factorial. J. Phys. A 37 (2004), no. 22, 5737--5762.

\bibitem{Jiang} Jiang, T. Approximation of Haar distributed matrices and limiting distributions of
    eigenvalues of Jacobi ensembles, Probab. Theory Related Fields 144 (2009) 221--246.


\bibitem{OCY} O'Connell,N; Yor,M; A Representation for Non-Colliding Random Walks, Electron.
    Commun. Probab. Volume 7 (2002), paper no.\ 1, 1--12.

\bibitem{Ok-wedge} Okounkov, A. Infinite wedge and random partitions. Selecta
    Mathematica 7 (2001), 57--81. arXiv:math/9907127.

\bibitem{OkounkovReshetikhin}
 Okounkov, A.; Reshetikhin, N. Correlation function of Schur process with application to local geometry of a random 3-dimensional Young diagram. J. Amer. Math. Soc. 16 (2003), no. 3, 581--603.


\bibitem{OR-birth} Okounkov, A. Yu.; Reshetikhin, N. Yu. The birth of a
 random matrix, Moscow Mathematical Journal, 6, no.\ 3 (2006), 553--566

\bibitem{OkounkovReshetikhin_skew}
 Okounkov, A.; Reshetikhin, N.
Random Skew Plane Partitions and the Pearcey Process, Communications in Mathematical Physics, 269, no.\ 3 (2007), 571--609.


\bibitem{Olsh_Versh} Olshanski, G.; Vershik, A. Ergodic unitarily invariant
    measures on the space of infinite Hermitian matrices. Contemporary mathematical physics,
    137--175, Amer. Math. Soc. Transl. Ser. 2, 175, Amer. Math. Soc., Providence, RI, 1996,
    arXiv:math/9601215

\bibitem{StrahovD}
 Strahov, E. Dynamical correlation functions for products of random matrices. Random Matrices Theory Appl. 4 (2015), no. 4, 1550020.


\bibitem{TW}  Tracy, C.; Widom, H. Level-spacing distributions and the Airy
    kernel, Communications in Mathematical Physics, 159, no.\ 1 (1994), 151--174.

\bibitem{ZhangChen} Zhang, R.; Chen, Li-Chen. Matrix inversion using orthogonal polynomials. Arab
    J. Math. Sci. 17 (2011), no. 1, 11--30.
\end{thebibliography}
\end{document}